\definecolor{beamer@blendedblue}{rgb}{0.2,0.2,0.7}
\newcolumntype{C}[1]{>{\centering\arraybackslash}p{#1}}
\newtheorem{definition}{Definition}
\newtheorem{lemma}[definition]{Lemma}
\newtheorem{theorem}[definition]{Theorem}
\newtheorem{example}[definition]{Example}
\mathchardef\ordinarycolon\mathcode`\:
\def\vcentcolon{\mathrel{\mathop\ordinarycolon}}
\DeclareFontFamily{U}{mathx}{\hyphenchar\font45}
\DeclareFontShape{U}{mathx}{m}{n}{<-> mathx10}{}
\DeclareSymbolFont{mathx}{U}{mathx}{m}{n}
\DeclareMathAccent{\widebar}{0}{mathx}{"73}
\newcommand{\ket}[1]{\vert{#1}\rangle}
\newcommand{\bra}[1]{\langle{#1}\vert}
\newcommand{\ketbra}[1]{\vert{#1}\rangle\!\langle{#1}\vert}
\DeclareMathOperator{\tr}{Tr}  
\newcommand{\1}{\mathbbm{1}}
\newcommand{\ox}{\otimes}
\newcommand{\norm}[2]{\ensuremath{\left\lVert#1\right\rVert_{#2}}}%
\newcommand{\Var}{\mathbb{V}}
\newcommand{\Cl}{{\rm Cl}}
\newcommand{\ba}{\bm{a}}
\newcommand{\bb}{\bm{b}}
\newcommand{\bx}{\bm{x}}
\newsavebox{\@brx}
\newcommand{\llangle}[1][]{\savebox{\@brx}{\(\m@th{#1\langle}\)}%
  \mathopen{\copy\@brx\kern-0.5\wd\@brx\usebox{\@brx}}}
\newcommand{\rrangle}[1][]{\savebox{\@brx}{\(\m@th{#1\rangle}\)}%
  \mathclose{\copy\@brx\kern-0.5\wd\@brx\usebox{\@brx}}}
\newcommand*{\cA}{\mathcal{A}}
\newcommand*{\cB}{\mathcal{B}}
\newcommand*{\cD}{\mathcal{D}}
\newcommand*{\cE}{\mathcal{E}}
\newcommand*{\cF}{\mathcal{F}}
\newcommand*{\cH}{\mathcal{H}}
\newcommand*{\cM}{\mathcal{M}}
\newcommand*{\cO}{\mathcal{O}}
\newcommand*{\cP}{\mathcal{P}}
\newcommand*{\cR}{\mathcal{R}}
\newcommand*{\cS}{\mathcal{S}}
\newcommand*{\cT}{\mathcal{T}}
\newcommand*{\cU}{\mathcal{U}}
\newcommand*{\cZ}{\mathcal{Z}}
\newcommand{\bE}{\mathbb{E}}
\newcommand{\bZ}{\mathbb{Z}}
\newcommand{\bS}{\mathbb{S}}
\newcommand{\bR}{\mathbb{R}}
\newcommand{\bP}{\mathbb{P}}
\newcommand{\bF}{\mathbb{F}}
\newcommand{\bX}{\mathbb{X}}
\newcommand{\bY}{\mathbb{Y}}
\definecolor{wildstrawberry}{rgb}{1.0, 0.26, 0.64}
\definecolor{googleblue}{HTML}{4285F4}
\definecolor{googlered}{HTML}{DB4437}
\definecolor{googleyellow}{HTML}{F4B400}
\definecolor{googlegreen}{HTML}{0F9D58}
\definecolor{klevinblue}{HTML}{002FA7}
\definecolor{tiffanyblue}{HTML}{0ABAB5}
\begin{document}


\newcommand{\thetitle}{{Distributed Quantum Inner Product Estimation with Structured Random Circuits}}
\title{\thetitle}
\author{Congcong Zheng}%
\affiliation{State Key Lab of Millimeter Waves, Southeast University, Nanjing 211189, China}%
\affiliation{Purple Mountain Laboratories, Nanjing 211111, China}%
\affiliation{Frontiers Science Center for Mobile Information Communication and Security, Southeast University, Nanjing 210096, China}%

\author{Kun Wang}
\thanks{Corresponding author: \href{nju.wangkun@gmail.com}{nju.wangkun@gmail.com}}%
\affiliation{College of Computer Science and Technology, National University of Defense Technology, Changsha 410073, China}%

\author{Xutao Yu}%
\affiliation{State Key Lab of Millimeter Waves, Southeast University, Nanjing 211189, China}%
\affiliation{Purple Mountain Laboratories, Nanjing 211111, China}%
\affiliation{Frontiers Science Center for Mobile Information Communication and Security, Southeast University, Nanjing 210096, China}%

\author{Ping Xu}%
\affiliation{College of Computer Science and Technology, National University of Defense Technology, Changsha 410073, China}%

\author{Zaichen Zhang}%
\thanks{Corresponding author: \href{zczhang@seu.edu.cn}{zczhang@seu.edu.cn}}%
\affiliation{National Mobile Communications Research Laboratory, Southeast University, Nanjing 210096, China}%
\affiliation{Purple Mountain Laboratories, Nanjing 211111, China}%
\affiliation{Frontiers Science Center for Mobile Information Communication and Security, Southeast University, Nanjing 210096, China}%

\begin{abstract}
Distributed inner product estimation (DIPE) is a fundamental task in quantum information, 
aiming to estimate the inner product between two unknown quantum states prepared on distributed quantum platforms.
Existing rigorous sample complexity analyses are limited to unitary $4$-designs, which pose significant practical challenges for near-term quantum devices.
This work addresses this challenge by exploring DIPE with structured random circuits. 
We first establish that DIPE with an arbitrary unitary $2$-design ensemble achieves an average sample complexity of $\cO(\sqrt{2^n})$, where $n$ is the number of qubits.
We then analyze ensembles below unitary $2$-designs---specifically, the brickwork and local unitary $2$-design ensembles---demonstrating average sample complexities of $\mathcal{O}(\sqrt{2.18^n})$ and $\mathcal{O}(\sqrt{2.5^n})$, respectively. 
Furthermore, we analyze the state-dependent sample complexity. 
For brickwork ensembles, we develop a tensor network approach to compute the asymptotic state-dependent sample complexity, showing that it converges to $\cO(\sqrt{2.18^n})$ as the circuit depth increases.
Remarkably, we find that DIPE with the global Clifford ensemble requires $\Theta(\sqrt{2^n})$ copies, matching the performance of unitary $4$-designs.
For both local and global Clifford ensembles, we find that the efficiency can be further enhanced by the nonstabilizerness of states. 
Additionally, for approximate unitary $4$-designs, the performance exponentially approaches that of exact unitary $4$-designs as the circuit depth increases.
Our results provide theoretically guaranteed methods for implementing DIPE with experimentally feasible unitary ensembles.
\end{abstract}
\date{\today}
\maketitle

\section{Introduction}
The engineering and physical realization of quantum computers and quantum simulators are being actively pursued across various physical platforms~\cite{popkin2016quest, preskill2018quantum, brown20245}.
To certify their performance, numerous protocols have been developed to compare experimentally generated quantum states or processes against known theoretical targets,  
including direct fidelity estimation~\cite{flammia2011direct, dasilva2011practical, leone2023nonstabilizerness}, 
random benchmarking~\cite{emerson2007symmetrized, lu2015experimental, helsen2022general}, and 
quantum verification~\cite{pallister2018optimala, wang2019optimala, zheng2024efficient, chen2025quantum}.
However, a significant challenge remains: how to directly compare unknown quantum states (or processes) generated on different physical platforms, at different locations and times.
This task, known as \emph{cross-platform verification}, becomes especially relevant as we enter 
the quantum advantage regime where classical simulation of quantum systems becomes computationally intractable.

To address this challenge, Elben \emph{et al.} proposed the first cross-platform protocol for estimating the similarity 
between two unknown quantum states prepared on distant quantum platforms~\cite{elben2020crossplatform}. 
Subsequently, Zhu \emph{et al.} reported the first experimental demonstration of cross-platform verification across different quantum computing platforms~\cite{zhu2022crossplatforma}.
Extensions to quantum processes have been proposed in~\cite{knorzer2023crossplatform, zheng2024crossplatform}. 
Recent efforts have aimed to enhance the efficiency of cross-platform verification through various techniques, including Pauli sampling~\cite{hinsche2025efficient}, Bell sampling~\cite{denzler2025highlyentangled}, deep learning~\cite{qian2024multimodal}, and quantum links~\cite{knorzer2023crossplatform, gong2024sample, arunachalam2024distributed}. 
At the heart of cross-platform verification lies the task of \emph{distributed inner product estimation (DIPE)}. 
A key theoretical advance was made by Anshu \textit{et al.}, 
who proved that DIPE with a unitary $4$-design ensemble requires $\Theta(\sqrt{2^n})$ state copies 
for two $n$-qubit quantum states in the worst case~\cite{anshu2022distributed}.

However, a significant obstacle hinders the practical implementation of Anshu's protocol: 
the deep circuits required for exact unitary $4$-designs far exceed the capabilities of near-term quantum devices, 
primarily due to circuit depth limitations~\cite{schuster2025random}.
For instance, current quantum platforms exhibit typical noise rates of $\alpha = 0.5\%$, 
allowing roughly $1/\alpha \approx 200$ reliable gate operations. 
Circuits of $\mathcal{O}(n)$ depth would restrict DIPE to fewer than $15$ qubits ($n^2 \approx 200$).
This crucial limitation impedes the immediate application of these powerful theoretical results and risks delaying the real-world impact of DIPE.
Recognizing this limitation, exploring DIPE with more experimentally feasible unitary ensembles is both vital and urgent. 
Specifically, the following important questions remain largely open:
(i) What is the sample complexity of DIPE with the widely studied Clifford ensemble?
(ii) Can DIPE be efficiently performed with low-depth circuit ensembles?
Notably, the first question was also raised in~\cite{chen2024nonstabilizerness}, and low-depth circuit ensembles---being easier to implement 
than exact unitary $4$-designs---have attracted considerable attention in recent quantum information research~\cite{bravyi2020quantum, bravyi2018quantum,  huang2024learning, malz2024preparation, landau2024learning, yang2025compression, schuster2025random, bertoni2024shallow, ippoliti2023operator, akhtar2023scalable, hu2023classical, bu2024classical, hu2025demonstration}.

\begin{figure*}[!htbp]
\centering
\includegraphics[width=\linewidth]{./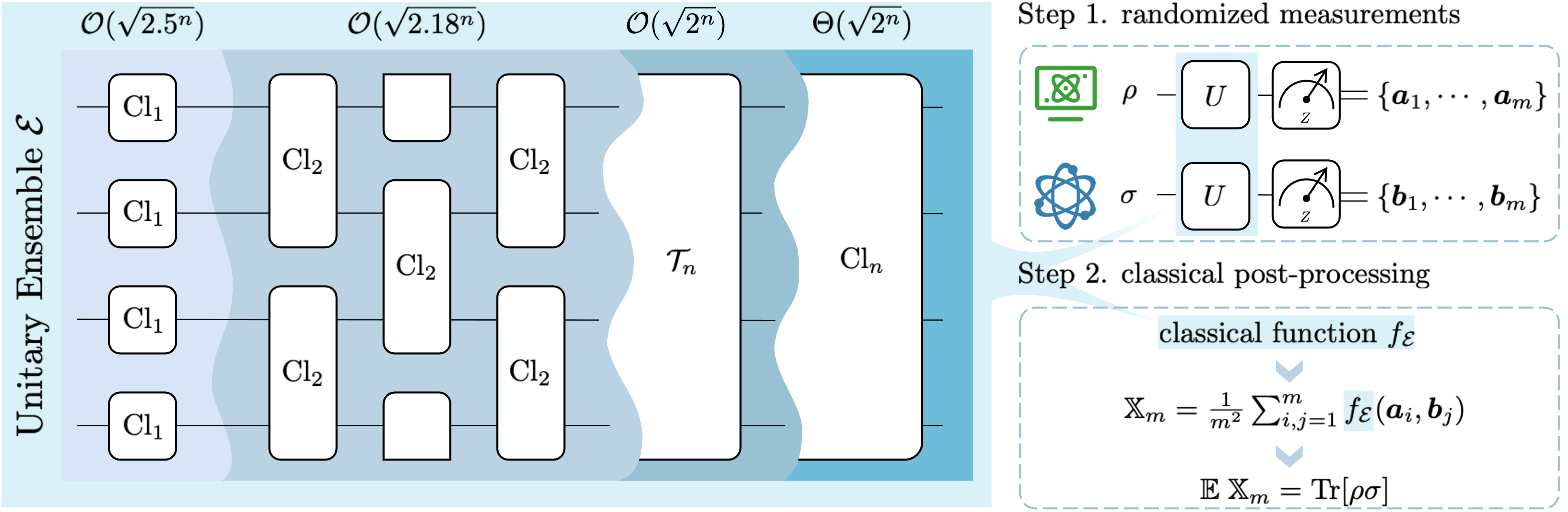}
\caption{\raggedright
The general framework for distributed inner product estimation (DIPE).
Here, $\rho$ and $\sigma$ are two quantum states independently prepared on two distant quantum platforms.
The DIPE begins by applying randomized measurements on each platform using a unitary ensemble $\cE$. 
The resulting measurement outcomes are then processed classically using a function $f_{\cE}$, 
which depends on $\cE$, to obtain an unbiased estimator of the inner product $\tr[\rho\sigma]$. 
In this work, we focus on the following experimentally feasible unitary ensembles: 
(i) the $n$-qubit global Clifford ensemble $\Cl_n$, 
(ii) the $n$-qubit unitary $2$-design ensemble $\cT_n$, 
(iii) the brickwork ensemble $\cB_d$, where $d$ denotes the depth, and 
(iv) the local Clifford ensemble $\Cl_1^{\ox n}$. 
The average sample complexities for each ensemble are shown above, where the worst-case sample complexity for ${\rm Cl}_n$ is $\Theta(\sqrt{2^n})$.
}
\label{fig: framework}
\end{figure*}

In this work, we address both of these questions. 
We present a general framework for DIPE and analyze the sample complexity with various structured random unitary ensembles. 
First, we focus on the \emph{average sample complexity}, demonstrating that DIPE is exponentially hard for most states. 
Concretely, we show that the average sample complexity of DIPE with an arbitrary unitary $2$-design is $\cO(\sqrt{2^n})$.
We then investigate unitary ensembles below unitary $2$-designs.
For local unitary $2$-designs, we show that DIPE requires $\mathcal{O}(\sqrt{2.5^n})$ state copies on average.
We also consider a representative structured random circuit ensemble: the brickwork ensemble, which has been widely employed in classical shadows~\cite{huang2020predicting, bertoni2024shallow, ippoliti2023operator, rozon2024optimala, farias2025robust, arienzo2023closedforma, schuster2025random, hu2025demonstration}. 
We demonstrate that DIPE with the brickwork ensemble requires $\mathcal{O}(\sqrt{2.18^n})$ state copies on average, notably \emph{independent} of circuit depth.

Second, to further explore the performance of DIPE, we analyze the \emph{state-dependent sample complexity} for the brickwork and Clifford ensembles.
For the brickwork ensemble, we develop a tensor network approach to compute the asymptotic state-dependent sample complexity, showing that it converges to $\cO(\sqrt{2.18^n})$ for all state pairs as the depth increases.
Remarkably, we find that DIPE with the global Clifford ensemble requires $\Theta(\sqrt{2^n})$ state copies for all states, matching the performance of unitary $4$-designs while being significantly more practical to implement. 
In contrast, DIPE with the local Clifford ensemble requires $\mathcal{O}(\sqrt{4.5^n})$ copies for stabilizer product states.
Moreover, for both local and global Clifford ensembles, we show that the nonstabilizerness of states further enhances the efficiency of DIPE.
Furthermore, we analyze the performance of DIPE with approximate unitary $4$-designs, showing that it exponentially approaches that of exact unitary $4$-designs as the circuit depth increases.
Finally, we perform numerical simulations on systems of up to $26$ qubits to validate our theoretical results.

The remaining parts of this paper are organized as follows.
In Section~\ref{sec:general framework for DIPE}, we present the general framework for DIPE. 
In Section~\ref{sec:cpe with brickwork circuits}, we analyze the average and state-dependent sample complexities of DIPE with the brickwork ensemble.
In Section~\ref{sec:cpe with clifford}, we analyze the sample complexities of DIPE with the global and local Clifford ensembles.
In Section~\ref{sec:cpe with approximate 4-design}, we discuss DIPE with approximate unitary $4$-designs.
In Section~\ref{sec:numerical simulation}, we present numerical simulations to validate our theoretical results.

\section{General Framework for DIPE}
\label{sec:general framework for DIPE}

First, we present the general framework for DIPE, as illustrated in Fig.~\ref{fig: framework}. 
In this work, we focus on $n$-qubit quantum systems with Hilbert space $\cH_n$. 
Consider two $n$-qubit platforms, each preparing an unknown quantum state, $\rho$ and $\sigma$, respectively. 
DIPE aims to estimate the inner product of these two states, $\tr[\rho\sigma]$. 

\subsection{Protocol}
Let $\cE=(\cU, \mu)$ be a unitary ensemble, where $\cU$ is a subset of the $n$-qubit unitary group and $\mu$ is a probability measure over $\cU$.
DIPE consists of two main steps~\cite{elben2020crossplatform, anshu2022distributed}.

\textbf{Step 1.} Randomized Measurements:
Randomly sample a unitary $U \sim \cE$ according to $\mu$, 
apply $U$ to both states $\rho$ and $\sigma$, and 
perform measurements in the computational basis with $m$ shots for each state. 
This yields measurement outcomes 
$\{\bm{a}_{i}\}_{i=1}^{m}$ and $\{\bm{b}_{i}\}_{i=1}^{m}$, 
where $\bm{a}_{i}, \bm{b}_{i}\in\bZ_2^n$. 

\textbf{Step 2.} Classical Post-processing:
Define a random variable
\begin{align}
    \bX_m := \frac{1}{m^2} \sum 
    f_{\cE}(\bm{a}_i, \bm{b}_j), 
    \label{eq:random variable} 
\end{align}
where $f_{\cE}:\bZ_2^n\times\bZ_2^n\to\bR$ is a \emph{classical function} that depends on the ensemble $\cE$, which will be discussed in detail later.
 
We repeat the above two steps $N$ times to obtain a collection of random variables $\{\bX_m^{(t)}\}_{t=1}^N$ and compute the \emph{mean estimator}:
\begin{align}\label{eq:estimator}
\hat{\omega} := \frac{1}{N}\sum_{t=1}^N \bX_m^{(t)},
\end{align}
which serves as an unbiased estimator of the inner product $\tr[\rho\sigma]$. 
The total number of state copies required on each device is $Nm$, 
which determines the sample complexity of the protocol.
A summary of the protocol is provided in Algorithm~\ref{alg:general framework}. 

\subsection{Classical Function}
We now discuss the choice of the classical function in Eq.~\eqref{eq:random variable}. 
The only requirement for this function is that the estimator $\bX_m$ remains unbiased.
Clearly, the choice of the classical function depends critically on the random unitary ensemble $\cE$. 
To guide this selection, we define the \emph{$k$-moment channel} of $\cE$ as 
\begin{align}
    \cM_\cE^{(k)}(A) := \bE_{U\sim \cE} U^{\dagger\ox k} A U^{\ox k}, 
    \label{eq:k-moment channel}
\end{align}
leading to the following lemma.

\begin{lemma}
\label{lem:requirement for classical function}
To guarantee that $\hat{\omega}$ defined in Eq.~\eqref{eq:estimator} is an unbiased estimator,
the classical function $f_{\cE}$ should satisfy
\begin{align}
    \tr[\cM_\cE^{(2)}(O)(P\ox P')] = 
    \begin{cases}
        2^n, & P = P', \\
        0, &\text{otherwise}.
    \end{cases},
\end{align}
for all $P,P'\in\cP_n$, 
where $\cP_n := \{I,X,Y,Z\}^{\ox n}$ is the $n$-qubit Pauli set, and 
$O := \sum_{\bm{a}, \bm{b}} f_{\cE}(\bm{a},\bm{b}) \ketbra{\bm{a}\bm{b}}$.
\end{lemma}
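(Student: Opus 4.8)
The plan is to compute the expectation $\bE[\hat\omega]$ in closed form, collapse it to a single trace against the two-moment channel $\cM_\cE^{(2)}$, and then observe that demanding unbiasedness for \emph{every} pair $(\rho,\sigma)$ is equivalent to the operator identity $\cM_\cE^{(2)}(O) = \mathrm{SWAP}$. The stated Pauli condition is nothing but this identity read off coefficient-by-coefficient in the Pauli basis.

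First I would condition on the sampled unitary $U$ and invoke the Born rule: the computational-basis outcomes have probabilities $p_\rho(\ba) = \bra{\ba}U\rho U^\dagger\ket{\ba}$ and $p_\sigma(\bb) = \bra{\bb}U\sigma U^\dagger\ket{\bb}$. Since $\rho$ and $\sigma$ live on independent platforms, the families $\{\ba_i\}$ and $\{\bb_j\}$ are conditionally independent given $U$, so each of the $m^2$ terms in Eq.~\eqref{eq:random variable} has the same conditional mean $\sum_{\ba,\bb} p_\rho(\ba)\,p_\sigma(\bb)\,f_\cE(\ba,\bb)$. Using $U^\dagger\ketbra{\ba}U \ox U^\dagger\ketbra{\bb}U = U^{\dagger\ox 2}\ketbra{\ba\bb}U^{\ox 2}$ and absorbing $f_\cE$ into $O = \sum_{\ba,\bb} f_\cE(\ba,\bb)\ketbra{\ba\bb}$, this conditional mean equals $\tr[U^{\dagger\ox 2}OU^{\ox 2}(\rho\ox\sigma)]$. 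Averaging over $U\sim\cE$ and over the $N$ repetitions and using linearity of the trace then yields
\begin{align}
\bE[\hat\omega] = \tr[\cM_\cE^{(2)}(O)(\rho\ox\sigma)].
\end{align}

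It then remains to characterize when the right-hand side equals $\tr[\rho\sigma]$ for all $\rho,\sigma$. Invoking the swap trick $\tr[\rho\sigma] = \tr[\mathrm{SWAP}(\rho\ox\sigma)]$, unbiasedness is equivalent to $\tr[(\cM_\cE^{(2)}(O)-\mathrm{SWAP})(\rho\ox\sigma)] = 0$ for all density matrices $\rho,\sigma$. Because product states linearly span $\linear{\cH_n\ox\cH_n}$, this pins down the operator identity $\cM_\cE^{(2)}(O) = \mathrm{SWAP}$. Expanding both sides in the orthogonal Pauli basis $\{P\ox P'\}$ and using $\tr[\mathrm{SWAP}(P\ox P')] = \tr[PP'] = 2^n\delta_{P,P'}$ reproduces exactly the claimed case distinction, establishing the equivalence.

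I expect the main obstacle to be bookkeeping in the first step rather than any genuine difficulty: one must justify that both the diagonal ($i=j$) and off-diagonal ($i\neq j$) terms of the double sum share the same conditional expectation, which rests on the conditional independence of the two platforms' outcomes given $U$. The only other point needing a short argument is that product states span the full operator space, so that testing against all $\rho\ox\sigma$ is as strong as the operator identity; the passage to the Pauli-basis condition is then immediate from Hilbert--Schmidt orthogonality.
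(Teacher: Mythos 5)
Your proposal is correct and follows essentially the same route as the paper's proof: compute $\bE[\hat\omega]=\tr[\cM_\cE^{(2)}(O)(\rho\ox\sigma)]$ by conditioning on $U$, observe that unbiasedness for all $\rho,\sigma$ forces the operator identity $\cM_\cE^{(2)}(O)=\frac{1}{2^n}\sum_{P\in\cP_n}P\ox P=\bigotimes_i \bS_i$, and read off the stated condition in the Pauli basis. The only difference is that you make explicit two points the paper leaves implicit---the conditional independence justifying that all $m^2$ terms (including $i=j$) share the same conditional mean, and the fact that product states span the operator space---which is a harmless strengthening of the write-up rather than a different argument.
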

The proof is provided in~\cite{elben2019statistical} and Appendix~\ref{app:general framework for DIPE}. 
In~\cite{elben2020crossplatform}, the authors introduced two examples of classical functions: 
\begin{enumerate}
\item For $\cE = \cT_1^{\ox n}$, where $\cT_1$ is a unitary $2$-design on $\cH_1$, the classical function is 
$f_{\cT_1^{\ox n}}(\bm{a}, \bm{b}) = 2^n(-2)^{-\cD(\bm{a}, \bm{b})}$, where $\cD(\bm{a}, \bm{b})$ is the Hamming distance between $\bm{a}$ and $\bm{b}$;
\item For $\cE = \cT_n$, a unitary $2$-design on $\cH_n$, 
the classical function is $f_{\cT_n}(\bm{a}, \bm{b}) = 2^{n}$ if $\bm{a}=\bm{b}$ otherwise $-1$. 
\end{enumerate}

However, for other types of unitary ensembles, explicit constructions of classical functions remain largely unexplored. 
In the following, we focus on a particularly structured class known as \emph{Pauli-invariant ensembles} 
and investigate the properties of their associated classical functions.
It is worth noting that all unitary ensembles explored in this work are Pauli-invariant.
As the name suggests, an ensemble $\cE$ is Pauli-invariant if, for every unitary $U \in \cE$ and all Pauli operators $P \in \cP_n$,
both $PU$ and $UP$ are also in the ensemble with the same probability distribution~\cite{bu2024classical}.
For this kind of ensemble, we have the following lemma. 
\begin{lemma}
\label{lem:classical function requirement of Pauli-invariant ensemble}
If $\cE$ is a Pauli-invariant ensemble, 
then the corresponding classical function $f_\cE$ must satisfy
\begin{align}
f_\cE(\ba,\bb) = f_\cE(\ba\oplus\bb, \bm{0}), 
\end{align} 
where $(\ba\oplus\bb)_i = 0$ if $\ba_i=\bb_i$ and $1$ otherwise. 
\end{lemma}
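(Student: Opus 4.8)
The plan is to combine the two characterizations already in hand: the unbiasedness criterion of Lemma~\ref{lem:requirement for classical function}, phrased through the diagonal operator $O=\sum_{\ba,\bb}f_\cE(\ba,\bb)\ketbra{\ba\bb}$ and the second-moment channel $\cM_\cE^{(2)}$, together with the defining symmetry of a Pauli-invariant ensemble. The first move is to convert ``Pauli-invariance of $\cE$'' into a covariance property of $\cM_\cE^{(2)}$. Using that $QU$ is distributed identically to $U$ for every $Q\in\cP_n$ and substituting $U\mapsto QU$ in the definition of $\cM_\cE^{(2)}$, the factors $Q^{\dagger\ox2}$ and $Q^{\ox2}$ commute past the twirl and land directly on the argument; since Pauli operators are Hermitian and square to the identity, this yields $\cM_\cE^{(2)}(A)=\cM_\cE^{(2)}\big((Q\ox Q)A(Q\ox Q)\big)$ for all $Q\in\cP_n$ and all $A$.

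Next I would specialize this covariance to bit-flip (i.e.\ $X$-type) Paulis $Q=X^{\bm s}:=X^{s_1}\ox\cdots\ox X^{s_n}$ with $\bm s\in\bZ_2^n$, applied to $A=O$. The key computation is the conjugation of the diagonal operator: since $X^{\bm s}\ket{\ba}=\ket{\ba\oplus\bm s}$, one obtains $(X^{\bm s}\ox X^{\bm s})\,O\,(X^{\bm s}\ox X^{\bm s})=\sum_{\ba,\bb}f_\cE(\ba\oplus\bm s,\bb\oplus\bm s)\ketbra{\ba\bb}$ after reindexing the double sum. In words, replacing $f_\cE(\ba,\bb)$ by its translate $f_\cE(\ba\oplus\bm s,\bb\oplus\bm s)$ produces an operator with the same image under $\cM_\cE^{(2)}$, and hence---by Lemma~\ref{lem:requirement for classical function}---an equally valid classical function. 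I would note in passing that conjugation by $Z$-type Paulis acts trivially on the diagonal $O$, so no additional constraints arise from them and translation by $\bm s$ is the full content of the symmetry.

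Finally I would average over the translation orbit: because $\cM_\cE^{(2)}$ is linear, the symmetrized function $\tfrac{1}{2^n}\sum_{\bm s}f_\cE(\ba\oplus\bm s,\bb\oplus\bm s)$ still satisfies the unbiasedness criterion and is manifestly invariant under the simultaneous shift $(\ba,\bb)\mapsto(\ba\oplus\bm s,\bb\oplus\bm s)$; thus the classical function may be taken to obey $f_\cE(\ba,\bb)=f_\cE(\ba\oplus\bm s,\bb\oplus\bm s)$ for every $\bm s$. Setting $\bm s=\bb$ then collapses the second argument to $\bm 0$ and gives $f_\cE(\ba,\bb)=f_\cE(\ba\oplus\bb,\bm 0)$, which is the claim.

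The step I expect to require the most care is the passage from ``every translate of $f_\cE$ is valid'' to ``$f_\cE$ itself may be taken translation-invariant'': the unbiasedness conditions alone do not determine $f_\cE$ uniquely, since they fix $O$ only modulo the kernel of $\cM_\cE^{(2)}$ restricted to diagonal operators. The symmetrization above is precisely what removes this ambiguity, exhibiting the translation-symmetric representative as a legitimate choice. I would emphasize that the reduction is driven solely by the covariance identity and not by any special feature of a particular ensemble, so the conclusion applies uniformly to all the Pauli-invariant ensembles studied in the remainder of the paper.
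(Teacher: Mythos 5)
Your proof is correct and takes essentially the same route as the paper's: both exploit Pauli invariance under $X$-type shifts to show that the unbiasedness constraints of Lemma~\ref{lem:requirement for classical function} depend on $(\ba,\bb)$ only through $\ba\oplus\bb$, and then take the classical function to be translation-invariant. Your explicit orbit-symmetrization makes rigorous the ``without loss of generality'' step that the paper compresses into a one-line ``this implies,'' and you correctly observe that the lemma is really a WLOG normalization (since $f_\cE$ is fixed only modulo the kernel of $\cM_\cE^{(2)}$ on diagonal operators) rather than a literal necessity.
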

See proof in Appendix~\ref{app:general framework for DIPE}. 
Hence, there are only $2^n$ distinct values that $f_\cE$ can take if $\cE$ is a Pauli-invariant ensemble.

\subsection{Sample Complexity}
We now analyze the number of state copies required to estimate $\tr[\rho\sigma]$ up to a fixed additive error $\varepsilon$ and failure probability $\delta$. 
By Chebyshev's inequality, the estimator $\hat{\omega}$ satisfies
\begin{align}
    \Pr\left\{\left\vert\hat{\omega} - \tr[\rho\sigma]\right\vert \geq \varepsilon\right\} 
    \leq \frac{\Var_{\cE}(\bX_m)}{N\varepsilon^2}, 
    \label{eq:sample complexity}
\end{align}
where $\Var_{\cE}(\bX_m)$ is the variance of the random variable $\bX_m$ with the unitary ensemble $\cE$. 
To achieve the desired precision and confidence, it suffices to use $N\geq \Var_{\cE}(\bX_m) / (\delta\varepsilon^2)$ 
random unitaries drawn from the ensemble $\cE$.
Then, we focus on the variance $\Var_\cE(\bX_m)$. 
With the law of total variance, we have the following lemma (see proof in Appendix~\ref{app:general framework for DIPE}). 

\begin{lemma}
\label{lem:variance}
Given two quantum states $\rho, \sigma$ in $\cH_n$ and a unitary ensemble $\cE$, the variance of the random variable $\bX_m$ is 
\begin{align}
    \Var_\cE(\bX_m) 
    &= \sum_{i=1}^4 \Var_{\cE}^{(i)} (\rho,\sigma),
    \label{eq:variance}
\end{align}
where $\Var_{\cE}^{(1)} (\rho,\sigma)=-\tr^2[\rho\sigma]$,
\begin{equation}
\begin{split}
\Var_{\cE}^{(2)}(\rho,\sigma) 
&= \frac{1}{m^2} \tr\left[\cM_\cE^{(2)}(O^2) (\rho\ox\sigma)\right], \\
\Var_{\cE}^{(3)}(\rho,\sigma) 
&= \frac{m-1}{m^2} \bE_{U}  \bE f_\cE(\ba,\bb) \left[f_\cE(\ba',\bb) + f_\cE(\ba,\bb')\right], \\
\Var_{\cE}^{(4)}(\rho,\sigma) 
&= \frac{(m-1)^2}{m^2}\tr\left[\cM_\cE^{(4)}(O^{\ox 2}) (\rho\ox\sigma)^{\ox 2}\right]. \notag 
\end{split}
\end{equation}
\end{lemma}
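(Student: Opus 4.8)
The plan is to compute the variance directly from its definition, $\Var_\cE(\bX_m) = \bE[\bX_m^2] - \bE[\bX_m]^2$, where the outer expectation runs over both the sampled unitary $U\sim\cE$ and the measurement outcomes. Since $\hat\omega$ is unbiased by Lemma~\ref{lem:requirement for classical function}, we have $\bE[\bX_m] = \tr[\rho\sigma]$, so the subtracted square contributes precisely $\Var^{(1)}_\cE(\rho,\sigma) = -\tr^2[\rho\sigma]$. The remaining work is to expand $\bE[\bX_m^2]$, which I would organize by conditioning on $U$ (equivalently, via the law of total variance).

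First I would write $\bX_m^2 = \tfrac{1}{m^4}\sum_{i,j,k,l} f_\cE(\ba_i,\bb_j)\,f_\cE(\ba_k,\bb_l)$. Conditioned on $U$, the outcomes $\{\ba_i\}$ are i.i.d. with $\Pr[\ba] = \bra{\ba}U\rho U^\dagger\ket{\ba}$ and the $\{\bb_j\}$ are i.i.d. with $\Pr[\bb]=\bra{\bb}U\sigma U^\dagger\ket{\bb}$, independently of the $\ba$'s. I would then partition the $m^4$ index tuples by the coincidence pattern of $(i,k)$ and $(j,l)$: (a) $i=k,\,j=l$ with $m^2$ tuples; (b) $i=k,\,j\neq l$ and (c) $i\neq k,\,j=l$ with $m^2(m-1)$ tuples each; (d) $i\neq k,\,j\neq l$ with $m^2(m-1)^2$ tuples. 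These counts sum to $m^4$, and dividing by $m^4$ reproduces exactly the prefactors $1/m^2$, $(m-1)/m^2$, and $(m-1)^2/m^2$ appearing in the claim.

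The key step is to express each conditional expectation as a trace against a moment channel using the diagonal operator $O=\sum_{\ba,\bb}f_\cE(\ba,\bb)\ketbra{\ba\bb}$. A single pair obeys $\bE\,f_\cE(\ba,\bb) = \tr[O\,U^{\ox 2}(\rho\ox\sigma)U^{\dagger\ox 2}]$. In pattern (a) the repeated indices give $\bE\,f_\cE(\ba,\bb)^2 = \tr[O^2\,U^{\ox 2}(\rho\ox\sigma)U^{\dagger\ox 2}]$ because $O$ is diagonal; averaging over $U$ and invoking Eq.~\eqref{eq:k-moment channel} yields $\tr[\cM^{(2)}_\cE(O^2)(\rho\ox\sigma)]$, i.e. $\Var^{(2)}_\cE$. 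Patterns (b) and (c) share one index and keep one fresh, collapsing to $\bE_U\bE\,f_\cE(\ba,\bb)[f_\cE(\ba,\bb')+f_\cE(\ba',\bb)]$, which is $\Var^{(3)}_\cE$. In pattern (d) all four outcomes are conditionally independent, so the expectation factorizes into $\tr[(O\ox O)\,U^{\ox 4}(\rho\ox\sigma\ox\rho\ox\sigma)U^{\dagger\ox 4}]$, and averaging over $U$ gives $\tr[\cM^{(4)}_\cE(O^{\ox 2})(\rho\ox\sigma)^{\ox 2}]$, i.e. $\Var^{(4)}_\cE$.

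The step demanding the most care will be pattern (d): I must keep the four tensor factors in the order $A_1B_1A_2B_2$ so that $O\ox O$, the conjugation $U^{\ox 4}$, and the state $\rho\ox\sigma\ox\rho\ox\sigma$ all act on matching subsystems, ensuring the result is genuinely $\cM^{(4)}_\cE(O^{\ox 2})$ evaluated on $(\rho\ox\sigma)^{\ox 2}$ rather than a permuted rearrangement. Once the subsystem bookkeeping and the tuple counts are fixed, collecting the four contributions together with the $-\tr^2[\rho\sigma]$ term yields Eq.~\eqref{eq:variance}.
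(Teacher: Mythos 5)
Your proposal is correct and follows essentially the same route as the paper: the paper first computes the conditional moments of $\bX_m$ given $U$ by partitioning the $m^4$ index tuples into exactly your four coincidence patterns (its Lemma on the classical estimator), then recombines via the law of total variance, which is algebraically identical to your direct expansion of $\bE[\bX_m^2]-\tr^2[\rho\sigma]$. Your tuple counts, moment-channel identifications (including $O^2$ from diagonality and the subsystem ordering in the $4$-moment term), and use of unbiasedness for the $-\tr^2[\rho\sigma]$ term all match the paper's derivation.
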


As we can see, the variance depends on three main factors: the input states $\rho,\sigma$ and the unitary ensemble $\cE$. 
In particular, each term $\Var_{\cE}^{(k)}$ involves the $k$-moment channel of $\cE$ for $k\geq 2$, which is often difficult to compute analytically.

To date, rigorous state-dependent variance analysis has been established only for 
DIPE with a unitary $4$-design ensemble $\cF_n$. In this case, the \emph{worst-case variance} is given by
\begin{align}
\max_{\rho,\sigma} \Var_{\cF_n}(\bX_m) 
= \cO\left(\frac{2^n}{m^2} + \frac{1}{m} + \frac{1}{2^n}\right).
\end{align}
Hence, each platform requires $Nm = \Theta(\sqrt{2^n})$ state copies in the worst case.
See details in~\cite{anshu2022distributed} and Appendix~\ref{app:general framework for DIPE}. 
This exponential sample complexity highlights the intrinsic difficulty of DIPE, even with powerful unitary $4$-designs.
This naturally raises the question of \emph{whether the exponential hardness we established is overly pessimistic or rarely encountered in practical situations}.
In other words, \emph{how does DIPE perform for most states?}

\subsection{Average Sample Complexity}

To answer this question, it is necessary to analyze the average sample complexity, which captures the typical behavior for most states.
In various tasks, the average sample complexity is much lower than the worst-case complexity, suggesting that the task may not be as hard as the worst-case analysis indicates~\cite{jeon2025query}. 
Due to its importance, the average sample complexity has been widely studied in quantum learning theory, including state learning~\cite{bertoni2024shallow, ippoliti2023operator, akhtar2023scalable, hu2023classical, bu2024classical, hu2025demonstration} and channel learning~\cite{huang2021informationtheoretic, li2025nearly, jeon2025query}.

Specifically, in this work, we consider two common application scenarios of DIPE:
(i) estimating the purity of an unknown state, and
(ii) estimating the inner product between two unknown states.
A key observation from~\cite{anshu2022distributed} is that the sample complexity reaches its maximum when the unknown states are pure. 
Motivated by this, we define two types of average variances as follows. 

\begin{definition}[Average Variances]
\hfill\par 
\noindent\textbf{Case 1:} 
Let $\rho=\sigma = \ketbra{\psi}$, where $\ket{\psi}$ is a Haar random state. 
The average variance 1 is defined as 
\begin{align}
\Var^{a}_{\cE,1} &:= \bE_{\psi} \Var_{\cE}(\bX_m). 
\label{eq:average-case sample complexity 2}
\end{align}

\noindent\textbf{Case 2:} 
Let $\rho = \ketbra{\psi}$ and $\sigma = \ketbra{\phi}$, where $\ket{\psi}$ and $\ket{\phi}$ are two independent Haar random states. 
The average variance 2 is defined as 
\begin{align}
\Var^{a}_{\cE,2} &:= \bE_{\psi, \phi} \Var_{\cE}(\bX_m). 
\label{eq:average-case sample complexity}
\end{align}
\end{definition}

Based on these definitions, we provide the following theorem that relates the average variances of DIPE to the classical function; See proof and the concrete formulas in Appendix~\ref{app:average sample complexity}. 

\begin{theorem}
\label{the:average sample complexity}
Let $\cE$ be a Pauli-invariant ensemble with classical function $f_\cE$, the average variances defined in Eqs.~\eqref{eq:average-case sample complexity 2} and~\eqref{eq:average-case sample complexity} are given by
\begin{align}
\Var^{a}_{\cE,1} &= \cO\left(\frac{\norm{f_{\cE}}{2}^2}{2^n m^2} + \frac{\norm{f_{\cE}}{2}^2}{4^n m} + \frac{\norm{f_{\cE}}{2}^2}{8^n} + \frac{f^2_{\cE}(\bm{0}, \bm{0})}{4^n} - 1\right), \notag \\
\quad\Var^{a}_{\cE,2} &= \cO\left(\frac{\norm{f_{\cE}}{2}^2}{2^n m^2} + \frac{\norm{f_{\cE}}{2}^2}{4^n m} + \frac{\norm{f_{\cE}}{2}^2}{8^n}\right). \notag 
\end{align}
where $\norm{f_{\cE}}{2}^2 := \sum_{\ba}f_\cE^2(\ba,\bm{0})$. 
\end{theorem}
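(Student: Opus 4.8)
The plan is to begin from the four-term decomposition of $\Var_\cE(\bX_m)$ in Lemma~\ref{lem:variance} and take the Haar average of each piece $\Var^{(i)}_\cE(\rho,\sigma)$ separately. Two ingredients do the work. The first is the elementary Haar moment formula $\bE_\psi\ketbra{\psi}^{\ox k}=\Pi^{(k)}_{\mathrm{sym}}/\binom{2^n+k-1}{k}$, where $\Pi^{(k)}_{\mathrm{sym}}=\frac{1}{k!}\sum_{\pi\in S_k}P_\pi$ is the projector onto the symmetric subspace and $P_\pi$ the permutation operator. The second, which is the crux, is that every $P_\pi$ lies in the commutant of $U^{\ox k}$, so that for any operator $X$, $\tr[\cM^{(k)}_\cE(X)P_\pi]=\bE_U\tr[XU^{\ox k}P_\pi U^{\dagger\ox k}]=\tr[XP_\pi]$. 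This makes the otherwise intractable moment channels $\cM^{(2)}_\cE$ and $\cM^{(4)}_\cE$ drop out entirely, which is precisely why the final answer can depend on $\cE$ only through $f_\cE$.

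I would first recast the three nontrivial terms so the moment channels are exposed: $\Var^{(2)}$ already reads $\tr[\cM^{(2)}_\cE(O^2)(\rho\ox\sigma)]/m^2$; $\Var^{(3)}$ can be written as $\tr[\cM^{(3)}_\cE(\widetilde{O})(\rho\ox\rho\ox\sigma)]$ plus its $\rho\leftrightarrow\sigma$ mirror, with $\widetilde{O}$ the diagonal three-register operator built from $f_\cE(\ba,\bb)f_\cE(\ba',\bb)$; and $\Var^{(4)}$ reads $\tr[\cM^{(4)}_\cE(O^{\ox2})(\rho\ox\sigma)^{\ox2}]\,(m-1)^2/m^2$. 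In Case~2 the two states are independent, so the Haar averages are built from swaps acting only within the two copies of a single register: $\bE_{\psi,\phi}[\rho\ox\sigma]=(I\ox I)/4^n$, $\bE_\psi[\rho\ox\rho]\ox\bE_\phi[\sigma]\propto(I+P_\rho)\ox I$, and $\bE[(\rho\ox\sigma)^{\ox2}]\propto(I+P_\rho)(I+P_\sigma)$, where $P_\rho,P_\sigma$ swap the two $\rho$- and the two $\sigma$-copies. Applying the commutant identity, each term becomes a sum of traces $\tr[XP_\pi]$ of a diagonal operator against permutations.

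Because $O=\sum_{\ba,\bb}f_\cE(\ba,\bb)\ketbra{\ba\bb}$ is diagonal in the computational basis, each $\tr[XP_\pi]$ collapses to a restricted bitstring sum in which the indices identified by the cycles of $\pi$ are forced equal. I would then invoke Pauli-invariance (Lemma~\ref{lem:classical function requirement of Pauli-invariant ensemble}), $f_\cE(\ba,\bb)=f_\cE(\ba\oplus\bb,\bm0)$, together with the normalization $\tr[O]=2^n$ coming from the $P=P'=I$ case of Lemma~\ref{lem:requirement for classical function}. These yield the two identities driving the computation: $\sum_{\bb}f_\cE(\ba,\bb)=1$ for every fixed $\ba$, and $\sum_{\ba,\bb}f_\cE^2(\ba,\bb)=2^n\norm{f_\cE}{2}^2$; moreover a coincidence $\ba=\bb$ forced by a cross-register cycle pins $f_\cE(\ba,\bb)=f_\cE(\bm0,\bm0)$. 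Assembling the results, $\Var^{(2)}$ gives the $\norm{f_\cE}{2}^2/(2^nm^2)$ term, $\Var^{(3)}$ the $\norm{f_\cE}{2}^2/(4^nm)$ term (the swap $P_\rho$ upgrades $1$ to $\norm{f_\cE}{2}^2$), and $\Var^{(4)}$ the $\norm{f_\cE}{2}^2/8^n$ term; $\Var^{(1)}=-\tr^2[\rho\sigma]$ equals $-1$ on the coincident pure state of Case~1 and is $\cO(1/4^n)$ in Case~2. I would finish by checking that all leftover pieces ($1/4^n$, $f_\cE(\bm0,\bm0)/4^n$, and subleading $1/8^n$ terms) are dominated, using $\norm{f_\cE}{2}^2\geq f_\cE^2(\bm0,\bm0)$ and $f_\cE(\bm0,\bm0)\geq1$.

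The main obstacle is $\Var^{(4)}$ in Case~1: there all four copies carry the same state, so the Haar average runs over the full symmetric group $S_4$ and each of the $4!=24$ traces $\tr[O^{\ox2}P_\pi]$ must be classified by cycle structure. The delicate bookkeeping is separating the permutations that preserve the $\rho/\sigma$ bipartition (which reproduce the Case~2 answer) from those that straddle it (which force a coincidence $\ba=\bb$ and thereby generate the extra $f_\cE^2(\bm0,\bm0)/4^n$ term appearing in $\Var^a_{\cE,1}$), and then verifying that every permutation outside the handful of leading ones contributes only lower order in $2^{-n}$. Once this classification is done, the remaining arithmetic is the mechanical application of the two Pauli-invariance identities above.
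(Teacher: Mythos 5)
Your proposal is correct and follows essentially the same route as the paper's proof: Haar-averaging each term of the four-term decomposition via $\bE_\psi\ketbra{\psi}^{\ox k}\propto\sum_{\pi\in\cS_k}\bP_\pi$ (Lemma~\ref{lem:haar random states}), letting the moment channels drop out because the $\bP_\pi$ commute with $U^{\ox k}$, and reducing the resulting permutation traces to bitstring sums evaluated with the Pauli-invariance identities $\sum_{\bb}f_\cE(\ba,\bb)=1$, $\sum_{\ba,\bb}f_\cE^2(\ba,\bb)=2^n\norm{f_\cE}{2}^2$, and $f_\cE(\ba,\ba)=f_\cE(\bm{0},\bm{0})$. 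The $\cS_4$ bookkeeping you single out as the delicate step is precisely the paper's longest computation, where the cross-bipartition coincidences $\delta_{\ba,\bb}\delta_{\ba',\bb'}$ produce the extra $f_\cE^2(\bm{0},\bm{0})/4^n$ term in $\Var^{a}_{\cE,1}$, exactly as you predict.
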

Theorem~\ref{the:average sample complexity} implies that once the classical function $f_{\cE}$ is known, the average variances can be computed directly.
Therefore, with the definition of $f_{\cT_n}$ and $f_{\cT_1^{\ox n}}$, 
we establish the following two lemmas. These lemmas characterize the average sample complexities for arbitrary global 
and local unitary $2$-design ensembles, respectively. Their proofs are detailed in Appendix~\ref{app:average sample complexity}.

\begin{lemma}
\label{lem:average variance of 2-design}
Let $\cT_n$ be a $2$-design ensemble, the average variances defined in Eqs.~\eqref{eq:average-case sample complexity 2} and~\eqref{eq:average-case sample complexity} are given by
\begin{align}
\Var^{a}_{\cT_n,1}, \Var^{a}_{\cT_n,2} &= \cO\left(\frac{2^n}{m^2} + \frac{1}{m} + \frac{1}{2^n}\right),
\end{align}
Consequently, the average sample complexity is $\cO(\sqrt{2^n})$.
\end{lemma}

\begin{lemma}
\label{lem:average variance of local 2-design}
Let $\cT_1^{\ox n}$ be a local unitary $2$-design ensemble, the average variances defined in Eqs.~\eqref{eq:average-case sample complexity 2} and~\eqref{eq:average-case sample complexity} are 
\begin{align}
\Var^{a}_{\cT_1^{\ox n},1},\Var^{a}_{\cT_1^{\ox n},2} &= \cO\left(\frac{2.5^n}{m^2} + \frac{1.25^n}{m} + 0.675^n\right), 
\end{align}
Consequently, the average sample complexity is $\cO(\sqrt{2.5^n})$.
\end{lemma}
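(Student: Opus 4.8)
The plan is to treat this lemma as a direct instantiation of Theorem~\ref{the:average sample complexity}, so that the entire task reduces to evaluating two scalars for the local $2$-design ensemble: the weighted norm $\norm{f_{\cT_1^{\ox n}}}{2}^2 = \sum_{\ba} f_{\cT_1^{\ox n}}^2(\ba, \bm{0})$ and the boundary value $f_{\cT_1^{\ox n}}(\bm{0}, \bm{0})$. I would start from the classical function $f_{\cT_1^{\ox n}}(\ba,\bb) = 2^n(-2)^{-\cD(\ba,\bb)}$ recorded in~\cite{elben2020crossplatform}, which is Pauli-invariant and depends on $(\ba,\bb)$ only through $\cD(\ba,\bb)$, consistent with Lemma~\ref{lem:classical function requirement of Pauli-invariant ensemble}. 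This mirrors exactly the strategy used for the global $2$-design in Lemma~\ref{lem:average variance of 2-design}, with only the numerical value of the norm changing.

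Next I would compute the norm. Since $\cD(\ba,\bm{0})$ is the Hamming weight of $\ba$ and is additive across qubits, $f_{\cT_1^{\ox n}}^2(\ba,\bm{0}) = 4^n\, 4^{-\cD(\ba,\bm{0})}$ factorizes over the $n$ coordinates. Summing over $\ba\in\bZ_2^n$ then collapses by the binomial theorem, $\sum_{\ba} 4^{-\cD(\ba,\bm{0})} = \prod_{i=1}^n(1+\tfrac14) = (5/4)^n$, giving $\norm{f_{\cT_1^{\ox n}}}{2}^2 = 4^n(5/4)^n = 5^n$. The boundary value is immediate, $f_{\cT_1^{\ox n}}(\bm{0},\bm{0}) = 2^n$, hence $f_{\cT_1^{\ox n}}^2(\bm{0},\bm{0}) = 4^n$.

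Substituting $\norm{f_{\cT_1^{\ox n}}}{2}^2 = 5^n$ and $f_{\cT_1^{\ox n}}^2(\bm{0},\bm{0}) = 4^n$ into the two formulas of Theorem~\ref{the:average sample complexity} gives, for Case~1, $\cO(5^n/(2^n m^2) + 5^n/(4^n m) + 5^n/8^n + 4^n/4^n - 1)$, in which the final two terms cancel exactly, and for Case~2 the same expression without the $+4^n/4^n - 1$ part. Simplifying the exponentials via $5^n/2^n = 2.5^n$, $5^n/4^n = 1.25^n$, and $5^n/8^n = 0.625^n \le 0.675^n$ yields the claimed bound $\cO(2.5^n/m^2 + 1.25^n/m + 0.675^n)$ for both $\Var^{a}_{\cT_1^{\ox n},1}$ and $\Var^{a}_{\cT_1^{\ox n},2}$.

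Finally, to read off the sample complexity I would reuse the $m$--$N$ trade-off from Lemma~\ref{lem:average variance of 2-design}. The mean estimator $\hat{\omega}$ has variance $\Var_{\cT_1^{\ox n}}(\bX_m)/N$ and the total copy count per device is $Nm$, so driving this below $\delta\varepsilon^2$ is cheapest by taking $N$ constant and $m = \Theta(\sqrt{2.5^n})$: the leading $1/m^2$ term with coefficient $2.5^n$ then falls below threshold while the $1.25^n/m$ and $0.625^n$ terms are exponentially smaller at that $m$, giving $Nm = \cO(\sqrt{2.5^n})$. The computation is essentially routine; the only points requiring genuine care are the tensor-product/binomial bookkeeping that produces $\norm{f_{\cT_1^{\ox n}}}{2}^2 = 5^n$, and the verification that, after substitution, it is the $1/m^2$ term that governs the optimal $(N,m)$ and hence dictates the $\cO(\sqrt{2.5^n})$ complexity rather than the lower-order contributions.
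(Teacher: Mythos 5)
Your proposal is correct and follows essentially the same route as the paper's own proof: compute $\norm{f_{\cT_1^{\ox n}}}{2}^2 = 4^n(1+\tfrac14)^n = 5^n$ and $f_{\cT_1^{\ox n}}(\bm{0},\bm{0}) = 2^n$ via the per-qubit factorization, substitute into Theorem~\ref{the:average sample complexity} (where the $+1$ and $-1$ indeed cancel in Case~1, exactly as in the paper's exact computation), and conclude $Nm = \cO(\sqrt{2.5^n})$ from the $N$--$m$ trade-off with the $2.5^n/m^2$ term dominating. Your observation that $5^n/8^n = 0.625^n \le 0.675^n$ also correctly reconciles your exact constant with the slightly looser one stated in the lemma.
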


Several important remarks are in order. 
First, the above two lemmas reveal that DIPE with both global and local unitary $2$-design ensembles is also exponentially hard for most states, highlighting the intrinsic difficulty of DIPE.
Second, our findings completely settle the average sample complexity of the global Clifford ensemble, as it is an instance of a global $2$-design ensemble.
Third, we have established an analytical upper bound for the average sample complexity of the local Clifford ensemble, which is a special case of a local $2$-design ensemble. 
Finally, we observe from both lemmas that the second term of variance, $\Var_{\cE}^{(2)}$,
is the primary factor driving scalability, an insight further corroborated by the numerical results presented in
Section~\ref{sec:numerical simulation} and Appendix~\ref{app:more experiments}.
This observation hints that the second moment can characterize the asymptotic state-dependent sample complexity.

In the next section, we consider an experimentally friendly unitary ensemble that interpolates between 
the local and global unitary $2$-design ensembles in terms of the average sample and the asymptotic state-dependent complexities.

\section{DIPE with Brickwork Ensembles}
\label{sec:cpe with brickwork circuits}

Here we consider DIPE with the brickwork ensemble~\cite{bertoni2024shallow, rozon2024optimala, ippoliti2023operator}, 
which is parameterized by one layer of local Clifford circuits and depth-$d$ two-local Clifford circuits, as shown in Fig.~\ref{fig: framework}. 
We denote the brickwork ensemble of depth $d$ as $\cB_d$. 
Notably, $\cB_0$ reduces to the local Clifford ensemble.
In the following, we first provide the classical function and average sample complexity of DIPE with brickwork ensembles. 
Then, we analyze the asymptotic state-dependent variance to understand the influence of depth.

\subsection{Classical Function and Average Variance}

First, we need to construct the corresponding classical function. 
The result is shown in the following lemma (see proof in Appendix~\ref{app:DIPE with random brickwork ensemble}).
\begin{lemma}
\label{lem:classical function of brikckwork}
Let $\cB_d$ be a brickwork ensemble, the classical function $f_d$ is given by 
\begin{align}
    f_{d}(\ba, \bb) = 2^n \prod_{s\in S} (-2)^{-2\delta_{\ba_s, \bb_s}}, 
\end{align}
where $S = \{(1, 2), \cdots, (n-1, n)\}$ if $d$ is odd, 
otherwise $S = \{(2, 3), \cdots, (n, 1)\}$, and $\ba_{(i,j)}$ is the $i$ and $j$-th bits of $\ba$. 
\end{lemma}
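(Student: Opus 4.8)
The plan is to pin down $f_d$ directly from the unbiasedness criterion of Lemma~\ref{lem:requirement for classical function}. Writing $\bS := \tfrac{1}{2^n}\sum_{P\in\cP_n}P\ox P$ for the swap operator on the doubled space, that criterion is equivalent to the single operator identity $\cM^{(2)}_{\cB_d}(O)=\bS$, since $\tr[\bS\,(P\ox P')]=2^n\delta_{P,P'}$. Here $O=\sum_{\ba,\bb}f_d(\ba,\bb)\ketbra{\ba\bb}$ is diagonal in the computational basis, and by Pauli-invariance (Lemma~\ref{lem:classical function requirement of Pauli-invariant ensemble}) it is fixed by the values $f_d(\cdot,\bm{0})$. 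So the whole task is to identify the diagonal $O$ whose $2$-moment image is exactly the swap.

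Next I would exploit the layered structure. Each sampled unitary factorizes as $U=L_d\cdots L_1 C_0$, where $C_0$ is a layer of independent single-qubit Cliffords and each $L_t$ is a layer of independent two-qubit Cliffords on the alternating brickwork pairing. Because the layers are independent, the moment channel composes, $\cM^{(2)}_{\cB_d}=\cM^{(2)}_{C_0}\circ\cM^{(2)}_{L_1}\circ\cdots\circ\cM^{(2)}_{L_d}$, and---since one- and two-qubit Cliffords are $2$-designs---each factor is the orthogonal projection onto its commutant, namely the span of the qubit-wise (resp.\ block-wise) $\{I,\bS\}$ operators. Two facts drive the argument: (i) $\bS$ is a common fixed point of every layer twirl, because it commutes with $U^{\ox 2}$; and (ii) the two-qubit block images sit inside the single-qubit-layer image, so $C_0$ is absorbed, $\cM^{(2)}_{C_0}\circ\cM^{(2)}_{L_1}=\cM^{(2)}_{L_1}$.

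From here the depth-to-parity collapse follows. Using the fixed-point property (i), it suffices to produce a diagonal $O$ solving the last-layer equation $\cM^{(2)}_{L_d}(O)=\bS$: the remaining twirls then fix $\bS$, giving $\cM^{(2)}_{\cB_d}(O)=\bS$ for every depth. As $\cM^{(2)}_{L_d}$ acts as a tensor product over the blocks of the final pairing---the odd pairing $\{(1,2),\dots\}$ when $d$ is odd and the even pairing $\{(2,3),\dots,(n,1)\}$ when $d$ is even---the condition decouples into one independent two-qubit-block problem per $s\in S$. This is exactly where, and the only way in which, the parity of $d$ enters, which is the content of the depth-independence claim.

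Finally I would invert the two-qubit moment channel block by block. On each block the commutant is two-dimensional, so the projection is fixed once I match the Hilbert--Schmidt overlaps of $\cM^{(2)}_{L_d}(O)$ with $I$ and with $\bS$ to those of $\bS$ itself; solving the resulting $2\times 2$ Gram system determines the diagonal block operator, and hence the per-pair factor that $f_d$ assigns according to whether $\ba_s=\bb_s$. Taking the product over $s\in S$ and fixing the single free constant from the all-equal case $\ba=\bb$ reproduces the normalization $2^n$ together with the stated weight $(-2)^{-2\delta_{\ba_s,\bb_s}}$. I expect the main obstacle to be the bookkeeping in this per-block inversion: $I$ and $\bS$ are \emph{not} orthogonal (their overlap equals the block dimension), so the Gram system must be handled with care, and one must also verify rigorously that the absorbed local layer and all bulk layers genuinely leave the last-layer solution intact---i.e.\ that the fixed-point reduction introduces no extra constraints---rather than resorting to the full transfer-matrix (statistical-mechanics) computation used in shallow-shadow analyses~\cite{bertoni2024shallow, ippoliti2023operator}.
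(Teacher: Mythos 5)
Your proposal is correct and follows essentially the same route as the paper's proof: both reduce the unbiasedness condition $\cM^{(2)}_{\cB_d}(O)=\bigotimes_i\bS_i$ to the final brickwork layer, use that each two-qubit Clifford block is a $2$-design to turn the diagonal observable into the global swap block by block (which is where the parity of $d$ fixes the pairing $S$), and then dispose of the earlier layers via the invariance of the swap---which you phrase as a fixed point of the composed moment channels and the paper phrases equivalently as $\tr\bigl[\bS^{\ox n}(WPW^\dagger\ox WP'W^\dagger)\bigr]=\tr[PP']$. Your per-block Gram-system inversion is a derivation where the paper merely verifies the known two-qubit $2$-design function, a cosmetic difference (note only that the two trace constraints do not make the block solution unique, so this step exhibits rather than determines $f_d$, which is all the lemma requires).
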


As shown in Lemma~\ref{lem:classical function of brikckwork}, the classical function is \emph{independent} of the depth and depends only on the parity of the depth. 
This is quite different from shallow shadows~\cite{bertoni2024shallow}, where the classical function varies with depth.
The reason is that applying the same random unitaries to both $\rho$ and $\sigma$ does not change the inner product $\tr[\rho\sigma]$, allowing us to ignore the influence of the former layers when constructing the classical function.
Therefore, it is reasonable to expect that the classical function of DIPE with brickwork ensembles is independent of depth;
see mathematical details in Appendix~\ref{app:DIPE with random brickwork ensemble}.

We now turn to analyzing the sample complexity and first consider the average variance.
Given the classical function defined in Lemma~\ref{lem:classical function of brikckwork}, we have the following lemma. 
\begin{lemma}
\label{lem:average variance of the brickwork}
Let $\cB_d$ be a brickwork ensemble, the average variances defined in Eqs.~\eqref{eq:average-case sample complexity 2} and~\eqref{eq:average-case sample complexity} are given by
\begin{align}
\Var^{a}_{\cB_d,1}, \Var^{a}_{\cB_d,2} &= \cO\left(\frac{2.18^n}{m^2} + \frac{1.09^n}{m} + 0.54^n\right). 
\end{align}
Consequently, the average sample complexity is $\cO(\sqrt{2.18^n})$.
\end{lemma}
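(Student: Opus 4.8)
The plan is to treat Lemma~\ref{lem:average variance of the brickwork} as a direct corollary of Theorem~\ref{the:average sample complexity}: the brickwork ensemble $\cB_d$ is Pauli-invariant, and its classical function $f_d$ is already given in closed form by Lemma~\ref{lem:classical function of brikckwork}, so the hypotheses of the theorem are met. Under this reduction the only quantities I actually have to compute are the squared norm $\norm{f_d}{2}^2 = \sum_{\ba} f_d^2(\ba,\bm{0})$ and the single value $f_d(\bm{0},\bm{0})$; everything else follows by substitution into the two formulas of Theorem~\ref{the:average sample complexity}. This already explains the shape of the claimed bound, whose three terms share a common exponential numerator $\norm{f_d}{2}^2$ divided by $2^n$, $4^n$, and $8^n$ respectively.

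The crux is the evaluation of $\norm{f_d}{2}^2$, and the key observation is that, by Lemma~\ref{lem:classical function of brikckwork}, $f_d(\ba,\bm{0})$ factorizes as a product over $S$, which for either parity of $d$ is a perfect matching of the $n$ qubits into $n/2$ disjoint pairs. Because the pairs are disjoint, $\sum_{\ba} f_d^2(\ba,\bm{0})$ splits into a product of $n/2$ identical single-pair sums, each taken over the four two-bit values of $\ba_s$. First I would evaluate one such single-pair sum: the pair factor takes the value $2^2$ when the two-bit substring matches and has unit magnitude otherwise, so the single-pair contribution to the squared norm is $2^4 + 3 = 19$. Hence $\norm{f_d}{2}^2 = 19^{n/2} = (\sqrt{19})^n \approx 4.36^n$, a value that is manifestly the same for both parities and independent of the depth $d$ — which is precisely the origin of the depth-independence emphasized in the statement.

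Next I would dispose of the remaining Case~$1$ boundary term. The same factorization gives $f_d(\bm{0},\bm{0}) = (2^2)^{n/2} = 2^n$, so that $f_d^2(\bm{0},\bm{0})/4^n - 1 = 1 - 1 = 0$; the extra contribution to $\Var^a_{\cB_d,1}$ therefore vanishes and the two average variances collapse to the same expression. Substituting $\norm{f_d}{2}^2 = (\sqrt{19})^n$ into Theorem~\ref{the:average sample complexity} yields $(\sqrt{19}/2)^n \approx 2.18^n$ for the $1/m^2$ term, $(\sqrt{19}/4)^n \approx 1.09^n$ for the $1/m$ term, and $(\sqrt{19}/8)^n \approx 0.54^n$ for the constant term, which is exactly the stated $\cO(2.18^n/m^2 + 1.09^n/m + 0.54^n)$. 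Finally, choosing $m = \Theta(\sqrt{2.18^n})$ makes the dominant term $\cO(1)$ and suppresses the other two, so that $\cO(1)$ repetitions suffice by Eq.~\eqref{eq:sample complexity} and the total copy count $Nm$ is $\cO(\sqrt{2.18^n})$, in analogy with Lemma~\ref{lem:average variance of 2-design}.

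I expect the genuine difficulty to lie not in the present lemma but upstream, in Lemma~\ref{lem:classical function of brikckwork}: the whole argument rests on $f_d$ depending on $d$ only through its parity and on its factorizing over the final matching, which is what reduces a second-moment computation over a correlated shallow circuit to a product of single-pair sums. Granting that input, the only points here that require care are verifying that $S$ is a genuine disjoint matching for both parities — including the wrap-around pair in the even case, so that no qubit is double-counted in the product — and tracking the sign pattern of the pair factor, since the magnitudes alone fix $\norm{f_d}{2}^2 = 19^{n/2}$ but the value $f_d(\bm{0},\bm{0}) = 2^n$ needed for the Case~$1$ cancellation requires the factor to be the large one on the matching configuration.
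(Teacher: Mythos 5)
Your proposal is correct and follows essentially the same route as the paper's own proof: invoke Theorem~\ref{the:average sample complexity} for the Pauli-invariant ensemble $\cB_d$, compute $f_d(\bm{0},\bm{0})=2^n$ and $\norm{f_d}{2}^2 = 19^{n/2}\approx 4.36^n$ via the factorization over the disjoint matching $S$ (per-pair contribution $16+3=19$), and substitute to get the three terms $2.18^n/m^2$, $1.09^n/m$, $0.54^n$, with the depth-independence arising exactly as you say. The only cosmetic difference is at the end, where you exhibit the specific choice $m=\Theta(\sqrt{2.18^n})$, $N=\cO(1)$, while the paper derives $Nm \geq \max\bigl\{1.17^n/(\delta\varepsilon^2), \sqrt{2.18^n/(\delta\varepsilon^2)}\bigr\}$ from the three constraints $Nm^2 \gtrsim 2.18^n$, $Nm \gtrsim 1.09^n$, $N \gtrsim \max\{0.54^n,1\}$; both yield $Nm=\cO(\sqrt{2.18^n})$, and your reading of Lemma~\ref{lem:classical function of brikckwork} (factor $4$ on a matched pair, magnitude $1$ otherwise) is indeed the intended one consistent with unbiasedness and the paper's computation.
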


Interestingly, Lemma~\ref{lem:average variance of the brickwork} implies that the average sample complexity of DIPE with
brickwork ensembles is independent of the depth $d$. 
This naturally raises the question: \emph{What role does the depth of the brickwork ensemble play in the performance of DIPE?}

\subsection{Asymptotic State-dependent Variance}

To investigate the influence of the depth, we then consider the asymptotic state-dependent variance, which is determined by the second term of the variance.
Define
\begin{align}
\Xi_{\rho,\sigma}(P) := \tr[P\rho]\tr[P\sigma], \quad 
\sum_{P\in\cP_n} \Xi_{\rho,\sigma}(P) = 2^n \tr[\rho\sigma]. \notag
\end{align}
We have the following lemma (see proof in Appendix~\ref{app:DIPE with random brickwork ensemble}).
\begin{lemma}
\label{lem:variance second term of shallow ensemble}
For the brickwork ensemble $\cB_d$, the corresponding classical function $f_d$, and states $\rho, \sigma$ in $\cH_n$, 
the second term of the variance is given by
\begin{align}
\Var_{\cB_d}^{(2)}(\rho, \sigma) 
= \frac{1}{2^n m^2}
\sum_{P\in\cP_n} \Xi_{\rho,\sigma}(P) \Upsilon_d(P), 
\label{eq:second term of variance brickwork}
\end{align}
where $\Upsilon_d(P):= \sum_{\ba\in\bZ_2^n} f_d^2(\ba, \bm{0}) h(\ba, P)$ and 
\begin{align}
h(\ba, P) := \bE_{U\sim\cB_d} \bra{\bm{0}}UPU^\dagger\ket{\bm{0}} \bra{\ba}UPU^\dagger\ket{\ba}.
\end{align}
\end{lemma}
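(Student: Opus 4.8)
The plan is to start from the compact form of the second variance term in Lemma~\ref{lem:variance}, namely $\Var_{\cB_d}^{(2)}(\rho,\sigma)=\frac{1}{m^2}\tr[\cM_{\cB_d}^{(2)}(O^2)(\rho\ox\sigma)]$, and unfold it into an explicit sum over computational-basis outcomes. Since $O=\sum_{\ba,\bb}f_d(\ba,\bb)\ketbra{\ba\bb}$ is diagonal, so is $O^2$, with entries $f_d^2(\ba,\bb)$; moving the two unitaries onto the states through the definition of $\cM_{\cB_d}^{(2)}$ gives
\begin{align}
\Var_{\cB_d}^{(2)}(\rho,\sigma)=\frac{1}{m^2}\bE_{U}\sum_{\ba,\bb}f_d^2(\ba,\bb)\bra{\ba}U\rho U^\dagger\ket{\ba}\bra{\bb}U\sigma U^\dagger\ket{\bb}. \notag
\end{align}
First I would expand $\rho$ and $\sigma$ in the Pauli basis, $\rho=2^{-n}\sum_{P}\tr[P\rho]P$ and likewise for $\sigma$, converting the expression into a double Pauli sum $\sum_{P,Q}\tr[P\rho]\tr[Q\sigma]$ weighted by the second moment $\bE_U\bra{\ba}UPU^\dagger\ket{\ba}\bra{\bb}UQU^\dagger\ket{\bb}$.

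The two key reductions both exploit that $\cB_d$ is Pauli-invariant. To eliminate the off-diagonal Pauli terms, I would substitute $U\mapsto UR$ for an arbitrary $R\in\cP_n$: since $RPR^\dagger=(-1)^{\langle R,P\rangle}P$ with $\langle\cdot,\cdot\rangle$ the symplectic form, invariance of the distribution forces each $(P,Q)$ weight to equal $(-1)^{\langle R,P\rangle+\langle R,Q\rangle}$ times itself for every $R$. Because $\langle R,P\rangle+\langle R,Q\rangle=\langle R,PQ\rangle$, the weight vanishes unless $PQ\propto I$, i.e. $P=Q$, so only the diagonal contributions $\Xi_{\rho,\sigma}(P)=\tr[P\rho]\tr[P\sigma]$ survive. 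To collapse the double outcome sum, I would instead substitute $U\mapsto X^{\bm{s}}U$; since $X^{\bm{s}}\ket{\ba}=\ket{\ba\oplus\bm{s}}$, the weight $\bE_U\bra{\ba}UPU^\dagger\ket{\ba}\bra{\bb}UPU^\dagger\ket{\bb}$ is invariant under shifting both indices by the same $\bm{s}$, and choosing $\bm{s}=\bb$ identifies it with $h(\ba\oplus\bb,P)$.

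Finally I would combine these with Lemma~\ref{lem:classical function requirement of Pauli-invariant ensemble}, which gives $f_d^2(\ba,\bb)=f_d^2(\ba\oplus\bb,\bm{0})$, so that both the classical function and the moment weight depend only on $\bm{c}:=\ba\oplus\bb$. For each fixed $\bm{c}$ there are exactly $2^n$ pairs $(\ba,\bb)$ with $\ba\oplus\bb=\bm{c}$, producing a factor $2^n$; summing over $\bm{c}$ reconstructs $\Upsilon_d(P)=\sum_{\ba}f_d^2(\ba,\bm{0})h(\ba,P)$, and the leftover prefactor $2^n/4^n=2^{-n}$ yields the claimed formula. The main obstacle is the first symmetry step: rigorously establishing that the right-multiplication sign $(-1)^{\langle R,P\rangle+\langle R,Q\rangle}$ forces $P=Q$ requires care in tracking Pauli commutation signs and in invoking Pauli-invariance of $\cB_d$ for both left and right multiplication, whereas the remaining reindexing and counting are routine.
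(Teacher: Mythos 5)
Your proposal is correct and follows essentially the same route as the paper's proof, which starts from $\Var_{\cB_d}^{(2)}(\rho,\sigma)=\frac{1}{m^2}\tr[\cM_{\cB_d}^{(2)}(O^2)(\rho\ox\sigma)]$, expands $\rho$ and $\sigma$ in the Pauli basis, and invokes Pauli-invariance together with $f_d^2(\ba,\bb)=f_d^2(\ba\oplus\bb,\bm{0})$ to collapse the double sum with the factor $2^n/4^n = 2^{-n}$. The only difference is that the paper states these two Pauli-invariance reductions (vanishing of the $P\neq Q$ cross terms and the shift identifying the weight with $h(\ba\oplus\bb,P)$) without proof, whereas you spell them out via the $U\mapsto UR$ and $U\mapsto X^{\bm{s}}U$ substitutions, both of which are sound.
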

As we can see, $\Upsilon_d(P)$ is hard to compute analytically. 
To address this, we first focus on $h(\ba, P)$ and rewrite it in the following form, 
\begin{align}
    h(\ba, P) 
    := \Pr\{UPU^\dagger\in\cZ^{\rm C}_{\ba}\} - \Pr\{UPU^\dagger\in\cZ^{\rm A}_{\ba}\}, 
    \label{eq:physical meaning of h(a,P)-main text}
\end{align} 
where $\cZ := \{I, Z\}^{\ox n}$, $X^{\ba} := \bigotimes X_i^{\ba_i}$, and
\begin{align}
\cZ^{\rm C}_{\ba} &:= \{P|P\in\pm\cZ, [P,X^{\ba}]=0\}, \\
\cZ^{\rm A}_{\ba} &:= \{P|P\in\pm\cZ, \{P,X^{\ba}\}=0\}. 
\end{align} 
Therefore, the physical meaning of $h(\ba, P)$ is the difference between the probabilities that $UPU^\dagger$ commutes or anti-commutes with $X^{\ba}$. 
Prior work~\cite[Lemma 5]{bertoni2024shallow} shows that $h(\bm{0}, P)$ admits a matrix product state (MPS) representation with a clear physical interpretation. 
Likewise, based on this physical meaning, we can represent $h(\ba, P)$ as a matrix product operator (MPO).
Furthermore, based on the special structure of the classical function $f_d$, we can also represent it as an MPS.
Therefore, we can combine these two tensor network representations to compute $\Upsilon_d(P)$, i.e., representing $\Upsilon_d(P)$ as an MPS, as shown in the following lemma. 

\begin{lemma}
\label{lem:MPO representation}
$\Upsilon_d(P)$ can be represented as a MPS with bond dimension at most $\cO(2^{d-1})$. 
For depth $d=\cO(\log n)$, it can be computed exactly in time $n^{\cO(1)}$.
\end{lemma}
    
The construction is detailed in Appendix~\ref{app:DIPE with random brickwork ensemble}. 
Numerical results in Appendix~\ref{app:more experiments} show that 
for Pauli operators $P\in\cP_n \setminus \{I^{\ox n}\}$, $\Upsilon_d(P)$ converges to $2^n$ as the depth $d$ increases.  
This behavior may be explained by statistical mechanical models~\cite{bertoni2024shallow, hunter-jones2019unitary} and operator spreading~\cite{ippoliti2023operator, hu2025demonstration}. 
This convergence phenomenon suggests that the asymptotic state-dependent sample complexity will converge to $\cO(\sqrt{2.18^n})$ for all state pairs as the depth increases, since $\Upsilon_d(P) \approx 2^n$ for all Pauli operators except the identity.
Furthermore, we observe that for some Pauli operators, $\Upsilon_d(P)$ decreases as the depth $d$ increases, while for others, $\Upsilon_d(P)$ increases with $d$.
This phenomenon indicates that not every state benefits from increasing depth, which is consistent with the fact that brickwork ensembles of different depths share the same average sample complexity.
Lastly, we numerically investigate the dependence of the fourth term of the variance on the number of qubits $n$, with detailed results also provided in Appendix~\ref{app:more experiments}.

\section{DIPE with Clifford Ensembles}
\label{sec:cpe with clifford}

We now consider DIPE with the global and local Clifford ensembles, which are two extreme cases of the brickwork ensemble $\cB_d$,
and compute their sample complexities. 
While the global and local Clifford ensembles are unitary $2$-design and local unitary $2$-design ensembles, respectively,
our analysis on the influence of circuit depth in Section~\ref{sec:cpe with brickwork circuits} motivates a more refined analysis.
We therefore focus on the state-dependent sample complexities of these two ensembles.

\subsection{Global Clifford Ensemble}

The global $n$-qubit Clifford ensemble $\Cl_n$ forms a unitary $3$-design~\cite{zhu2017multiqubit}. 
The classical function is $f_{\Cl_n} \equiv f_{\cT_n}$ and two average variances are given in Lemma~\ref{lem:average variance of 2-design}. 
Here, using the Schur-Weyl duality theory for the Clifford group~\cite{gross2021schur,chen2024nonstabilizerness}, 
we analyze the state-dependent variance and obtain the following.

\begin{theorem}
\label{the:global clifford variance}
For the global Clifford ensemble $\Cl_n$ and states $\rho,\sigma$ in $\cH_n$,
the variance of $\bX_m$ defined in Eq.~\eqref{eq:variance} satisfies 
\begin{align}
    \Var_{\Cl_n}(\bX_m) 
    = \cO\left(\frac{2^n}{m^2} + \frac{1}{m} + \frac{1}{2^n}\norm{\Xi_{\rho,\sigma}}{2}^2\right), \label{eq: global clifford bound}
\end{align}
where $\norm{\Xi_{\rho,\sigma}}{2}^2 := \sum_P \tr^2[P\rho]\tr^2[P\sigma]$.
Consequently, the worst-case sample complexity is $\Theta(\sqrt{2^n})$, where the matching lower bound has been proven in~\cite{anshu2022distributed}.
\end{theorem}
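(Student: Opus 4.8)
The plan is to start from the exact four-term decomposition of Lemma~\ref{lem:variance}, $\Var_{\Cl_n}(\bX_m) = \sum_{i=1}^4 \Var_{\Cl_n}^{(i)}(\rho,\sigma)$, and to exploit the fact that $\Cl_n$ is a unitary $3$-design~\cite{zhu2017multiqubit}. Since $\Var_{\Cl_n}^{(1)}=-\tr^2[\rho\sigma]=\cO(1)$, $\Var_{\Cl_n}^{(2)}$ involves only $\cM_{\Cl_n}^{(2)}$, and $\Var_{\Cl_n}^{(3)}$ involves only third moments of the measurement outcomes, all three coincide exactly with the corresponding quantities for a unitary $4$-design $\cF_n$ (indeed for any $k$-design with $k\ge 3$). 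I would therefore invoke the $4$-design analysis already recorded in the excerpt to conclude $\Var_{\Cl_n}^{(1)}+\Var_{\Cl_n}^{(2)}+\Var_{\Cl_n}^{(3)} = \cO(2^n/m^2 + 1/m)$, leaving $\Var_{\Cl_n}^{(4)}$ as the only term that requires genuinely new work.

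The heart of the proof is the exact evaluation of
\[
\Var_{\Cl_n}^{(4)}(\rho,\sigma) = \frac{(m-1)^2}{m^2}\,\tr\!\left[\cM_{\Cl_n}^{(4)}(O^{\ox 2})(\rho\ox\sigma)^{\ox 2}\right],
\]
where $\cM_{\Cl_n}^{(4)}$ is the twirl onto the commutant of $\Cl_n^{\ox 4}$. Here I would use Schur--Weyl duality for the Clifford group~\cite{gross2021schur,chen2024nonstabilizerness}: unlike the Haar/$4$-design case, whose commutant is spanned by the $24$ permutation operators of $S_4$, the Clifford commutant at $t=4$ is strictly larger, the extra generator being the stabilizer operator $\tfrac{1}{2^n}\sum_{P\in\cP_n}P^{\ox 4}$ (which commutes with $\Cl_n^{\ox 4}$ because the sign in $CPC^\dagger=\pm P'$ is killed by the even power). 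I would expand $\cM_{\Cl_n}^{(4)}$ in the basis of commutant elements, invert the associated Gram matrix, and contract each element against $O^{\ox 2}$ and $(\rho\ox\sigma)^{\ox 2}$. The permutation sector reproduces exactly the $4$-design value (contributing the familiar $\cO(1/2^n)$), while the stabilizer generator contracts $(\rho\ox\sigma)^{\ox 2}$ into $\sum_{P\in\cP_n}\tr[P\rho]^2\tr[P\sigma]^2$; tracking the normalization yields a correction of order $\tfrac{1}{2^n}\norm{\Xi_{\rho,\sigma}}{2}^2$.

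Collecting the two contributions gives Eq.~\eqref{eq: global clifford bound}. For the worst-case statement I would bound $\norm{\Xi_{\rho,\sigma}}{2}^2$ on pure states: using $\tr^2[P\sigma]\le 1$ and $\sum_{P}\tr^2[P\rho]=2^n\tr[\rho^2]=2^n$ one gets $\tfrac{1}{2^n}\norm{\Xi_{\rho,\sigma}}{2}^2\le 1$, so the whole variance is $\cO(2^n/m^2+1/m+1)$; optimizing the total copy count $Nm$ over $m$ then yields the upper bound $\cO(\sqrt{2^n})$, matching the lower bound of~\cite{anshu2022distributed}. I would also note that this bound is saturated precisely by stabilizer states, for which $\norm{\Xi_{\rho,\sigma}}{2}^2=2^n$, whereas a nonzero nonstabilizerness spreads the Pauli spectrum and shrinks $\norm{\Xi_{\rho,\sigma}}{2}^2$, explaining the promised enhancement.

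The main obstacle I anticipate is the explicit handling of the $t=4$ Clifford commutant: pinning down the full generating set beyond $S_4$, computing and inverting the Gram matrix (whose entries carry powers of $2^n$ that must cancel cleanly), and bookkeeping which sectors survive contraction with the diagonal operator $O^{\ox 2}$ and with $(\rho\ox\sigma)^{\ox 2}$ under the specific $(\ba_1,\bb_1,\ba_2,\bb_2)$ tensor grouping induced by Lemma~\ref{lem:variance}. Isolating from this the single surviving Pauli-weight functional $\sum_{P}\tr^2[P\rho]\tr^2[P\sigma]$, with the correct $2^{-n}$ prefactor and no spurious lower-order terms, is the delicate step; the remainder reduces to the routine $4$-design estimates already available in the excerpt.
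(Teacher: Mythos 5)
Your overall route coincides with the paper's: handle $\Var^{(2)}_{\Cl_n}$ and $\Var^{(3)}_{\Cl_n}$ via the $3$-design property of $\Cl_n$ (giving $\cO(2^n/m^2)$ and $\cO(1/m)$, exactly the paper's Lemmas~\ref{lem:variance 2 of unitary 2-design} and~\ref{lem:variance 3 of unitary 3-design}), evaluate the fourth moment by Schur--Weyl duality for the Clifford group, extract the term $\tfrac{1}{2^n}\norm{\Xi_{\rho,\sigma}}{2}^2$ from the sector beyond the permutations, and bound $\norm{\Xi_{\rho,\sigma}}{2}^2\le 2^n$ for the worst case before optimizing $Nm$ over $m$. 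The Gram-matrix inversion you anticipate as the main obstacle is in fact unnecessary: the paper sidesteps it by citing the known projection of $\Lambda_1=\sum_{\ba,\bb}\ketbra{\ba\ba\bb\bb}$ onto the commutant from~\cite{chen2024nonstabilizerness}, in which only $\cR_1=\bP_{(e)}+\bP_{(12)}+\bP_{(34)}+\bP_{(12)(34)}$ and $\cR_4=R_{T_4}+\bP_{(12)}R_{T_4}$ survive with prefactor $\tfrac{1}{(2^n+1)(2^n+2)}$. Note also that $\bP_{(12)}R_{T_4}$ produces a cross term $\tfrac{1}{2^n}\sum_P\tr[\rho P\sigma P]\tr[P\rho]\tr[P\sigma]$ that your sketch omits; it is absorbed into $\tfrac{1}{2^n}\norm{\Xi_{\rho,\sigma}}{2}^2$ by~\cite[Lemma~3]{chen2024nonstabilizerness}.

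There is, however, one step that fails as written: your grouping $\Var^{(1)}+\Var^{(2)}+\Var^{(3)}=\cO(2^n/m^2+1/m)$ followed by the claim that the permutation sector of $\Var^{(4)}$ alone contributes ``the familiar $\cO(1/2^n)$.'' It does not. Expanding $O^{\ox 2}=(2^n+1)^2\Lambda_1-(2^n+1)(\Lambda_2\ox\1+\1\ox\Lambda_2)+\1$ with $\Lambda_2=\sum_{\ba}\ketbra{\ba\ba}$, the permutation sector of $\Var^{(4)}$ evaluates to $\left(\tfrac{m-1}{m}\right)^2\bigl[\tr^2[\rho\sigma]+\cO(2^{-n})\bigr]$; the $\Theta(1)$ piece $\tr^2[\rho\sigma]$ is removed only by cancellation against $\Var^{(1)}=-\tr^2[\rho\sigma]$, which is precisely why the paper (Lemma~\ref{lem:variance 4 of unitary 4-design} and the proof of Theorem~\ref{the:global clifford variance}) always bounds the \emph{sum} $\Var^{(1)}+\Var^{(4)}$ rather than $\Var^{(4)}$ alone. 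Having already spent $\Var^{(1)}$ in your first group (where it is merely a discardable negative term), your accounting leaves an uncancelled additive $\tr^2[\rho\sigma]$, and this is \emph{not} dominated by $\tfrac{1}{2^n}\norm{\Xi_{\rho,\sigma}}{2}^2$: take $\rho=\sigma$ a highly magic pure state, so that $\tr^2[\rho\sigma]=1$ while $\tfrac{1}{2^n}\norm{\Xi_{\rho,\sigma}}{2}^2$ is exponentially small --- exactly the regime where the nonstabilizerness enhancement you correctly emphasize has content. The fix is the paper's regrouping $(\Var^{(2)}+\Var^{(3)})+(\Var^{(1)}+\Var^{(4)})$; with that change your commutant computation goes through, and the worst-case step (identical to the paper's: $\norm{\Xi_{\rho,\sigma}}{2}^2\le 2^n$, then Chebyshev bookkeeping) yields $Nm=\Theta(\sqrt{2^n})$.
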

The proof is provided in Appendix~\ref{app:clifford}. 
Notably, this result shows that the global Clifford ensemble achieves \emph{a performance comparable} to that of the unitary $4$-design for all states $\rho$ and $\sigma$. 
Moreover, $\norm{\Xi_{\rho,\sigma}}{2}$ is a nonstabilizerness measure studied in~\cite{chen2024nonstabilizerness}. 
Theorem~\ref{the:global clifford variance} implies that nonstabilizerness can reduce the variance and improve the efficiency of DIPE with the global Clifford ensemble.
This phenomenon has also been observed in other tasks such as direct fidelity estimation~\cite{leone2023nonstabilizerness} and thrifty classical shadows~\cite{chen2024nonstabilizerness}.

\subsection{Local Clifford Ensemble}

Since the single-qubit Clifford ensemble $\Cl_1$ is a unitary $2$-design on $\cH_1$, 
the classical function of the local $n$-qubit Clifford ensemble $\Cl_1^{\ox n}$ satisfies $f_{\Cl_1^{\ox n}} \equiv f_{\cT_1^{\ox n}}$, 
and the average variances are given in Lemma~\ref{lem:average variance of local 2-design}.
For the state-dependent variance, we focus on the second and the fourth terms, as shown in the following theorem.

\begin{theorem}
\label{the:local clifford variance}
For the local Clifford ensemble $\Cl_1^{\ox n}$ and states $\rho,\sigma$ in $\cH_n$, 
the second term of the variance is given by
\begin{align}
    \Var_{\Cl_1^{\ox n}}^{(2)}(\rho,\sigma) 
    &= \frac{2.5^n}{m^2}\sum_{P\in\cP_n} \frac{\Xi_{\rho,\sigma}(P)}{5^{|P|}} 
    \leq \frac{3^n}{m^2}, 
    \label{eq: local clifford bound}
\end{align}
where $\Xi_{\rho,\sigma}(P) := \tr[P\rho]\tr[P\sigma]$ and $|P|$ is the Pauli weight. 
The upper bound in Eq.~\eqref{eq: local clifford bound} is achieved when $\rho=\sigma$ is a product state.
The fourth term of the variance is given by 
\begin{align}
    \Var_{\Cl_1^{\ox n}}^{(4)}(\rho,\sigma) 
    &= \frac{(m-1)^2}{4^n m^2} \sum_{P, Q\in \cP_n^P} \frac{\Xi_{\rho,\sigma}(P) \Xi_{\rho,\sigma}(Q)}{3^{-|\{i|P_i=Q_i\neq I\}|}}, \label{eq: local clifford 4-moment}
\end{align}
where $\cP_n^P := \left\{Q\in\cP_n | \forall i, |P_i| \cdot |Q_i| = 0 \text{ or } P_i = Q_i\right\}$. 

Consequently, if $\rho=\sigma$ is a stabilizer product state, the sample complexity is $\cO(\sqrt{4.5^n})$.
\end{theorem}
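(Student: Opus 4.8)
The plan is to reduce everything to single-qubit computations using the tensor-product structure of the local Clifford ensemble. Since $\Cl_1^{\ox n}$ is a product ensemble, the moment channels factorize as $\cM_{\Cl_1^{\ox n}}^{(k)} = \big(\cM_{\Cl_1}^{(k)}\big)^{\ox n}$, and the classical function $f_{\Cl_1^{\ox n}}\equiv f_{\cT_1^{\ox n}}(\ba,\bb) = 2^n(-2)^{-\cD(\ba,\bb)}$ makes the operator $O$ factorize as $O = \bigotimes_{i=1}^n O_i$, where $O_i = \mathrm{diag}(2,-1,-1,2)$ acts on the pair formed by qubit $i$ of $\rho$ and qubit $i$ of $\sigma$. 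I would then start from the variance decomposition in Lemma~\ref{lem:variance} and evaluate $\Var^{(2)}$ and $\Var^{(4)}$ qubit by qubit.

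For the second term I use that $\Cl_1$ is a unitary $2$-design, so $\cM_{\Cl_1}^{(2)}$ coincides with the Haar $2$-fold twirl. Writing the twirl in the $\{I^{\ox 2}, \mathrm{SWAP}\}$ basis of the two-copy commutant and using $\tr[O_i^2]=10$, $\tr[\mathrm{SWAP}\,O_i^2]=8$ gives $\cM_{\Cl_1}^{(2)}(O_i^2) = 2\,I^{\ox 2} + \mathrm{SWAP}$; expanding $\mathrm{SWAP} = \tfrac12\sum_{a}a\ox a$ over Paulis and tracing against the Pauli expansions of $\rho$ and $\sigma$ shows that each qubit contributes only when the two Pauli labels coincide, yielding a factor $10$ for an identity slot and $2$ for a matching non-identity slot. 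Collecting $10^{\,n-|P|}2^{|P|}/4^n = 2.5^n 5^{-|P|}$ reproduces Eq.~\eqref{eq: local clifford bound}. For the upper bound I restrict to $\rho=\sigma$ a product state, where $\Xi_{\rho,\sigma}$ factorizes and $\sum_P\Xi_{\rho,\sigma}(P)/5^{|P|} = \prod_i\big(1 + \tfrac15\sum_{a}\tr^2[a\rho_i]\big)$; this is maximized by pure single-qubit factors, for which the Bloch sum equals $1$, giving $(6/5)^n$ and hence the bound $3^n/m^2$.

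The fourth term is the main obstacle, because $\Cl_1$ is only a unitary $3$-design and $\cM_{\Cl_1}^{(4)}$ departs from the Haar twirl. Here I would avoid the explicit commutant and instead exploit that Cliffords map Pauli operators to signed Paulis: after expanding each $O_i$ in the Pauli basis, $\cM_{\Cl_1}^{(4)}$ reduces to the average $\bE_{U\in\Cl_1}\bigotimes_{k=1}^4 U P_k U^\dagger$, which is nonzero only for sign-compatible tuples and is governed by the signed-permutation action of $\Cl_1$ on the three Pauli axes. Carefully bookkeeping which of the four slots carry the identity and which carry equal non-identity Paulis, together with the sign cancellations, produces exactly the admissibility set $\cP_n^P$ and the weight $3^{|\{i: P_i=Q_i\neq I\}|}$, giving Eq.~\eqref{eq: local clifford 4-moment}. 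Getting these sign and multiplicity factors right---equivalently, correctly enumerating the extra elements of the degree-$4$ commutant beyond the $S_4$ permutations that survive precisely because $\Cl_1$ fails to be a $4$-design---is the delicate step.

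Finally, for $\rho=\sigma$ a product stabilizer state, $\Xi_{\rho,\rho}(P)\in\{0,1\}$ is supported on the $2^n$ stabilizer Paulis, so the per-qubit sum over $(P_i,Q_i)\in\{I,g_i\}^2$ evaluates each factor as $1+1+1+3 = 6$, giving $\Var^{(4)} = \tfrac{(m-1)^2}{m^2}1.5^n$ while $\Var^{(2)}\le 3^n/m^2$. Balancing $3^n/m^2$ against $1.5^n$ in the sample count $Nm \ge m\,\Var_{\Cl_1^{\ox n}}(\bX_m)/(\delta\ve^2)$ yields the optimum $m=\Theta(\sqrt{2^n})$ and $Nm = \cO(\sqrt{4.5^n})$; I would also verify that $\Var^{(3)}$ stays subdominant at this choice of $m$.
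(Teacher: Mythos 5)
Your factorized computations follow essentially the same route as the paper's proof. For the second term, the per-qubit $2$-design twirl $\cM_{\Cl_1}^{(2)}(O_i^2)=2\1+\bS$ and the Pauli expansion of $\bS$ give exactly the paper's formula $\frac{2.5^n}{m^2}\sum_{P}5^{-|P|}\Xi_{\rho,\sigma}(P)$. For the fourth term, the step you flag as delicate is in fact elementary: since $O_i=\frac12 I^{\ox 2}+\frac32 Z^{\ox 2}$ is supported on $\{I,Z\}^{\ox 2}$, the only twirls needed are $\bE_{U\sim\Cl_1}(UZU^\dagger)^{\ox 2}$ and $\bE_{U\sim\Cl_1}(UZU^\dagger)^{\ox 4}$, both equal to $\frac13(X^{\ox k}+Y^{\ox k}+Z^{\ox k})$ by precisely the signed-permutation action you describe (the paper imports this as Lemma~2 of Ref.~\cite{zhou2023performance}); no enumeration of the degree-$4$ Clifford commutant is required. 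Your stabilizer-product evaluation ($1+1+1+3=6$ per qubit, giving $\Var^{(4)}=\frac{(m-1)^2}{m^2}1.5^n$) and the balancing $m=\Theta(\sqrt{2^n})$, $Nm=\cO(\sqrt{4.5^n})$ also coincide with the paper's derivation, and your remark about checking that $\Var^{(3)}$ is subdominant is sound.

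The genuine gap is the state-independent inequality $\Var_{\Cl_1^{\ox n}}^{(2)}(\rho,\sigma)\le 3^n/m^2$. You establish it only over \emph{product} states: you restrict to $\rho=\sigma$ product, factorize $\sum_P 5^{-|P|}\Xi_{\rho,\rho}(P)=\prod_i\bigl(1+\tfrac15\sum_a\tr^2[a\rho_i]\bigr)$, and maximize within that family. But the theorem asserts the bound for \emph{arbitrary}, possibly entangled $\rho,\sigma$, and nothing in your argument rules out an entangled state exceeding $(6/5)^n$; this is exactly the nontrivial content of the claim (product states are the maximizers, but that must be proven, not assumed). The paper closes it in two steps you omit: first, Cauchy--Schwarz decouples the two states, $\sum_P 5^{-|P|}\Xi_{\rho,\sigma}(P)\le\bigl(\sum_P 5^{-|P|}\tr^2[P\rho]\bigr)^{1/2}\bigl(\sum_P 5^{-|P|}\tr^2[P\sigma]\bigr)^{1/2}$; second, a dedicated lemma (Lemma~\ref{lem: bound for sum over P (local Clifford)}) shows $\sum_P 5^{-|P|}\tr^2[P\rho]\le(6/5)^n$ for \emph{every} $n$-qubit state, by induction on $n$: Schmidt-decomposing across the first qubit, the single-qubit Pauli sum produces the kernel $\frac45\delta_{ij}\delta_{kl}+\frac25\delta_{ik}\delta_{jl}$ (from $\sum_{P_1}5^{-|P_1|}P_1^{\ox 2}=\frac45\,\1+\frac25\,\bS$), which is bounded by $\frac65$ while the residual $(n-1)$-qubit expression is again of the same form, closing the induction. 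Without this step (or a substitute argument that the quantity is maximized on product states), your proof yields the exact formula in Eq.~\eqref{eq: local clifford bound} and its tightness at pure product states, but not the claimed upper bound itself; note also that the $\cO(\sqrt{4.5^n})$ consequence does not depend on this gap, since it only uses the values at product stabilizer states.
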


See Appendix~\ref{app:clifford} for the proof. 
From Theorem~\ref{the:local clifford variance}, we can find that the performance of DIPE with the local Clifford ensemble is also influenced by the nonstabilizerness. 
Unfortunately, it is challenging to derive a state-independent upper bound for Eq.~\eqref{eq: local clifford 4-moment}, and thus the worst-case sample complexity 
for local Clifford ensembles remains undetermined.

\begin{figure}[!htbp]
\centering
\begin{overpic}[width=1.0\linewidth]{./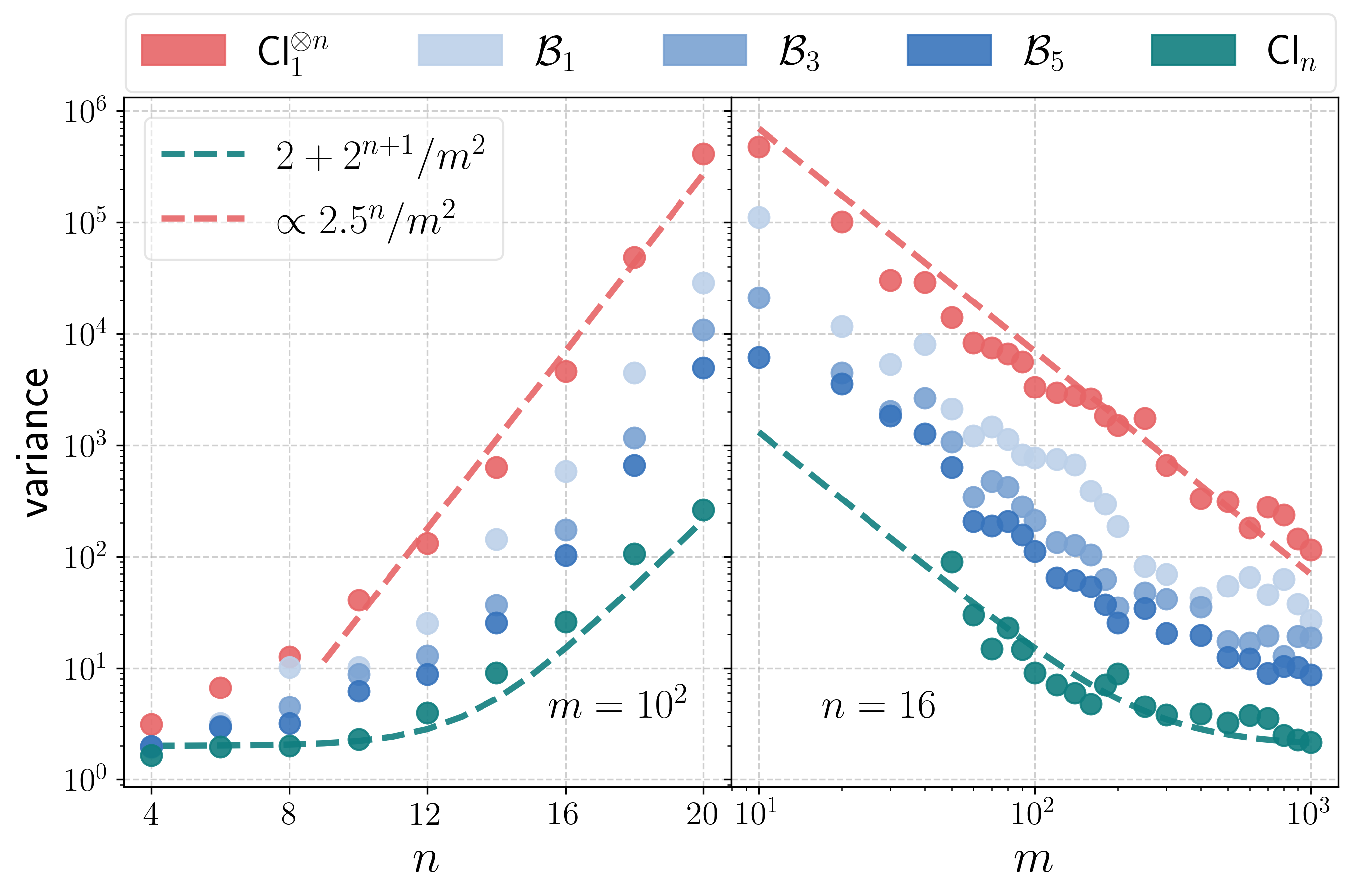}
\put(0, 60){(a)}
\end{overpic}
\begin{overpic}[width=1.0\linewidth]{./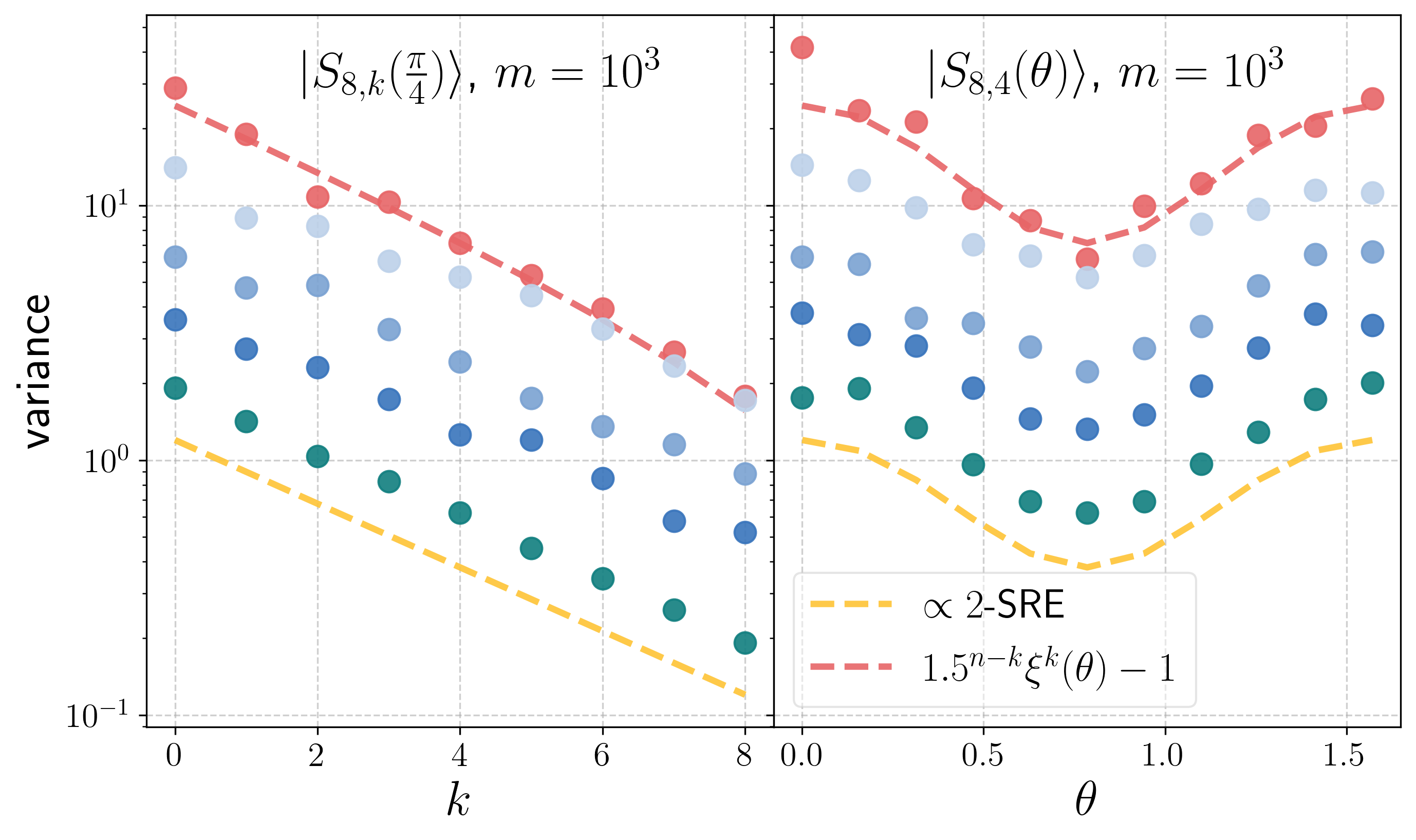}
\put(0, 55){(b)}
\end{overpic}
\caption{\raggedright
Numerical results of DIPE with different unitary ensembles: 
the local Clifford ensemble $\Cl_1^{\ox n}$, the brickwork ensemble $\cB_d$ ($d=1,3,5$), and the global Clifford ensemble $\Cl_n$.
The states are set as $\rho=\sigma$, 
\textbf{(a)} GHZ state, and
\textbf{(b)} $\ket{S_{8,k}(\theta)}$ defined in Eq.~\eqref{eq:definition of S state}.
Each data point is obtained with $10^2$ unitaries, $10^2$ state pairs, and $m$ shots. 
The green line is from Theorem~\ref{the:global clifford variance},  
the yellow line tracks the behavior of the stabilizer $2$-R\'enyi entropy ($2$-SRE), 
and the red line is from Theorem~\ref{the:local clifford variance}, where 
$\xi(\theta)$ is defined in Eq.~\eqref{eq:single-qubit 4-moment}. 
}
\label{fig: experiment}
\end{figure}

\section{DIPE with Approximate Unitary 4-design Ensembles}
\label{sec:cpe with approximate 4-design}

We now analyze the performance guarantee of DIPE using an $\varepsilon$-approximate $4$-design ensemble, denoted as $\tilde{\cF}_n$. 
Such ensemble can be constructed using $\cO(\log(n))$-depth circuits, and is defined as follows~\cite{schuster2025random}. 

\begin{definition}
An $n$-qubit unitary ensemble $\tilde{\cF}_n$ is an $\epsilon$-approximate unitary $4$-design if 
\begin{align}
(1-\epsilon)\cM^{(4)}_{\cF_n} \preceq \cM^{(4)}_{\tilde{\cF}_n} \preceq (1+\epsilon)\cM^{(4)}_{\cF_n},
\end{align}
where $\cF_n$ is an exact unitary $4$-design ensemble for $n$ qubits, $\cM_{\cE}$ is the $k$-moment channel defined in Eq.~\eqref{eq:k-moment channel}, and $\cA\preceq \cB$ denotes that $\cB - \cA$ is a completely-positive map. 
\end{definition}
As shown in~\cite{schuster2025random}, the approximation error $\epsilon$ can be exponentially suppressed by increasing the circuit depth.
In the following, we construct a biased estimator of $\tr[\rho\sigma]$, whose bias \emph{decreases exponentially} with circuit depth.

Recall that DIPE protocol consists of two steps: randomized measurements and classical post-processing.
Given unknown quantum states $\rho$ and $\sigma$, we sample unitaries from $\tilde{\cF}_n$, 
apply the corresponding randomized measurements with $m$ shots, 
and process the measurement outcomes using the classical function $f_{\cT_n}$,
resulting in the classical estimator $\tilde{\bX}_m$. 
By repeating this procedure for $N$ times, we obtain the biased estimator $\tilde{\omega}$ of $\tr[\rho\sigma]$. 
We then have the following theorem; See proof in Appendix~\ref{app:approximate design}.

\begin{theorem}
\label{the:DIPE with approximate design}
For DIPE with $\epsilon$-approximate unitary $4$-design ensemble $\tilde{\cF}_n$, the biased estimator $\tilde{\omega}$ with the classical function $f_{\cT_n}$ satisfies 
\begin{align}
|\tilde{\omega} - \tr[\rho\sigma]| \leq \epsilon \left(1 + \tr[\rho\sigma]\right) \leq 2\epsilon, 
\end{align}
and the variance of the classical estimator $\tilde{\bX}_m$ satisfies
\begin{align}
\Var[\tilde{\bX}_m] - \Var_{\cF_n}[\bX_m]
&\leq \cO\left(\frac{\epsilon\cdot 2^n}{m^2} + \epsilon \right).
\end{align}
\end{theorem}

As we can see, the bias decreases exponentially with the circuit depth, since $\epsilon$ can be exponentially suppressed~\cite{schuster2025random}.
Moreover, the variance of the classical estimator $\tilde{\bX}_m$ closely approximates that of the exact unitary $4$-design ensemble $\cF_n$ when $m = \cO(\sqrt{2^n})$.
This behavior is consistent with that of classical shadows using approximate unitary $3$-design ensembles~\cite{schuster2025random}, where the bias and variance exhibit similar characteristics.

\section{Numerical Simulation}
\label{sec:numerical simulation}
We now present numerical experiments for DIPE with different unitary ensembles, 
including $\Cl_1^{\ox n}$, $\cB_d$ ($d=1,3,5$), and $\Cl_n$. 
For each data point in Fig.~\ref{fig: experiment}, we sample $10^2$ unitary and $10^2$ pairs of states. 
We first set $\rho=\sigma=\ketbra{{\rm GHZ}_n}$, where $\ket{{\rm GHZ}_n}=(\ket{0}^{\ox n} + \ket{1}^{\ox n})/\sqrt{2}$. 
In Fig.~\ref{fig: experiment}(a), we vary $n$ from $4$ to $20$ with fixed $m=10^2$, and then vary $m$ from $10$ to $10^3$ with fixed $n=16$.
Two reference lines are included: the green line corresponds to $2+2^{n+1}/m^2$ from Theorem~\ref{the:global clifford variance}, 
and the red line represents a scaling of $\propto 2.5^n/m^2$ from Theorem~\ref{the:local clifford variance}.
These results demonstrate that increasing the depth of the brickwork ensemble dramatically suppresses the variance.

We then investigate the influence of nonstabilizerness by setting $\rho = \sigma = \ketbra{S_{8, k}(\theta)}$ and $m=10^3$, where
\begin{align}
    \ket{S_{n, k}(\theta)} 
    = \ket{0}^{\ox n-k} \ox \left[\frac{1}{\sqrt{2}}\left(\ket{0} + e^{i\theta}\ket{1}\right)\right]^{\ox k}, 
\end{align}
which has previously been studied in~\cite{bravyi2005universal, chen2024nonstabilizerness}. 
Defined that 
\begin{align}
    M_2 (n,k,\theta) 
    := 2^{n-k}\left(1 + \cos^4\theta + \sin^4\theta\right)^k, 
    \label{eq:definition of S state}
\end{align}
which serves as a widely used measure of stabilizerness, known as the stabilizer $2$-R\'enyi entropy ($2$-SRE), for the state $\ket{S_{n,k}(\theta)}$~\cite{leone2022stabilizer, chen2024nonstabilizerness}. 
In Fig.~\ref{fig: experiment}(b), we vary $k$ from $0$ to $8$ with fixed $\theta=\pi/4$, and vary $\theta$ from $0$ to $\pi/2$ with fixed $k=4$. 
For the local Clifford ensemble, we compute the variance using Eq.~\eqref{eq: local clifford 4-moment}, which is given by $1.5^{n-k} \xi^{k}(\theta)-1$,
where $\tr^2[\rho\sigma]=1$,  $\tr[\cM_{\Cl_1}^{(4)}(O^{\ox 2})\ketbra{0}^{\ox 4}] = 1.5$, and 
\begin{align}
    \xi(\theta) := \tr[\cM_{\Cl_1}^{(4)}(O^{\ox 2})\ketbra{S_{1,1}(\theta)}^{\ox 4}]. \label{eq:single-qubit 4-moment}
\end{align}
For reference, we also plot the values of $M_2(8, k, \theta)$ in each subfigure. 
Our results reveal that, across all ensembles considered, the variance shows a strong positive correlation with $M_2(8, k, \theta)$, validating the trend established in Theorem~\ref{the:global clifford variance} for the global Clifford ensemble $\Cl_n$. 
More numerical simulation results can be found in Appendix~\ref{app:more experiments}.

\begin{table*}[!t]
\centering
\renewcommand{\arraystretch}{1.5} 
\setlength{\tabcolsep}{0pt} 
\setlength\heavyrulewidth{0.3ex}  
\begin{tabular}{C{0.25\linewidth} C{0.2\linewidth} C{0.2\linewidth}}
\toprule
& \multicolumn{2}{c}{\textbf{Average Sample Complexity}} \\ \cmidrule(l){2-3} 
\multirow{-2}{*}{\textbf{Unitary Ensemble}} & \textbf{$\rho=\sigma=\ketbra{\psi}$} & \textbf{$\rho=\ketbra{\psi}, \sigma = \ketbra{\phi}$} \\ \midrule
\textbf{unitary $2$-design $\cT_n$} & $\cO(\sqrt{2^n})$ & $\cO(\sqrt{2^n})$ \\
\textbf{local unitary $2$-design $\cT_1^{\ox n}$} & $\cO(\sqrt{2.5^n})$ & $\cO(\sqrt{2.5^n})$  \\
\textbf{brickwork $\cB_d$} & $\cO(\sqrt{2.18^n})$ & $\cO(\sqrt{2.18^n})$\\ 
\bottomrule
\end{tabular}
\caption{
The average sample complexity of DIPE with different unitary ensembles as a function of the number of qubits $n$,
where $\ket{\psi}$ and $\ket{\phi}$ are independent Haar random states.}%
\label{tab:comparison}
\end{table*}

\begin{table*}[!t]
\centering
\renewcommand{\arraystretch}{1.5} 
\setlength{\tabcolsep}{0pt} 
\setlength\heavyrulewidth{0.3ex}  
\begin{tabular}{C{0.2\linewidth} C{0.3\linewidth} C{0.25\linewidth} C{0.15\linewidth}}
\toprule
\textbf{Unitary Ensemble} & \textbf{State-dependent Variance} & \textbf{Worst-case Sample Complexity} & 
 \\ \midrule
    \textbf{unitary $4$-design $\cF_n$} 
&   $\displaystyle\cO\left(\frac{2^n}{m^2} + \frac{1}{m} + \frac{(1+\tr^2[\rho\sigma])^2}{2^n}\right)$ 
&   $\displaystyle\Theta(\sqrt{2^n})$ 
&   Anshu et al., STOC (2022) \\
    \textbf{global Clifford $\Cl_n$} 
&   $\displaystyle\cO\left(\frac{2^n}{m^2} + \frac{1}{m} + \frac{\norm{\Xi_{\rho,\sigma}}{2}^2}{2^n}\right)$ 
&   $\displaystyle\Theta(\sqrt{2^n})$  
&   This work \\
\bottomrule
\end{tabular}
\caption{%
The state-dependent variance of DIPE with different ensembles as a function of the number of qubits $n$, the number of shots $m$, and the quantum states $\rho$, $\sigma$.
Each worst-case sample complexity is obtained by maximizing the corresponding variance over all possible state pairs $(\rho,\sigma)$.%
}%
\label{tab:comparison variance}
\end{table*}

\section{Conclusions}
\label{sec:conclusions}
We presented the general requirements for DIPE, enabling the use of broader types of unitary ensembles to realize the protocol. 
Focusing on the average sample complexity, we showed that DIPE with the unitary $2$-design ensemble requires $\Theta(\sqrt{2^n})$ state copies on average, which is optimal.
We then extended our analysis to ensembles below unitary $2$-designs, as summarized in Table~\ref{tab:comparison}.
Specifically, we proved that DIPE with the local unitary 2-design requires $\mathcal{O}(\sqrt{2.5^n})$ copies on average, while the brickwork ensemble $\cB_d$ achieves $\mathcal{O}(\sqrt{2.18^n})$, which is independent of circuit depth.
To investigate the influence of depth, we developed a tensor network approach to compute the asymptotic state-dependent variance. 
We further analyzed the state-dependent sample complexity for the global and local Clifford ensembles.
For the global Clifford ensemble, DIPE requires $\Theta(\sqrt{2^n})$ copies for all $n$-qubit states $\rho$ and $\sigma$, achieving performance comparable to that of a unitary $4$-design.
In contrast, DIPE with the local Clifford ensemble requires $\mathcal{O}(\sqrt{4.5^n})$ copies for stabilizer product states.
We also showed that the nonstabilizerness of states enhances the performance of DIPE with the global and local Clifford ensembles. 
For DIPE with an $\varepsilon$-approximate unitary $4$-design ensemble, we constructed a biased estimator whose bias decreases exponentially with circuit depth and whose variance closely approximates that of an exact unitary $4$-design ensemble when using $\mathcal{O}(\sqrt{2^n})$ shots.
A summary of the proven state-dependent variances is in Table~\ref{tab:comparison variance}.

Many questions remain open.
For example, current DIPE and cross-platform verification protocols assume that both platforms implement the same unitary, 
an assumption that may not hold in practice due to hardware imperfections. 
This motivates the development of more robust protocols. 
As mentioned before, the worst-case sample complexity for DIPE with the local Clifford ensemble remains unknown.
One possible approach is to use tools from tomography with local Clifford ensembles~\cite{acharya2025pauli, grewal2026pauli}.
Additionally, it would also be interesting to explore the worst-case sample complexity of state learning tasks, including classical shadows and DIPE, with various unitary ensembles.

\emph{Note added.}
This work was submitted to \href{http://aqis-conf.org/2025/}{AQIS 2025} on April 25, 2025 and was selected for an oral presentation.
We became aware of a related work by Wu \emph{et al.}~\cite{wu2025state},
submitted to arXiv on June 2, 2025, during the final preparation of this manuscript for arXiv.
While their study explores DIPE with the local unitary $4$-design ensemble, 
our work focuses on a broader and distinct range of unitary ensembles.

\section*{Acknowledgments}
This work was supported by 
the National Natural Science Foundation of China (Grant Nos. 62471126),  the Jiangsu Key R\&D Program Project (Grant No. BE2023011-2), 
the SEU Innovation Capability Enhancement Plan for Doctoral Students (Grant No. CXJH\_SEU 24083), 
the National Key R\&D Program of China (Grant No. 2022YFF0712800), 
the Fundamental Research Funds for the Central Universities (Grant No. 2242022k60001), 
and the Big Data Computing Center of Southeast University.


%

\makeatletter
\newcommand{\appendixtitle}[1]{\gdef\@title{#1}}
\newcommand{\appendixauthor}[1]{\gdef\@author{#1}}
\newcommand{\appendixaffiliation}[1]{\gdef\@affiliation{#1}}
\newcommand{\appendixdate}[1]{\gdef\@date{#1}}
\makeatother

\makeatletter%
\newcommand{\appendixmaketitle}{%
\begin{center}%
\vspace{0.4in}%
{\Large \@title \par}%
\end{center}%
\par%
}%
\makeatother%

\setcounter{secnumdepth}{2}
\appendix
\widetext
\newpage

\appendixtitle{\bf 
Supplemental Material for\\``\thetitle''}
\appendixmaketitle
\vspace{0.2in}

In this Supplementary Material, we elaborate on details omitted from the main text, specifically:
\begin{itemize}
\item \textbf{Appendix~\ref{app:general framework for DIPE}}: We present the \emph{general framework for distributed inner product estimation (DIPE)}, including the full protocol and a detailed analysis of the state-dependent sample complexity.
\item \textbf{Appendix~\ref{app:average sample complexity}}: We analyze the average sample complexity by first relating it to a classical function, and then deriving analytic results for arbitrary global and local unitary $2$-design ensembles.
\item \textbf{Appendix~\ref{app:DIPE with random brickwork ensemble}}: We present the details on \emph{DIPE with the brickwork ensemble}.
\item \textbf{Appendix~\ref{app:clifford}}: We focus on \emph{DIPE with Clifford ensembles}, providing analysis of the state-dependent variances and sample complexities for both
local and global Clifford ensembles. 
\item \textbf{Appendix~\ref{app:approximate design}}: We discuss the performance guarantee of DIPE with \emph{$\varepsilon$-approximate $4$-design}, which can be constructed with $\cO(\log(n))$-depth circuits~\cite{schuster2025random}. 
\item \textbf{Appendix~\ref{app:more experiments}}: We show more numerical results. 
\item \textbf{Appendix~\ref{app:useful lemmas}}: We gather useful lemmas for our proof, concerning the properties of unitary designs, Clifford ensembles, and Haar random states. Some of these are from literature, while others are new and may be of independent interest.
\end{itemize}

\section{General framework for DIPE}
\label{app:general framework for DIPE}

\subsection{Classical Estimator}

Before introducing the general framework for DIPE, we first define a classical estimator. 
It is worth noting that our definition here generalizes the classical collision estimator presented in Ref.~\cite{anshu2022distributed}. 
We give a more general form of the classical estimator, which proves particularly useful for analyzing the sample complexity across various unitary ensembles.

\begin{definition}
\label{def:classical estimator}
Given samples $\bm{a}_1,\cdots, \bm{a}_m\sim p$ and $\bm{b}_1, \cdots, \bm{b}_m\sim q$ from two discrete distributions $p$ and $q$, respectively, 
a classical estimator is defined as 
\begin{align}
    \bX_m := \frac{1}{m^2}\sum_{i,j=1}^m f(\ba_i, \bb_j), 
    \label{eq:classical estimator}
\end{align}
where $f$ is a classical function.
\end{definition}

Now we analyze the expectation and variance of this estimator, 
which are crucial for understanding the requirements on the classical function $f$ as well as the sample complexity of DIPE.
The following lemma generalizes~\citep[Lemmas 15 and 16]{anshu2022distributed}, which focuses on the classical collision estimator, 
to more general classical estimators.

\begin{lemma}
\label{lem:expectation and variance of classical estimator}
The expectation of $\bX_m$ is given by 
\begin{align}
    \bE_{\ba,\bb} \bX_m = \sum_{\ba,\bb} p(\ba)q(\bb) f(\ba,\bb). 
\end{align}
The variance of $\bX_m$ is given by
\begin{align}
    \Var_{\ba,\bb}(\bX_m) 
    =& \frac{1}{m^2}\bE f^2(\ba,\bb)
    + \frac{m-1}{m^2}\left[\bE_{\ba,\bb} f(\ba,\bb) \left(\bE_{\ba'} f(\ba',\bb) + \bE_{\bb'} f(\ba,\bb')\right) \right]
    + \left[\frac{(m-1)^2}{m^2}-1\right]\left(\bE f(\ba,\bb)\right)^2. 
    \label{eq:variance of classical estimator}
\end{align}
\end{lemma}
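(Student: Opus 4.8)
The plan is to compute the first two moments of $\bX_m$ directly from its definition in Eq.~\eqref{eq:classical estimator}, exploiting the fact that $\{\ba_i\}_{i=1}^m$ and $\{\bb_j\}_{j=1}^m$ are two \emph{independent} collections of i.i.d.\ draws from $p$ and $q$. For the expectation I would simply invoke linearity: writing $\bE\bX_m = \frac{1}{m^2}\sum_{i,j}\bE f(\ba_i,\bb_j)$ and noting that $\ba_i\sim p$ and $\bb_j\sim q$ are independent for every pair $(i,j)$, each of the $m^2$ summands equals $\sum_{\ba,\bb}p(\ba)q(\bb)f(\ba,\bb)$, and the $1/m^2$ prefactor cancels against the $m^2$ terms.

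For the variance I would work from $\Var(\bX_m)=\bE[\bX_m^2]-(\bE\bX_m)^2$ and expand the square into the quadruple sum $\bX_m^2=\frac{1}{m^4}\sum_{i,j,k,l}f(\ba_i,\bb_j)f(\ba_k,\bb_l)$. The essential step is a case analysis on the coincidence pattern of the two row indices $(i,k)$ and the two column indices $(j,l)$, giving four regimes: (i) $i=k,\,j=l$; (ii) $i=k,\,j\neq l$; (iii) $i\neq k,\,j=l$; (iv) $i\neq k,\,j\neq l$. In each regime the conditional expectation factors over the independent coordinates — for example, in regime (ii) the shared draw $\ba_i$ is held fixed while $\bb_j,\bb_l$ are independent, producing $\bE_{\ba}\!\big[(\bE_{\bb}f(\ba,\bb))^2\big]$; regime (i) gives $\bE f^2(\ba,\bb)$ and regime (iv) gives $(\bE f)^2$. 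Multiplying by the number of index tuples in each regime, namely $m^2$, $m^2(m-1)$, $m^2(m-1)$, and $m^2(m-1)^2$ respectively (which correctly sum to $m^4$), and dividing by $m^4$ yields $\bE[\bX_m^2]$.

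Finally I would assemble the pieces. Rewriting the two ``mixed'' regimes via $\bE_{\ba}[(\bE_{\bb}f)^2]=\bE_{\ba,\bb}f(\ba,\bb)\,\bE_{\bb'}f(\ba,\bb')$ and $\bE_{\bb}[(\bE_{\ba}f)^2]=\bE_{\ba,\bb}f(\ba,\bb)\,\bE_{\ba'}f(\ba',\bb)$ recovers the stated cross-term with coefficient $\frac{m-1}{m^2}$. Subtracting $(\bE\bX_m)^2=(\bE f)^2$ from the regime-(iv) contribution $\frac{(m-1)^2}{m^2}(\bE f)^2$ produces the final $\big[\frac{(m-1)^2}{m^2}-1\big](\bE f)^2$ coefficient, matching the claim exactly.

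The only real obstacle is bookkeeping: one must count the index tuples in each regime correctly and track which variable is ``shared'' versus ``independent'' when factoring the conditional expectations, since a miscount of the $(m-1)$ versus $m$ multiplicities is precisely where the $\frac{m-1}{m^2}$ and $\frac{(m-1)^2}{m^2}-1$ coefficients would go astray. Everything else is a routine application of independence and linearity; no concentration inequality or design-theoretic input is required at this stage, as the lemma is a purely classical second-moment identity.
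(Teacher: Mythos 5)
Your proposal is correct and follows essentially the same route as the paper: the paper likewise reduces the variance to a case analysis over the coincidence pattern of the index pairs $(i,k)$ and $(j,l)$, merely packaged via the formula $\Var\bigl(\sum_i \bY_i\bigr)=\sum_i\Var(\bY_i)+\sum_{i\neq j}\mathrm{Cov}(\bY_i,\bY_j)$, under which your regime (iv) vanishes immediately as a covariance of independent variables rather than contributing $\frac{(m-1)^2}{m^2}(\bE f)^2$ to be offset by subtracting $(\bE\bX_m)^2$ at the end. Your multiplicity counts $m^2$, $m^2(m-1)$, $m^2(m-1)$, $m^2(m-1)^2$ and the rewriting of the mixed regimes as $\bE_{\ba,\bb}f(\ba,\bb)\,\bE_{\bb'}f(\ba,\bb')$ and $\bE_{\ba,\bb}f(\ba,\bb)\,\bE_{\ba'}f(\ba',\bb)$ match the paper's computation exactly.
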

\begin{proof}
First, we compute the expectation:
\begin{align}
    \bE_{\ba,\bb} \bX_m 
    = \frac{1}{m^2} \sum_{i,j=1}^m \bE f(\ba_i,\bb_j) 
    = \sum_{\ba,\bb} p(\ba)q(\bb) f(\ba,\bb).
\end{align}
Next, we compute the variance. 
In general, for the sum of $N$ random variables $\{\bY_i\}_{i=1}^N$, the variance is given by
\begin{align}
    \Var\left(\sum_{i=1}^N \bY_i\right) 
    = \sum_{i=1}^{N}\Var(\bY_i) + \sum_{i\neq j}{\rm Cov}(\bY_i, \bY_j), 
\end{align}
where ${\rm Cov}(\cdot,\cdot)$ is the covariance. 
Therefore, we have 
\begin{align}
    \Var_{\ba,\bb}(\bX_m) 
    &= \frac{1}{m^4} \sum_{i,j}\Var[f(\ba_i,\bb_j)] 
    + \frac{1}{m^4} \sum_{i\neq k, j=l} {\rm Cov}[f(\ba_i,\bb_j), f(\ba_k, \bb_j)] \notag \\
    &+ \frac{1}{m^4} \sum_{i=k, j\neq l} {\rm Cov}[f(\ba_i,\bb_j), f(\ba_i, \bb_l)] + 
    \frac{1}{m^4} \sum_{i\neq k, j\neq l} {\rm Cov}[f(\ba_i,\bb_j), f(\ba_k, \bb_l)] \\
    &= \frac{1}{m^2}\bE f^2(\ba,\bb)
    + \frac{m-1}{m^2}\left[\bE_{\ba,\bb} f(\ba,\bb) \left(\bE_{\ba'} f(\ba',\bb) + \bE_{\bb'} f(\ba,\bb')\right) \right]
    + \left[\left(\frac{m-1}{m}\right)^2-1\right]\left(\bE f(\ba,\bb)\right)^2. 
\end{align}
\end{proof}

\subsection{Requirement for the Classical Function: Proof and Examples}

We now present the detailed procedure of DIPE~\cite{anshu2022distributed}, which is summarized as follows. 

\begin{algorithm}[H]
\caption{Distributed Inner Product Estimation (DIPE)}
\begin{algorithmic}[1] \label{alg:general framework}
\REQUIRE A unitary ensemble $\cE=(\cU, \mu)$ \\
\hskip1.5em number of sampled unitaries $N$ \\
\hskip1.5em measurement shots $m$ \\
\hskip1.5em $Nm$ copies of unknown states $\rho$ and $\sigma$ on two platforms 
\ENSURE an estimator of $\omega := \tr[\rho\sigma]$
\FOR{$t = 1, \cdots, N$}
    \STATE sample a unitary $U\sim\cE$ according to $\mu$
    \STATE measure $m$ copies of $\rho$ in the basis $\{U^\dagger\ketbra{\bm{x}}U\}_{\bx}$ and obtain $A = \{\bm{a}_1, \cdots, \bm{a}_m\}$, 
                where $\{\ketbra{\bx}\}_{\bx}$ is the computation basis
    \STATE measure $m$ copies of $\sigma$ in the basis $\{U^\dagger\ketbra{\bm{x}}U\}_{\bx}$ and obtain $B = \{\bm{b}_1, \cdots, \bm{b}_m\}$ 
    \STATE compute the classical estimator defined in Eq.~\eqref{eq:classical estimator} using $A$ and $B$, denoted by $\bX_m^{(t)}$
\ENDFOR
\RETURN $\hat{\omega} = \sum_t \bX_m^{(t)} / N$
\end{algorithmic}
\end{algorithm}

To ensure that $\hat{\omega}$ is an \emph{unbiased estimator} of $\tr[\rho\sigma]$, 
the classical function $f$ must be carefully chosen.
Obviously, the classical function is highly related with the random unitary ensemble $\cE$. 
We prove Lemma~\ref{lem:requirement for classical function} in the main text as follows. 
\begin{proof}[Proof of Lemma~\ref{lem:requirement for classical function}]
We first compute the expectation of random variable: 
\begin{align}
    \bE\; \bX_m 
    &= \bE_{U\sim\cE} \frac{1}{m^2}\sum_{i,j=1}^m f(\bm{a}_i, \bm{b}_j) \\
    &= \bE_{U\sim\cE} \sum_{\ba,\bb} p_U(\ba)q_U(\bb) f(\ba,\bb) \\
    &= \tr\left[\cM_{\cE}^{(2)}(O)(\rho\ox\sigma)\right], 
\end{align}
where 
\begin{align}
    p_U(\ba) := \bra{\ba}U\rho U^\dagger\ket{\ba}, \quad 
    q_U(\bb) := \bra{\bb}U\sigma U^\dagger\ket{\bb}, \quad 
    O := \sum_{\ba,\bb} f(\ba,\bb) \ketbra{\ba\bb}. 
\end{align}
We find that $\bE_{U\sim\cE} \bX_m = \tr[\rho\sigma]$ hold for all $\rho,\sigma$ in $\cH_n$ if and only if
\begin{align}
    \cM_{\cE}^{(2)}(O) 
    = \frac{1}{2^n} \sum_{P\in\cP_n} P\ox P
    = \bigotimes_{i=1}^n \bS_i, 
\end{align}
where $\bS_i$ is the SWAP operator on $i$-th qubit of two states. 
In other words, we require the classical function satisfy 
\begin{align}
    \tr[\cM_\cE^{(2)}(O)(P\ox P')] = 
    \begin{cases}
        2^n, & P = P' \\
        0, &\text{otherwise}
    \end{cases},
\end{align}
for all $P,P'\in\cP_n$. 
\end{proof}

We then consider the Pauli-invariant ensemble. 

\begin{proof}[Proof of Lemma~\ref{lem:classical function requirement of Pauli-invariant ensemble}]
Recall Lemma~\ref{lem:requirement for classical function}, for $P,P'\in\cP_n$, if $\cE$ is a Pauli-invariant ensemble, we have 
\begin{align}
\tr[\cM_\cE(O)(P\ox P')] 
&= \bE_{U\sim\cE} \tr\left[\left(\sum_{\ba,\bb\in \bZ_2^{n}} f(\ba,\bb) U^{\dagger \ox 2} \ketbra{\ba\bb} U^{\ox 2}\right)(P\ox P')
\right] \\
&= \sum_{\ba,\bb\in \bZ_2^{n}} f(\ba,\bb) \bE_U \bra{\ba}U PU^\dagger\ket{\ba}
\bra{\bb}U P'U^\dagger\ket{\bb} \\
&= 2^n\sum_{\ba\in \bZ_2^{n}} f(\ba, \bm{0}) \bE_U \bra{\bm{0}}U PU^\dagger\ket{\bm{0}}\bra{\ba}U P'U^\dagger\ket{\ba}, 
\end{align}
where the last equality follows from the Pauli-invariance of the ensemble $\cE$, 
namely that $\mu(Z^{\ba} U) = \mu(U)$ for all $\ba \in \bZ_2^n$. 
This implies that the classical function $f(\ba, \bb)$ depends only on the bitwise XOR of $\ba$ and $\bb$:
\begin{align}
    f(\ba, \bb) = f(\ba\oplus\bb, \bm{0}). 
\end{align} 
\end{proof}
Based on this property of the Pauli-invariant ensemble $\cE$, we can also obtain the following results: 
\begin{align}
f_{\cE}(\ba, \bb) &= f_\cE(\bb, \ba), \quad \forall\; \ba,\bb, \\
\tr[\cM_{\cE}^{(2)}(O)] = 2^n \quad &\Rightarrow\quad 
\sum_{\ba} f_{\cE}(\ba, \bb) = 1,  \quad \forall\; \bb \label{eq:sum property of pauli-invariant classical function}
\end{align}

\subsubsection{Special Cases}

Then, we show two typical examples of the classical function, 
which are seminally constructed in~\cite{elben2020crossplatform}.

\begin{example}
If $\cE=\cT_n$ is a unitary $2$-design ensemble, we have 
\begin{align}
    f_{\cT_n}(\ba, \bb) 
    = \begin{cases}
        2^n, & \ba = \bb, \\
        -1, & \ba \neq \bb. 
    \end{cases}
\end{align}
\end{example}

\begin{proof}
With the definition of $f_{\cT_n}$, we have 
\begin{align}
    O &= \sum_{\ba,\bb} f_{\cT_n}(\ba,\bb) \ketbra{\ba\bb} 
    = (2^n + 1) \sum_{\ba} \ketbra{\ba\ba} - \1. 
\end{align}
Then, we have 
\begin{align}
    \cM_{\cT_n}^{(2)}(O) 
    = \frac{\tr[O] - \tr\left[(\bigotimes\bS) O\right]/2^n}{4^n-1} \1 + 
    \frac{\tr[(\bigotimes\bS) O] - \tr[O]/2^n}{4^n-1} \bigotimes_{i=1}^n \bS_i
    = \bigotimes_{i=1}^n \bS_i, 
\end{align}
where we use Lemma~\ref{lem:unitary 2-design expand}.
\end{proof}

\begin{example}
If $\cE=\cT_1^{\ox n}$ is a local unitary $2$-design, we have 
\begin{align}
    f_{\cT_1^{\ox n}}(\bm{a}, \bm{b}) = 2^n\cdot(-2)^{-\cD(\bm{a}, \bm{b})},
\end{align}
where $\cD(\bm{a}, \bm{b})$ is the Hamming distance between $\bm{a}$ and $\bm{b}$. 
\end{example}

\begin{proof}
With the definition of $\cT_1^{\ox n}$, we have 
\begin{align}
O &= \sum_{\ba,\bb}f_{\cT_1^{\ox n}}(\ba,\bb) \ketbra{\ba\bb} 
= 2^n\sum_{\ba,\bb} \bigotimes_{i=1}^{n} (-2)^{-\cD(\ba_i,\bb_i)}\ketbra{\ba_i\bb_i} \\
&= \bigotimes_{i=1}^n \left(2\ketbra{00} + 2\ketbra{11} - \ketbra{01} - \ketbra{10}\right) 
\end{align}
Then, with Lemma~\ref{lem:unitary 2-design expand}, we have 
\begin{align}
\cM_{\cT_1^{\ox n}}^{(2)}(O) 
&= \bigotimes_{i=1}^n \cM_{\cT_1}^{(2)}\left(2\ketbra{00} + 2\ketbra{11} - \ketbra{01} - \ketbra{10}\right)
= \bigotimes_{i=1}^n \bS_i.
\end{align}
\end{proof}

\subsection{Sample Complexity}

Now we consider the sample complexity of DIPE. 
By Chebyshev's inequality, we can only focus on the variance of random variable $\bX_m$. 
\begin{proof}[Proof of Lemma~\ref{lem:variance}]
With the law of total variance, we have 
\begin{align}
\Var_\cE(\bX_m)
&= \bE_{U\sim\cE} \Var_{\ba,\bb}[\bX_m|U] + 
\Var_{U\sim\cE}\left[\bE_{\ba,\bb}(\bX_m|U)\right].
\end{align}
With Eq.~\eqref{eq:variance of classical estimator} and
\begin{align}
\Var_{U\sim\cE} [\bE_{\ba,\bb}(\bX_m|U)] 
= \bE_{U\sim\cE} \left[\bE_{\ba,\bb} \bX_m\right]^2 - \tr^2[\rho\sigma], 
\end{align}
we have 
\begin{align}
\Var_\cE(\bX_m)
&= \frac{1}{m^2} \bE_{U\sim\cE, \ba,\bb} f^2(\ba,\bb) + \left(\frac{m-1}{m}\right)^2\bE_{U\sim\cE}(\bE_{\ba,\bb} \bX_m)^2 - \tr^2[\rho\sigma] \notag \\
&\quad + \frac{m-1}{m^2} \bE_{U\sim\cE} \left[\bE_{\ba,\bb} f(\ba,\bb) \left(\bE_{\ba'} f(\ba',\bb) + \bE_{\bb'} f(\ba,\bb')\right)\right] \\
&= \frac{1}{m^2} \tr\left[\cM_\cE^{(2)}(O^2) (\rho\ox\sigma)\right] + \left(\frac{m-1}{m}\right)^2\tr\left[\cM_\cE^{(4)}(O^{\ox 2}) (\rho\ox\sigma)^{\ox 2}\right] - \tr^2[\rho\sigma] \notag \\
&\quad + \frac{m-1}{m^2} \bE_{U\sim\cE} \left[\bE_{\ba,\bb} f(\ba,\bb) \left(\bE_{\ba'} f(\ba',\bb) + \bE_{\bb'} f(\ba,\bb')\right)\right] \\
&=: \sum_{i=1}^4 \Var_{\cE}^{(i)} (\rho,\sigma), \label{eq:variance appendix}
\end{align}
where 
\begin{align}
\Var_{\cE}^{(1)}(\rho,\sigma) 
&= - \tr^2[\rho\sigma], \\
\Var_{\cE}^{(2)}(\rho,\sigma) 
&= \frac{1}{m^2} \tr\left[\cM_\cE^{(2)}(O^2) (\rho\ox\sigma)\right], \label{eq:variance 2}\\
\Var_{\cE}^{(3)}(\rho,\sigma) 
&= \frac{m-1}{m^2} \bE_{U\sim\cE} \left[\bE_{\ba,\bb} f_\cE(\ba,\bb) \left(\bE_{\ba'} f_\cE(\ba',\bb) + \bE_{\bb'} f_\cE(\ba,\bb')\right)\right], \\
\Var_{\cE}^{(4)}(\rho,\sigma) 
&= \left(\frac{m-1}{m}\right)^2\tr\left[\cM_\cE^{(4)}(O^{\ox 2}) (\rho\ox\sigma)^{\ox 2}\right].  
\end{align}
\end{proof}

We find that the value of $\Var_{\cE}^{(k)}(\rho,\sigma)$ depends on the $k$-moment of the unitary ensemble $\cE$ for $k\geq 2$, leading to the following lemmas. 
Notably, here we consider the general classical estimator and obtain the same results shown in~\cite{anshu2022distributed}, where the authors consider the classical collision estimator. For completeness and later references, we also provide proofs here.

\begin{lemma}
\label{lem:variance 2 of unitary 2-design}
If the random unitary ensemble $\cE$ is a unitary $2$-design, 
\begin{align}
    \Var_\cE^{(2)}(\rho, \sigma) 
    = \frac{2^n + (2^n-1)\tr[\rho\sigma]}{m^2} 
    = \cO\left(\frac{2^n}{m^2}\right). 
\end{align}
\end{lemma}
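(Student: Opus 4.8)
The plan is to exploit the fact that for any unitary $2$-design $\cE$ the second-moment channel $\cM_\cE^{(2)}$ coincides with the Haar one, and the admissible classical function may be taken to be $f_{\cT_n}$ (the unbiasedness constraint in Lemma~\ref{lem:requirement for classical function} involves only the second moment, which matches Haar). Consequently $\Var_\cE^{(2)}$ reduces to an explicit twirl computation: starting from Eq.~\eqref{eq:variance 2}, $\Var_\cE^{(2)}(\rho,\sigma) = \tfrac{1}{m^2}\tr[\cM_\cE^{(2)}(O^2)(\rho\ox\sigma)]$ with $O$ built from $f_{\cT_n}$, so the whole task is to evaluate $\cM_\cE^{(2)}(O^2)$ and contract it against $\rho\ox\sigma$.

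First I would compute $O^2$. Since $O$ is diagonal in the computational basis with entries $f_{\cT_n}(\ba,\bb)$, squaring simply squares the entries. Writing $\Delta := \sum_{\ba}\ketbra{\ba\ba}$ and using $f_{\cT_n}^2 = 4^n$ on the diagonal $\ba=\bb$ and $1$ off-diagonal, this gives $O^2 = 4^n\Delta + (\1-\Delta) = (4^n-1)\Delta + \1$.

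Next I would apply the $2$-design twirl. Because $\cM_\cE^{(2)}$ projects onto the span of $\1$ and the register SWAP $\bS := \bigotimes_{i=1}^n \bS_i$, Lemma~\ref{lem:unitary 2-design expand} yields $\cM_\cE^{(2)}(O^2) = a\1 + b\bS$, with $a,b$ determined by $\tr[O^2]$ and $\tr[\bS O^2]$. Using $\tr[\Delta] = \tr[\bS\Delta] = 2^n$ (the SWAP fixes each $\ket{\ba\ba}$), together with $\tr[\1] = 4^n$ and $\tr[\bS] = 2^n$, I would obtain $\tr[O^2] = 8^n + 4^n - 2^n$ and $\tr[\bS O^2] = 8^n$. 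Substituting into the Weingarten coefficients and cancelling the common factor $(4^n-1)$—noting $(2^n-1)(4^n-1) = 8^n - 4^n - 2^n + 1$—gives $a = 2^n$ and $b = 2^n-1$.

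Finally I would contract against $\rho\ox\sigma$ using $\tr[\rho\ox\sigma] = 1$ and the swap identity $\tr[\bS(\rho\ox\sigma)] = \tr[\rho\sigma]$, which produces $\Var_\cE^{(2)}(\rho,\sigma) = \tfrac{1}{m^2}\bigl(2^n + (2^n-1)\tr[\rho\sigma]\bigr)$; since $\tr[\rho\sigma]\le 1$ this is $\cO(2^n/m^2)$. There is no conceptual obstacle here: the argument is a standard second-moment twirl, and the only part I would double-check is the bookkeeping of the two scalar traces $\tr[O^2]$ and $\tr[\bS O^2]$ and the correct cancellation of the $(4^n-1)$ denominator in the coefficients.
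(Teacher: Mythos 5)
Your proof is correct, and it follows the same skeleton as the paper's: both start from $\Var_\cE^{(2)}(\rho,\sigma)=\tfrac{1}{m^2}\tr[\cM_\cE^{(2)}(O^2)(\rho\ox\sigma)]$ with $f=f_{\cT_n}$ and both use the identical decomposition $O^2=(4^n-1)\Delta+\1$ with $\Delta=\sum_{\ba}\ketbra{\ba\ba}$. Where you diverge is in how the twirl is evaluated. The paper keeps the expectation at the level of matrix elements: it writes the $\Delta$-contribution as $\sum_{\ba}\bE_U\bra{\ba}U\rho U^\dagger\ket{\ba}\bra{\ba}U\sigma U^\dagger\ket{\ba}$ and invokes the diagonal second-moment formula (Lemma~\ref{lem:properties of unitary design}), giving $\tfrac{1+\tr[\rho\sigma]}{2^n(2^n+1)}$ per basis state, summed over $2^n$ values of $\ba$; the identity part contributes $1/m^2$ directly. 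You instead apply the channel-level projector formula (Lemma~\ref{lem:unitary 2-design expand}) to compute $\cM_\cE^{(2)}(O^2)=a\1+b\bS$ in closed form and then contract once against $\rho\ox\sigma$. Your bookkeeping checks out: $\tr[O^2]=(4^n-1)2^n+4^n=8^n+4^n-2^n$, $\tr[\bS O^2]=(4^n-1)2^n+2^n=8^n$ (using $\bS\ket{\ba\ba}=\ket{\ba\ba}$ and $\tr[\bS]=2^n$), and the Weingarten coefficients simplify via $\tfrac{8^n-2^n}{4^n-1}=2^n$ and $\tfrac{8^n-4^n-2^n+1}{4^n-1}=2^n-1$, yielding exactly $\tfrac{2^n+(2^n-1)\tr[\rho\sigma]}{m^2}$. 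The two routes are computationally equivalent, but yours has the small bonus of producing the reusable closed form $\cM_\cE^{(2)}(O^2)=2^n\1+(2^n-1)\bS$, while the paper's route avoids computing the scalar traces $\tr[O^2]$, $\tr[\bS O^2]$ by recycling the same moment lemma it needs elsewhere (e.g., in the $3$-design analysis of $\Var^{(3)}_\cE$). Your preliminary remark that the unbiasedness constraint fixes $f=f_{\cT_n}$ for any $2$-design is also consistent with how the paper sets up the lemma.
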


\begin{proof}[Proof of Lemma~\ref{lem:variance 2 of unitary 2-design}]
If the random unitary ensemble $\cE$ is a unitary $2$-design, we have 
$f = f_{\cT_n}$ and 
\begin{align}
    O^2 
    = \sum_{\ba,\bb} f^2(\ba,\bb)\ketbra{\ba\bb} 
    = (4^n-1)\sum_{\ba} \ketbra{\ba\ba} + \1. 
\end{align}
Therefore, we have 
\begin{align}
    \Var_\cE^{(2)}(\rho, \sigma) 
    &= \frac{1}{m^2} \tr\left[\cM_\cE^{2}(O^2) (\rho\ox\sigma)\right] \\
    &= \frac{1}{m^2} \bE_{U\sim\cE} \tr\left[(4^n-1)\sum_{\ba} 
    U^{\dagger\ox 2} \ketbra{\ba\ba}U^{\ox 2} (\rho\ox\sigma)\right] 
    + \frac{1}{m^2}\\
    &= \frac{1}{m^2} \frac{4^n-1}{2^n(2^n+1)}\cdot 2^n (1+\tr[\rho\sigma]) + \frac{1}{m^2} \\
    &= \frac{2^n + (2^n-1)\tr[\rho\sigma]}{m^2} 
    = \cO\left(\frac{2^n}{m^2}\right). \tag*{Lemma~\ref{lem:properties of unitary design}}
\end{align}
\end{proof}

\begin{lemma}[Lemma 16 of~\cite{anshu2022distributed}]
\label{lem:variance 3 of unitary 3-design}
If the random unitary ensemble $\cE$ is a unitary $3$-design, 
\begin{align}
    \Var_\cE^{(3)}(\rho, \sigma) = \cO\left(\frac{1}{m}\right). 
\end{align}
\end{lemma}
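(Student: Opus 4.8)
The plan is to exploit that a unitary $3$-design is in particular a unitary $2$-design, so that $f_\cE \equiv f_{\cT_n}$ takes the explicit form $f(\ba,\bb) = (2^n+1)\delta_{\ba,\bb} - 1$. The two summands inside $\Var_\cE^{(3)}$ are exchanged under swapping the roles of $\rho$ and $\sigma$, so it suffices to bound a single quantity,
\begin{align}
\Phi := \bE_{U\sim\cE}\, \bE_{\ba,\ba'\sim p_U,\,\bb\sim q_U}\, \left[f(\ba,\bb)\,f(\ba',\bb)\right], \notag
\end{align}
where $p_U(\ba) = \bra{\ba}U\rho U^\dagger\ket{\ba}$ and $q_U(\bb)=\bra{\bb}U\sigma U^\dagger\ket{\bb}$, and then conclude $\Var_\cE^{(3)} = \frac{m-1}{m^2}\cdot \cO(1) = \cO(1/m)$, since $\frac{m-1}{m^2}\le 1/m$.

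First I would substitute the explicit form of $f$ and expand the product, yielding
\begin{align}
\Phi = (2^n+1)^2\,\bE_U\!\sum_{\bb} p_U(\bb)^2 q_U(\bb) - 2(2^n+1)\,\bE_U\!\sum_{\bb} p_U(\bb) q_U(\bb) + 1, \notag
\end{align}
where the first term is a third-moment quantity and the (doubled) cross term is a second-moment quantity. Each term is $\cO(1)$ only after the exponentially large prefactors are shown to cancel the moment normalizations, so the whole point is to track powers of $2^n$ exactly.

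Next I would evaluate the two moment averages using the $3$-design property. Since each $\ket{\bb}^{\ox k}$ lies in the symmetric subspace, the Haar (hence $3$-design) twirl gives $\bE_U U^{\dagger\ox k}\ketbra{\bb}^{\ox k} U^{\ox k} = \Pi_{\mathrm{sym}}^{(k)}/\binom{2^n+k-1}{k}$, independent of $\bb$, so the sum over the $2^n$ basis outcomes contributes a single factor of $2^n$. Writing $\Pi_{\mathrm{sym}}^{(k)} = \frac{1}{k!}\sum_{\pi\in S_k} W_\pi$ and tracing against $\rho\ox\rho\ox\sigma$ (for $k=3$) or $\rho\ox\sigma$ (for $k=2$) produces bounded overlaps: one gets $1 + \tr[\rho^2] + 2\tr[\rho\sigma] + 2\tr[\rho^2\sigma]$ and $1 + \tr[\rho\sigma]$ respectively, each $\cO(1)$ for density operators. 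Collecting powers, the third-moment term scales as $4^n\cdot 2^n\cdot\Theta(2^{-3n}) = \cO(1)$ and the cross term as $2^n\cdot 2^n\cdot\Theta(2^{-2n}) = \cO(1)$, so $\Phi = \cO(1)$.

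The main obstacle is precisely this bookkeeping of powers of $2^n$: the estimator deliberately carries the exponentially large weight $(2^n+1)^2 \sim 4^n$, and the argument only closes because it is exactly compensated by the $\Theta(2^{-3n})$ normalization $\binom{2^n+2}{3}^{-1}$ of the degree-$3$ symmetric projector together with the lone factor $2^n$ from summing over computational-basis outcomes. One must verify this cancellation at the level of leading order (not merely up to constants) and confirm that every surviving overlap factor---$\tr[\rho^2]$, $\tr[\rho\sigma]$, $\tr[\rho^2\sigma]$---is at most $1$, which holds since $\rho,\sigma$ are states. The identical normalization-versus-prefactor matching controls the cross term, and the $\ba'\leftrightarrow\bb'$ symmetry then promotes the bound on $\Phi$ to the full $\cO(1/m)$ estimate for $\Var_\cE^{(3)}$.
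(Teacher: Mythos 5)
Your proposal is correct and follows essentially the same route as the paper: both substitute the explicit $2$-design function $f_{\cT_n}(\ba,\bb)=(2^n+1)\delta_{\ba,\bb}-1$, reduce to the single quantity $\Phi$ via the $\rho\leftrightarrow\sigma$ symmetry of the two summands, expand into a third-moment term with prefactor $(2^n+1)^2$ plus lower-moment corrections, and close by the power count $4^n\cdot 2^n\cdot\Theta(2^{-3n})=\cO(1)$. Your symmetric-subspace projector computation is simply a re-derivation of the paper's Lemma~\ref{lem:properties of unitary design} (Lemma 22 of~\cite{anshu2022distributed}), and your direct expansion $\Phi=(2^n+1)^2T_3-2(2^n+1)T_2+1$ agrees with the paper's iterated expansion (indeed it cleanly avoids a dropped $(2^n+1)$ coefficient in the paper's intermediate line, which is immaterial to the $\cO(1)$ conclusion).
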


\begin{proof}[Proof of Lemma~\ref{lem:variance 3 of unitary 3-design}]
If the random unitary ensemble $\cE$ is a unitary $3$-design, we have 
$f = f_{\cT_n}$ and 
\begin{align}
    \Var_\cE^{(3)}(\rho, \sigma)
    &= \frac{m-1}{m^2} \bE_{U\sim\cE} \left[\bE_{\ba,\bb,\ba'} f(\ba,\bb) f(\ba',\bb) + \bE_{\ba,\bb,\bb'} f(\ba,\bb) f(\ba,\bb') \right]. 
\end{align}
We consider one term first, 
\begin{align}
    \bE_{U\sim\cE} \bE_{\ba,\bb,\ba'} f(\ba,\bb) f(\ba',\bb) 
    &= \bE_{U\sim\cE} \sum_{\ba,\bb} \bra{\ba}U\rho U^\dagger\ket{\ba} 
    \bra{\bb}U\sigma U^\dagger\ket{\bb} f(\ba,\bb) \sum_{\ba'} \bra{\ba'}U\rho U^\dagger\ket{\ba'} f(\ba',\bb) \\
    &= \bE_{U\sim\cE} \sum_{\ba,\bb} \bra{\ba}U\rho U^\dagger\ket{\ba} 
    \bra{\bb}U\sigma U^\dagger\ket{\bb} f(\ba,\bb) 
    \left[(2^n+1) \bra{\bb}U\rho U^\dagger\ket{\bb} - 1\right] \\
    &= (2^n+1) \bE_{U\sim\cE} \sum_{\ba,\bb} \bra{\ba}U\rho U^\dagger\ket{\ba} \bra{\bb}U\sigma U^\dagger\ket{\bb} \bra{\bb}U\rho U^\dagger\ket{\bb} f(\ba,\bb)  - \tr[\rho\sigma] \\
    &= (2^n+1)^2\sum_{\ba}\bE_{U\sim\cE} \bra{\ba}U\rho U^\dagger\ket{\ba}^2 
    \bra{\ba}U\sigma U^\dagger\ket{\ba} \notag \\
    &- \sum_{\ba,\bb} \bE_{U\sim\cE} \bra{\ba}U\rho U^\dagger\ket{\ba} 
    \bra{\bb}U\sigma U^\dagger\ket{\bb} \bra{\bb}U\rho U^\dagger\ket{\bb} - \tr[\rho\sigma] \\
    &= (2^n+1)^2\sum_{\ba}\bE \bra{\ba}U\rho U^\dagger\ket{\ba}^2 
    \bra{\ba}U\sigma U^\dagger\ket{\ba} - \sum_{\bb} \bE \bra{\bb}U\sigma U^\dagger\ket{\bb} \bra{\bb}U\rho U^\dagger\ket{\bb} - \tr[\rho\sigma] \\
    &= \cO(1), 
\end{align}
where the last line use Lemma~\ref{lem:properties of unitary design}. 
Likewise, 
\begin{align}
    \bE_{U\sim\cE} \bE_{\ba,\bb,\bb'} f(\ba,\bb) f(\ba,\bb')
    = \cO(1). 
\end{align}
\end{proof}

\begin{lemma}[Lemma 16 of~\cite{anshu2022distributed}]
\label{lem:variance 4 of unitary 4-design}
If the random unitary ensemble $\cE$ is a unitary $4$-design, 
\begin{align}
    \Var_\cE^{(1)}(\rho, \sigma) + \Var_\cE^{(4)}(\rho, \sigma) 
    = \cO\left(\frac{1}{2^n}\right). 
\end{align}
\end{lemma}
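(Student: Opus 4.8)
The plan is to isolate the only genuinely new object, the four-fold twirl, and to show that after subtracting $\tr^2[\rho\sigma]$ it contributes at order $2^{-n}$. Recall from Eq.~\eqref{eq:variance appendix} that $\Var_\cE^{(1)}+\Var_\cE^{(4)} = \frac{(m-1)^2}{m^2}\,T - \tr^2[\rho\sigma]$, where $T := \tr[\cM_\cE^{(4)}(O^{\ox2})(\rho\ox\sigma)^{\ox2}]$. Writing $g(U):=\bE_{\ba,\bb}[\bX_m\mid U]=\tr[(U^\dagger)^{\ox2}OU^{\ox2}(\rho\ox\sigma)]$ for the infinite-shot conditional mean, one has $T=\bE_{U\sim\cE}[g(U)^2]\ge 0$ and, by unbiasedness, $\bE_{U\sim\cE}[g(U)]=\tr[\rho\sigma]$, so $T-\tr^2[\rho\sigma]=\Var_{U\sim\cE}[g(U)]$ is the variance of the ideal estimator across the ensemble. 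Since $\frac{(m-1)^2}{m^2}\le 1$ and $T\ge0$, the combination is bounded above by $\Var_{U\sim\cE}[g(U)]$; the missing $m$-dependent piece only lowers the quantity and is in any case dominated by the $\cO(1/m)$ term already present in $\Var_\cE^{(3)}$. Hence it suffices to prove $\Var_{U\sim\cE}[g(U)]=\cO(1/2^n)$.

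Next I would pass to the collision form. As a $4$-design, $\cE$ is in particular a $2$-design, so $O=(2^n+1)\sum_{\ba}\ketbra{\ba\ba}-\1$ and therefore $g(U)=(2^n+1)C(U)-1$ with the cross-collision probability $C(U):=\sum_{\ba}p_U(\ba)q_U(\ba)$, $p_U(\ba)=\bra{\ba}U\rho U^\dagger\ket{\ba}$, $q_U(\ba)=\bra{\ba}U\sigma U^\dagger\ket{\ba}$. Consequently $\Var_{U\sim\cE}[g(U)]=(2^n+1)^2\,\Var_{U\sim\cE}[C(U)]$, and the target becomes $\Var_{U\sim\cE}[C(U)]=\cO(2^{-3n})$, so that the $(2^n+1)^2$ prefactor leaves exactly $\cO(2^{-n})$.

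The computation of the two moments of $C(U)$ is where the design degree enters. The first moment follows from the $2$-design identity $\bE_U(U^\dagger\ketbra{\ba}U)^{\ox2}=(\1+\bS)/[2^n(2^n+1)]$, giving $\bE_U[C(U)]=(1+\tr[\rho\sigma])/(2^n+1)$. For the second moment I would write $\bE_U[C(U)^2]=\tr[\cM_\cE^{(4)}(R)\,(\rho\ox\sigma\ox\rho\ox\sigma)]$ with $R:=\Delta_{12}\ox\Delta_{34}$ and $\Delta:=\sum_{\ba}\ketbra{\ba}\ox\ketbra{\ba}$, and then expand the Haar twirl by the Weingarten formula $\cM_\cE^{(4)}(R)=\sum_{\pi,\tau\in S_4}{\rm Wg}(\pi^{-1}\tau,2^n)\,\tr[W_\pi R]\,W_\tau$, where $W_\pi$ permutes the four $n$-qubit registers. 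The structural simplification that makes this tractable is that $\tr[W_\pi R]=2^{2n}$ exactly for the four block-preserving permutations $\{e,(12),(34),(12)(34)\}$ and $\tr[W_\pi R]=2^{n}$ for the remaining twenty. Combining these values with ${\rm Wg}(e,2^n)=2^{-4n}(1+\cO(2^{-2n}))$ and with $\tr[W_\tau(\rho\ox\sigma\ox\rho\ox\sigma)]$ (which is $1$, $\tr[\rho\sigma]$, or a higher $\rho,\sigma$-trace according to the cycle type of $\tau$), the leading contribution to $\bE_U[C(U)^2]$ comes from the four block-preserving diagonal pairs $\pi=\tau$ and equals $(1+\tr[\rho\sigma])^2/2^{2n}$.

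The main obstacle, and the only step requiring real care, is the cancellation at order $2^{-2n}$: the leading part of $\bE_U[C(U)^2]$ just computed coincides with $\bE_U[C(U)]^2=(1+\tr[\rho\sigma])^2/(2^n+1)^2$ to order $2^{-2n}$, so that these dominant pieces drop out of the variance and $\Var_{U\sim\cE}[C(U)]$ is genuinely suppressed to $\cO(2^{-3n})$ (the surviving $2^{-3n}$ terms come from the subleading Weingarten weight of a single transposition paired with a block-preserving $\pi$, and from the block-breaking permutations with $\tr[W_\pi R]=2^n$). Verifying this cancellation is essential, since without it the $(2^n+1)^2$ prefactor would produce an $\cO(2^{n})$ blow-up instead of the claimed $\cO(2^{-n})$. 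To keep the $S_4$ bookkeeping light I would specialize to pure $\rho=\ketbra{\psi}$, $\sigma=\ketbra{\phi}$---the worst case identified in~\cite{anshu2022distributed}---for which every $\tr[W_\tau(\rho\ox\sigma)^{\ox2}]$ collapses to a power of $\tr[\rho\sigma]$, with the general mixed-state case following because all purities and cross-traces are bounded by $1$.
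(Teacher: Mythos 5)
Your proposal is correct, and it reaches the bound by a genuinely different organization of the same fourth-moment computation. The paper expands $O^{\ox 2}=\left[(2^n+1)\Lambda_2-\1\right]^{\ox 2}$ with $\Lambda_1=\sum_{\ba,\bb}\ketbra{\ba\ba\bb\bb}$ and $\Lambda_2=\sum_{\ba}\ketbra{\ba\ba}$, evaluates the cross term through the $2$-moment channel, invokes its Lemma~\ref{lem:property of unitary 4-design} (whose proof uses the exact diagonal $4$-design formula of Lemma~\ref{lem:properties of unitary design} together with a bound imported from Eq.~(194) of \cite{anshu2022distributed}), and then cancels $\Var^{(1)}_\cE=-\tr^2[\rho\sigma]$ via the algebraic identity $(1+t)^2-2(1+t)+1=t^2$ with $t=\tr[\rho\sigma]$. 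You instead note $\tfrac{(m-1)^2}{m^2}T-t^2\leq T-t^2=\Var_{U\sim\cE}[g(U)]$ (the law-of-total-variance viewpoint already implicit in Eq.~\eqref{eq:variance appendix}), reduce to the collision probability $g(U)=(2^n+1)C(U)-1$, and compute $\bE_U[C]$ by the $2$-design identity and $\bE_U[C^2]$ by Weingarten calculus; your trace values $\tr[W_\pi R]=2^{2n}$ for the four block-preserving permutations and $2^n$ for the remaining twenty are exactly right, and the cancellation the paper engineers through the $\Lambda_2$ cross term appears automatically in $\bE[C^2]-\bE[C]^2$, leaving $\Var_U[C]=\cO(2^{-3n})$ and hence $\Var_U[g]=(2^n+1)^2\Var_U[C]=\cO(2^{-n})$ (indeed the exact residual is $(1+t)^2/[2^n(2^n+1)^2]+\cO(2^{-3n})$, matching the paper). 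What each route buys: yours is self-contained (it re-derives the content of Lemma~\ref{lem:property of unitary 4-design} rather than citing it) and makes the crucial order-$2^{-2n}$ cancellation structurally transparent, at the cost of Weingarten bookkeeping; the paper's is shorter because it leans on exact design-moment formulas and prior work. Two minor remarks. First, your purity reduction is unnecessary: for block-preserving $\pi$ the traces $\tr[W_\pi(\rho\ox\sigma\ox\rho\ox\sigma)]$ equal $1$, $t$, $t$, $t^2$ exactly for arbitrary mixed states, and every other $\tr[W_\tau(\cdot)]$ is bounded by $1$ in magnitude, so the argument is uniform without it. Second, your handling of the $m$-dependent prefactor matches the paper's implicit convention: for small $m$ the exact quantity contains $-\tfrac{2m-1}{m^2}t^2$, negative with magnitude up to $\cO(1/m)$, so the stated $\cO(2^{-n})$ is really a one-sided upper bound; the paper silently absorbs this term while you flag it explicitly, and since the lemma is only used as an upper bound in the Chebyshev argument, nothing breaks in either treatment.
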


\begin{proof}[Proof of Lemma~\ref{lem:variance 4 of unitary 4-design}]
If the random unitary ensemble $\cE$ is a unitary $4$-design, we have 
$f = f_{\cT_n}$ and 
\begin{align}
    O^{\ox 2} 
    &= \left[(2^n+1)\sum_{\ba}\ketbra{\ba\ba} - \1\right]^{\ox 2}.
\end{align}
Then, we have 
\begin{align}
    \Var_\cE^{(4)}(\rho, \sigma) 
    =& \left(\frac{m-1}{m}\right)^2\tr\left[\cM_\cE^{(4)}(O^{\ox 2}) (\rho\ox\sigma)^{\ox 2}\right]\\
    =& \left(\frac{m-1}{m}\right)^2 \left[(2^n+1)^2 \tr\left[\cM_{\cE}^{(4)}(\Lambda_1)(\rho\ox\sigma)^{\ox 2}\right]
    - 2(2^n+1) \tr\left[\cM_{\cE}^{(2)}(\Lambda_2)(\rho\ox\sigma)\right] + 1\right], 
\end{align}
where $\Lambda_1 := \sum_{\ba,\bb} \ketbra{\ba\ba\bb\bb}$ and $\Lambda_2 := \sum_{\ba} \ketbra{\ba\ba}$.
Then, we compute the above terms one by one. 
With Lemma~\ref{lem:property of unitary 4-design}, we have 
\begin{align}
    \tr\left[\cM_{\cE}^{(4)}(\Lambda_1)(\rho\ox\sigma)^{\ox 2}\right]
    = \frac{(1+\tr[\rho\sigma])^2}{2^n(2^n+1)} + \cO(2^{-3n})
\end{align}
and with Lemma~\ref{lem:properties of unitary design}, 
\begin{align}
    \tr\left[\cM_{\cE}^{(2)}(\Lambda_2)(\rho\ox\sigma)\right]
    = \frac{1}{2^n+1}\left(1+\tr[\rho\sigma]\right). 
\end{align}
Therefore, we have 
\begin{align}
    \Var_\cE^{(1)}(\rho, \sigma) + \Var_\cE^{(4)}(\rho, \sigma) 
    &= \left(\frac{m-1}{m}\right)^2 \left[ 
    \cO(2^{-n}) + \left(1+\frac{1}{2^n}\right) (1 + \tr[\rho\sigma])^2 - 2(1+\tr[\rho\sigma]) + 1
    \right] - \tr^2[\rho\sigma] \\
    &= \left(\frac{m-1}{m}\right)^2 \left[\cO(2^{-n}) + \tr^2[\rho\sigma] + \frac{1}{2^n}(1+\tr[\rho\sigma])^2\right] - \tr^2[\rho\sigma]
    = \cO(2^{-n}). 
\end{align}
\end{proof}

Then, we can propose the sample complexity of DIPE with the unitary $4$-design ensemble with the above lemmas. 
This theorem has been proven in~\cite{anshu2022distributed}, we give it here for completeness and for better comparison.

\begin{theorem}[Theorem 13 of~\cite{anshu2022distributed}]
\label{the:sample complexity of unitary 4-design}
For unknown states $\rho,\sigma$ in $\cH_n$ and the unitary $4$-design ensemble $\cF_n$, the sample complexity of DIPE with $\cF_n$ is $Nm = \Theta(\sqrt{2^n})$. 
\end{theorem}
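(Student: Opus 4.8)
The plan is to assemble the state-independent variance bound from the three moment lemmas already in hand, and then balance the number of shots $m$ against the number of repetitions $N$ so as to minimize the total copy count $Nm$. Since the $4$-design $\cF_n$ is simultaneously a unitary $2$-, $3$-, and $4$-design, Lemmas~\ref{lem:variance 2 of unitary 2-design},~\ref{lem:variance 3 of unitary 3-design}, and~\ref{lem:variance 4 of unitary 4-design} all apply directly. Summing the four contributions in Eq.~\eqref{eq:variance appendix}---namely $\Var_{\cF_n}^{(1)}+\Var_{\cF_n}^{(4)}=\cO(1/2^n)$, $\Var_{\cF_n}^{(2)}=\cO(2^n/m^2)$, and $\Var_{\cF_n}^{(3)}=\cO(1/m)$---I would obtain the uniform (worst-case) bound
\begin{align}
\Var_{\cF_n}(\bX_m)=\cO\left(\frac{2^n}{m^2}+\frac{1}{m}+\frac{1}{2^n}\right).
\end{align}

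Next I would feed this into Chebyshev's inequality, Eq.~\eqref{eq:sample complexity}, which guarantees accuracy $\varepsilon$ and confidence $\delta$ once $N\geq\Var_{\cF_n}(\bX_m)/(\delta\varepsilon^2)$. Since each platform consumes $Nm$ copies, substitution gives
\begin{align}
Nm\geq\frac{m\,\Var_{\cF_n}(\bX_m)}{\delta\varepsilon^2}=\frac{1}{\delta\varepsilon^2}\,\cO\left(\frac{2^n}{m}+1+\frac{m}{2^n}\right).
\end{align}
The key observation is that choosing $m=\Theta(\sqrt{2^n})$ pushes the dominant variance term $2^n/m^2$ down to $\cO(1)$, whereupon the subleading terms $1/m$ and $1/2^n$ are automatically $o(1)$; thus $\Var_{\cF_n}(\bX_m)=\cO(1)$ and a constant $N=\cO(1)$ (for fixed $\varepsilon,\delta$) suffices. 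This yields the upper bound $Nm=\cO(\sqrt{2^n})$.

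To confirm that this choice is optimal among all admissible pairs $(N,m)$, I would separate two regimes. For $m=\cO(\sqrt{2^n})$ the constraint $N\geq\Var_{\cF_n}(\bX_m)/(\delta\varepsilon^2)$ is binding and $Nm$ decreases in $m$ (driven by the $2^n/m$ term); for larger $m$ the variance has already dropped below $\delta\varepsilon^2$, so $N=1$ is forced and $Nm=m$ increases in $m$. The two regimes meet at $m=\Theta(\sqrt{2^n})$ with $N=\cO(1)$, pinning the minimizer to $Nm=\Theta(\sqrt{2^n})$. The matching lower bound $Nm=\Omega(\sqrt{2^n})$ is proven in~\cite{anshu2022distributed}, which together with the upper bound gives $Nm=\Theta(\sqrt{2^n})$.

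The hard part is \emph{not} the optimization, which is an elementary balancing of the $2^n/m$ and $m/2^n$ terms. Rather, the one subtlety is recognizing that the feasibility constraint $N\geq 1$---not the naive unconstrained stationary point at $m=2^n$---is what fixes the optimum at $m=\Theta(\sqrt{2^n})$. All the genuinely heavy lifting, evaluating the second-, third-, and fourth-moment channels of $\cF_n$ via the Weingarten-type calculus for unitary designs (Lemmas~\ref{lem:properties of unitary design} and~\ref{lem:property of unitary 4-design}), has already been discharged in the three preceding variance lemmas, so here it only remains to combine them and optimize.
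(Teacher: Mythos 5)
Your proposal is correct and follows essentially the same route as the paper: it combines Lemmas~\ref{lem:variance 2 of unitary 2-design},~\ref{lem:variance 3 of unitary 3-design}, and~\ref{lem:variance 4 of unitary 4-design} into the bound $\Var_{\cF_n}(\bX_m)=\cO(2^n/m^2+1/m+1/2^n)$, applies Chebyshev, and balances $m$ against $N$ under the constraints $Nm^2\geq 2^n/(\delta\varepsilon^2)$ and $N\geq 1$. Your two-regime optimization is just a rephrasing of the paper's inequality chain $Nm\geq\sqrt{N}\sqrt{2^n/(\delta\varepsilon^2)}\geq\max\{1/(\delta\varepsilon^2),\sqrt{2^n/(\delta\varepsilon^2)}\}$, where the feasibility constraint $N\geq 1$ plays exactly the role you identified, and both arguments defer the protocol-independent lower bound to~\cite{anshu2022distributed}.
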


\begin{proof}[Proof of Theorem~\ref{the:sample complexity of unitary 4-design}]
For any $\varepsilon\in(0, 1)$ and $\delta\in(0,1)$, 
from the Lemma~\ref{lem:variance 2 of unitary 2-design}, Lemma~\ref{lem:variance 3 of unitary 3-design}, and Lemma~\ref{lem:variance 4 of unitary 4-design}, 
it is necessary and sufficient to have
\begin{align}
    &N \geq \frac{1}{\delta\varepsilon^2} \frac{2^n}{m^2}, \quad 
    N \geq \frac{1}{\delta\varepsilon^2} \frac{1}{m}, \quad 
    N \geq \max\left\{\frac{1}{\delta\varepsilon^2} \frac{1}{2^n}, 1\right\}, \\
    \Rightarrow \quad 
    &Nm^2 \geq \frac{2^n}{\delta\varepsilon^2}, \quad 
    Nm \geq \frac{1}{\delta\varepsilon^2}, \quad 
    N \geq \max\left\{\frac{1}{\delta\varepsilon^2 2^n}, 1\right\}. 
\end{align}
Here we ignored constants. 
Therefore, we have 
\begin{align}
    N m 
    \geq N \frac{1}{\sqrt{N}} \sqrt{\frac{2^n}{\delta\varepsilon^2}}
    \geq \max\left\{\frac{1}{\sqrt{\delta\varepsilon^2 2^n}}, 1\right\} 
    \sqrt{\frac{2^n}{\delta\varepsilon^2}}
    = \max\left\{\frac{1}{\delta\varepsilon^2}, \sqrt{\frac{2^n}{\delta\varepsilon^2}}\right\}, 
\end{align}
where the first inequality follows from $Nm^2 \geq 2^n/(\delta\varepsilon^2)$.
Focusing on scalability, we have $Nm = \Theta(\sqrt{2^n})$. 
\end{proof}

\section{Average Sample Complexity}
\label{app:average sample complexity}

Here we discuss the average sample complexity for the following two cases:
\begin{enumerate}
\item $\rho=\sigma = \ketbra{\psi}$, where $\ket{\psi}$ is a Haar random states. 
\item $\rho = \ketbra{\psi}$ and $\sigma = \ketbra{\phi}$, where $\ket{\psi}$ and $\ket{\phi}$ are two independent Haar random states. 
\end{enumerate}
We first consider the sample complexity of DIPE with \emph{Pauli-invariant ensemble} $\cE$ under these two average cases. 

\subsection{Average Variance}

We first prove the sample complexity under average case 1 as follows. 

\begin{proof}[Proof of Theorem~\ref{the:average sample complexity} \textbf{(Average Case 1)}]
We compute the analytic expressions for each of the four variance terms individually.
For the first term of the variance, we have
\begin{align}
\bE_{\psi} \Var_{\cE}^{(1)}(\ketbra{\psi}, \ketbra{\psi}) 
= - \bE_{\psi} \vert\langle\psi\vert\psi\rangle\vert^4
= - 1. 
\end{align}

For the second term of the variance, we have 
\begin{align}
\bE_{\psi} \tr\left[\cM_{\cE}^{(2)} (O^2) \ketbra{\psi}^{\ox 2}\right] 
&= \frac{1}{2^n(2^n+1)} \tr\left[\cM_{\cE}^{(2)} (O^2)\left(\1 + \bigotimes \bS\right)\right] \tag*{Lemma~\ref{lem:haar random states}} \\
&= \frac{1}{2^n(2^n+1)} \left[\sum_{\ba,\bb}f^2_\cE(\ba,\bb) 
+ \sum_{\ba}f_\cE^2(\ba,\ba) \right], \\
\Rightarrow\quad 
\bE_{\psi} \Var_{\cE}^{(2)} (\ketbra{\psi}, \ketbra{\psi}) 
&= \frac{1}{(2^n+1)m^2} \left[f^2_\cE(\bm{0}, \bm{0}) + \norm{f_{\cE}}{2}^2 \right] = \cO\left(\frac{\norm{f_{\cE}}{2}^2}{2^n m^2} \right),  
\tag*{Eq.~\eqref{eq:sum property of pauli-invariant classical function}} 
\end{align}
where 
\begin{align}
\norm{f_{\cE}}{2}^2 := \sum_{\ba}f^2_\cE(\ba,\bm{0}). 
\end{align}

For the third term of the variance, we first have
\begin{align}
&\bE_\psi \bE_{U\sim\cE} \bE_{\ba,\bb,\ba'} f_\cE(\ba,\bb) f_\cE(\ba',\bb) \notag \\
=& \bE_\psi \bE_{U\sim\cE} \sum_{\ba,\bb,\ba'} f_\cE(\ba,\bb) f_\cE(\ba',\bb) \bra{\ba\bb\ba'} U^{\ox 3} \ketbra{\psi}^{\ox 3} 
U^{\dagger\ox 3}\ket{\ba\bb\ba'} \\
=& \sum_{\ba,\bb,\ba'} \frac{f_\cE(\ba,\bb) f_\cE(\ba', \bb)}{2^n(2^n+1)(2^n+2)} 
\left(1 + \delta_{\ba,\bb} + \delta_{\ba,\ba'} + \delta_{\ba',\bb} + 2\delta_{\ba,\bb} \delta_{\ba, \ba'}\right) \tag*{Lemma~\ref{lem:haar random states}} \\
=& \frac{1}{{2^n(2^n+1)(2^n+2)}} \left[\sum_{\ba,\bb,\ba'} f_\cE(\ba,\bb) f_\cE(\ba', \bb) + \sum_{\ba,\bb}f^2_\cE(\ba,\bb) + 2\sum_{\ba,\bb}f_\cE(\ba, \ba) f_{\cE}(\ba,\bb) + 2\sum_{\ba}f^2_\cE (\ba,\ba)\right] \\
=& \frac{1}{{(2^n+1)(2^n+2)}} \left[1 + \norm{f_{\cE}}{2}^2 + 
2 f_\cE(\bm{0}, \bm{0}) + 2 f^2_\cE(\bm{0}, \bm{0})\right]. 
\tag*{Eq.~\eqref{eq:sum property of pauli-invariant classical function}}
\end{align}
Thus, we have
\begin{align}
\bE_{\psi} \Var_{\cE}^{(3)} (\ketbra{\psi}, \ketbra{\psi}) 
&= \frac{2(m-1)}{(2^n+1)(2^n+2)m^2}  \left[1 + \norm{f_{\cE}}{2}^2 + 2 f_\cE(\bm{0}, \bm{0}) + 2 f^2_\cE(\bm{0}, \bm{0})\right] 
= \cO\left(\frac{\norm{f_{\cE}}{2}^2}{4^n m}\right). 
\end{align}

Lastly, for the fourth term of the variance, we have
\begin{align}
&\bE_\psi \tr\left[\cM_{\cE}^{(4)}(O^{\ox 2}) \ketbra{\psi}^{\ox 4}\right] \notag \\
=& \sum_{\ba,\bb,\ba',\bb'} \frac{f_\cE(\ba,\bb) f_\cE(\ba',\bb')}{2^n(2^n+1)(2^n+2)(2^n+3)}  \left(1 + \delta_{\ba,\bb} + \delta_{\ba,\ba'} + \delta_{\ba,\bb'} + \delta_{\bb,\ba'} + \delta_{\bb,\bb'} + \delta_{\ba',\bb'} \right. \notag \\ 
&\left. + 2\delta_{\bb,\ba'}\delta_{\bb,\bb'} + 2\delta_{\ba,\ba'}\delta_{\ba,\bb'} + 
2\delta_{\ba,\bb}\delta_{\ba,\bb'} + 2\delta_{\ba,\bb}\delta_{\ba,\ba'} + 
\delta_{\ba,\bb} \delta_{\ba',\bb'} + \delta_{\ba,\ba'} \delta_{\bb,\bb'} + \delta_{\ba,\bb'}\delta_{\bb, \ba'} + 6 \delta_{\ba,\bb} \delta_{\ba,\ba'} \delta_{\ba,\bb'}\right)
\tag*{Lemma~\ref{lem:haar random states}} \\
=& \frac{1}{(2^n+1)(2^n+2)(2^n+3)} 
\left[2^n + 2^{n+1} f_\cE(\bm{0}, \bm{0}) + 4 + 8 f_\cE(\bm{0}, \bm{0}) + 2^n f_\cE^2(\bm{0}, \bm{0}) + 2\norm{f_{\cE}}{2}^2 + 6 f_\cE^2(\bm{0}, \bm{0})\right] \tag*{Eq.~\eqref{eq:sum property of pauli-invariant classical function}}
\end{align}
Consequently, we have 
\begin{align}
\Var_{\cE}^{(4)} (\ketbra{\psi}, \ketbra{\psi}) 
= \cO\left(\frac{f^2_{\cE}(\bm{0}, \bm{0})}{4^n} + \frac{\norm{f_{\cE}}{2}^2}{8^n}\right). 
\end{align}
Therefore, we have 
\begin{align}
\bE_\psi \Var_\cE(\bX_m) 
= \cO\left(\frac{\norm{f_{\cE}}{2}^2}{2^n m^2} + \frac{\norm{f_{\cE}}{2}^2}{4^n m} + \frac{f^2_{\cE}(\bm{0}, \bm{0})}{4^n} + \frac{\norm{f_{\cE}}{2}^2}{8^n} - 1\right). 
\end{align}
\end{proof}

We now consider the average case 2. 

\begin{proof}[Proof of Theorem~\ref{the:average sample complexity} \textbf{(Average Case 2)}]
We compute the analytic expressions for each of the four variance terms individually.
For the first term of the variance, we have
\begin{align}
\bE_{\psi, \phi} \Var_{\cE}^{(1)} (\ketbra{\psi}, \ketbra{\phi}) 
= - \bE_{\psi}\bE_{\phi} \tr\left[\ketbra{\psi}^{\ox 2} 
\ketbra{\phi}^{\ox 2}\right] 
= - \bE_{\psi} \frac{2}{2^n(2^n+1)}
= - \frac{1}{2^{n-1}(2^n+1)} 
= \cO\left(-\frac{1}{4^n}\right),
\end{align}
where the second equality follows from Lemma~\ref{lem:haar random states}.

For the second term of the variance, we have 
\begin{align}
\bE_{\psi,\phi} \tr\left[\cM_{\cE}^{(2)} (O^2) \ketbra{\psi}\ox \ketbra{\phi}\right] 
&= \frac{1}{4^n} \tr\left[\cM_{\cE}^{(2)} (O^2)\right] 
= \frac{1}{4^n} \sum_{\ba,\bb}f^2_\cE(\ba,\bb), \tag*{Lemma~\ref{lem:haar random states}} \\
\Rightarrow\quad 
\bE_{\psi,\phi} \Var_{\cE}^{(2)} (\ketbra{\psi}, \ketbra{\phi}) 
&= \frac{1}{2^n m^2} \sum_{\ba}f^2_\cE(\ba,\bm{0}) 
= \cO\left(\frac{\norm{f_\cE}{2}^2}{2^n m^2} \right), \tag*{Eq.~\eqref{eq:sum property of pauli-invariant classical function}}
\end{align}

For the third term of the variance, we first have 
\begin{align}
\bE_{\psi, \phi} \bE_{U\sim\cE} \bE_{\ba,\bb,\ba'} f_\cE(\ba,\bb) f_\cE(\ba',\bb) \notag 
=& \bE_{\psi, \phi} \bE_{U\sim\cE} \sum_{\ba,\bb,\ba'} f_\cE(\ba,\bb) f_\cE(\ba',\bb) \bra{\ba\ba'\bb} U^{\ox 3} \ketbra{\psi}^{\ox 2} \ox \ketbra{\phi} U^{\dagger\ox 3} \ket{\ba\ba'\bb} \\
=& \sum_{\ba,\bb,\ba'} \frac{f_\cE(\ba,\bb) f_\cE(\ba', \bb)}{4^n(2^n+1)} 
\left(1 + \delta_{\ba,\ba'} \right) \tag*{Lemma~\ref{lem:haar random states}} \\
=& \frac{1}{4^n(2^n+1)} \left[\sum_{\ba,\bb,\ba'} f_\cE(\ba,\bb) f_\cE(\ba', \bb) + \sum_{\ba,\bb}f^2_\cE(\ba,\bb)\right] \\
=& \frac{1}{2^n(2^n+1)} \left[1 + \norm{f_{\cE}}{2}^2\right]. 
\tag*{Eq.~\eqref{eq:sum property of pauli-invariant classical function}}
\end{align}
Thus, we have 
\begin{align}
\bE_{\psi,\phi} \Var_{\cE}^{(3)} (\ketbra{\psi}, \ketbra{\phi}) 
&= \frac{2(m-1)}{2^n(2^n+1) m^2}  \left[1 + \norm{f_{\cE}}{2}^2\right] 
= \cO\left(\frac{\norm{f_{\cE}}{2}^2}{4^n m}\right). 
\end{align}

Lastly, for the fourth term of the variance, we have
\begin{align}
\bE_{\psi,\phi} \tr\left[\cM_{\cE}^{(4)}(O^{\ox 2}) (\ketbra{\psi}\ox\ketbra{\phi})^{\ox 2}\right] 
=& \sum_{\ba,\bb,\ba',\bb'} \frac{f_\cE(\ba,\bb) f_\cE(\ba',\bb')}{4^n(2^n+1)^2}  \left(1 + \delta_{\ba,\ba'} + \delta_{\bb,\bb'} + \delta_{\ba,\ba'} \delta_{\bb, \bb'}\right)
\tag*{Lemma~\ref{lem:haar random states}} \\
=& \frac{1}{2^n(2^n+1)^2}
\left[2^n + 2 + \norm{f_\cE}{2}^2\right]
\end{align}
Therefore, we have 
\begin{align}
\bE_{\psi,\phi} \Var_\cE(\bX_m) 
= \cO\left(\frac{\norm{f_{\cE}}{2}^2}{2^n m^2} + \frac{\norm{f_{\cE}}{2}^2}{4^n m} + \frac{\norm{f_{\cE}}{2}^2}{8^n}\right). 
\end{align}
\end{proof}

\subsection{Unitary 2-design Ensemble}

\begin{proof}[Proof of Lemma~\ref{lem:average variance of 2-design}]
With the definition of the classical function, it holds for a unitary $2$-design ensemble $\cT_n$ that 
\begin{align}
\norm{f_{\cT_n}}{2}^2 = 4^n + 2^n - 1, \quad 
f_{\cT_n}(\bm{0}, \bm{0}) = 2^n. 
\end{align}
Substituting the expressions into Theorem~\ref{the:average sample complexity} for average case 1, we obtain 
\begin{align}
\Var^{a}_{\cT_n,1} := \bE_\psi \Var_{\cT_n}(\bX_m) 
\approx \frac{2^n + 2}{m^2} + \frac{2}{m} + \frac{1}{2^{n-1}} = \cO\left(\frac{2^n}{m^2} + \frac{1}{m} + \frac{1}{2^n}\right). 
\end{align}
Substituting the expressions into Theorem~\ref{the:average sample complexity} for average case 2, we obtain 
\begin{align}
\Var^{a}_{\cT_n,2} := \bE_{\psi,\phi} \Var_{\cT_n}(\bX_m)
\approx \frac{2^n + 1}{m^2} + \frac{2}{m} + \frac{1}{2^{n-1}} = \cO\left(\frac{2^n}{m^2} + \frac{1}{m} + \frac{1}{2^n}\right). 
\end{align}

We then prove the corresponding average sample complexity as follows. 
For any $\varepsilon\in(0, 1)$ and $\delta\in(0,1)$, 
it is necessary and sufficient to have
\begin{align}
    &N \geq \frac{1}{\delta\varepsilon^2} \frac{2^n}{m^2}, \quad 
    N \geq \frac{1}{\delta\varepsilon^2} \frac{1}{m}, \quad 
    N \geq \max\left\{\frac{1}{\delta\varepsilon^2 2^n}, 1\right\}, \\
    \Rightarrow \quad 
    &Nm^2 \geq \frac{2^n}{\delta\varepsilon^2}, \quad 
    Nm \geq \frac{1}{\delta\varepsilon^2}, \quad 
    N \geq \max\left\{\frac{1}{\delta\varepsilon^2 2^n}, 1\right\}. 
\end{align}
Here we ignored constants. 
Therefore, we have 
\begin{align}
    N m 
    \geq N \frac{1}{\sqrt{N}} \sqrt{\frac{2^n}{\delta\varepsilon^2}}
    \geq \max\left\{\sqrt{\frac{1}{\delta\varepsilon^2 2^n}}, 1\right\} 
    \sqrt{\frac{2^n}{\delta\varepsilon^2}}
    = \max\left\{\frac{1}{\delta\varepsilon^2}, \sqrt{\frac{2^n}{\delta\varepsilon^2}}\right\}, 
\end{align}
where the first inequality follows from $Nm^2 \geq 2^n/(\delta\varepsilon^2)$.
Focusing on scalability, we have $Nm = \cO(\sqrt{2^n})$. 
\end{proof}

\subsection{Local Unitary 2-design Ensemble}

\begin{proof}[Proof of Lemma~\ref{lem:average variance of local 2-design}]
With the definition of the classical function, it holds for a local unitary $2$-design ensemble $\cT_1^{\ox n}$ that  
\begin{align}
\norm{f_{\cT_1^{\ox n}}}{2}^2 = 4^n\left(1 + \frac{1}{4}\right)^n = 5^n, \quad 
f_{\cT_1^{\ox n}}(\bm{0}, \bm{0}) = 2^n. 
\end{align}
Substituting the expressions into Theorem~\ref{the:average sample complexity} for average case 1, we obtain 
\begin{align}
\Var^{a}_{\cT_1^{\ox n},1} := \bE_\psi \Var_{\cT_1^{\ox n}}(\bX_m) 
\approx \frac{2.5^n + 2^n}{m^2} + \frac{2\cdot 1.25^n}{m} + 2\cdot 0.675^n = \cO\left(\frac{2.5^n}{m^2} 
+ \frac{1.25^n}{m} + 0.675^n\right). 
\end{align}
Substituting the expressions into Theorem~\ref{the:average sample complexity} for average case 2, we obtain 
\begin{align}
\Var^{a}_{\cT_1^{\ox n},2} := \bE_{\psi,\phi} \Var_{\cT_1^{\ox n}}(\bX_m)
\approx \frac{2.5^n}{m^2} + \frac{2\cdot 1.25^n}{m} + 2\cdot 0.675^n = \cO\left(\frac{2.5^n}{m^2} 
+ \frac{1.25^n}{m} + 0.675^n\right). 
\end{align}

We then prove the corresponding average sample complexity as follows. 
For any $\varepsilon\in(0, 1)$ and $\delta\in(0,1)$, 
it is necessary and sufficient to have
\begin{align}
    &N \geq \frac{1}{\delta\varepsilon^2} \frac{2.5^n}{m^2}, \quad 
    N \geq \frac{1}{\delta\varepsilon^2} \frac{1.25^n}{m}, \quad 
    N \geq \max\left\{\frac{0.675^n}{\delta\varepsilon^2}, 1\right\}, \\
    \Rightarrow \quad 
    &Nm^2 \geq \frac{2.5^n}{\delta\varepsilon^2}, \quad 
    Nm \geq \frac{1.25^n}{\delta\varepsilon^2}, \quad 
    N \geq \max\left\{\frac{0.675^n}{\delta\varepsilon^2}, 1\right\}. 
\end{align}
Here we ignored constants. 
Therefore, we have 
\begin{align}
    N m 
    \geq N \frac{1}{\sqrt{N}} \sqrt{\frac{2.5^n}{\delta\varepsilon^2}}
    \geq \max\left\{\sqrt{\frac{0.675^n}{\delta\varepsilon^2}}, 1\right\} 
    \sqrt{\frac{2.5^n}{\delta\varepsilon^2}}
    = \max\left\{\frac{1.56^n}{\delta\varepsilon^2}, \sqrt{\frac{2.5^n}{\delta\varepsilon^2}}\right\}, 
\end{align}
where the first inequality follows from $Nm^2 \geq 2.5^n/(\delta\varepsilon^2)$.
Focusing on scalability, we have $Nm = \cO(\sqrt{2.5^n})$. 
\end{proof}

\section{Proof of DIPE with Brickwork Ensemble}
\label{app:DIPE with random brickwork ensemble}

\subsection{Classical Function}
\begin{proof}[Proof of Lemma~\ref{lem:classical function of brikckwork}]
Here we prove the classical function $f_d$. 
Based on Lemma~\ref{lem:requirement for classical function}, for $P,P'\in\cP_n$, we have 
\begin{align}
\tr[\cM_d(O)(P\ox P')] 
&= \bE_{U\sim\cB_d} \tr\left[\left(\sum_{\ba,\bb\in \bZ_2^{n}} f_d(\ba,\bb) U^{\dagger \ox 2} \ketbra{\ba\bb} U^{\ox 2}\right)(P\ox P')
\right] \\
&= \bE_{U\sim\cB_d} \tr\left[\left(\sum_{\ba,\bb\in \bZ_2^{n}} f_d(\ba,\bb) U_{d}^{\dagger \ox 2} \ketbra{\ba\bb} U_d^{\ox 2}\right)(WP W^\dagger\ox W P' W^\dagger)
\right], 
\end{align}
where $U_d$ is the last layer of the circuits and $W$ is the former $d-1$ layers. 
Based on the structure of the last layer, we can only focus on each two-local Clifford gates and have 
\begin{align}
4 \sum_{\ba,\bb\in \bZ_2^{2}} (-2)^{-2\delta_{\ba,\bb}} 
\bE_{V\sim\Cl_2} V^{\dagger\ox 2} \ketbra{\ba\bb} V^{\ox 2} 
= \bigotimes_{i=1}^{2} \bS. 
\end{align}
Thus, each two-local Clifford gate constructs SWAP operators acting on two qubits of two states. 
Therefore, we have 
\begin{align}
\tr[\cM_d(O)(P\ox P')] 
&= \bE_{U\sim\cE} \tr\left[\left(\bigotimes_{i=1}^n \bS\right) (W P W^\dagger\ox W P' W^\dagger)
\right] \\
&= \tr\left[W P W^\dagger W P' W^\dagger\right] \\
&= \tr[PP'] = \begin{cases}
    2^n, & P = P', \\ 
    0, & P\neq P'.
\end{cases}
\end{align}
\end{proof}

\subsection{Average Sample Complexity}

Here, we discuss the average performance guarantee of DIPE with $\cB_d$. 

\begin{proof}[Proof of Lemma~\ref{lem:average variance of the brickwork}] 
With the classical function defined in Lemma~\ref{lem:classical function of brikckwork}, we have $f_d(\bm{0}, \bm{0}) = 2^n$ and 
\begin{align}
\norm{f_d}{2}^2 
= \sum_{\ba} f_d^2(\ba, \bm{0}) 
= 4^n\sum_{\ba}\prod_{s\in S} 16^{-\delta_{\ba_s, \bm{0}}} 
= 4^n\left(1 + \frac{3}{16}\right)^{n/2} = \sqrt{19^n} \approx 4.36^n.
\end{align}
Substituting the expressions into Theorem~\ref{the:average sample complexity} for average case 1, we obtain 
\begin{align}
\Var^{a}_{\cB_d,1} := \bE_\psi \Var_\cE(\bX_m) 
&\approx \frac{4^n + 4.36^n}{2^n m^2} + \frac{2(4.36^n + 2\cdot 4^n + 2\cdot 2^n + 1)}{4^n m} + 2\cdot 0.54^n \\
&\approx \frac{2^n + 2.18^n}{m^2} + \frac{2\cdot 1.09^n}{m} + 2\cdot 0.54^n 
= \cO\left(\frac{2.18^n}{m^2} + \frac{1.09^n}{m} + 0.54^n\right).
\end{align}
Substituting the expressions into Theorem~\ref{the:average sample complexity} for average case 2, we obtain 
\begin{align}
\Var^{a}_{\cB_d,2} := \bE_{\psi,\phi} \Var_\cE(\bX_m) 
&\approx \frac{2.18^n}{m^2} + \frac{2\cdot 1.09^n}{m} + 2\cdot 0.54^n
= \cO\left(\frac{2.18^n}{m^2} + \frac{1.09^n}{m} + 0.54^n\right). 
\end{align}

We then consider the average sample complexity. 
For any $\varepsilon\in(0, 1)$ and $\delta\in(0,1)$, 
it is necessary and sufficient to have
\begin{align}
    &N \geq \frac{1}{\delta\varepsilon^2} \frac{2.18^n}{m^2}, \quad 
    N \geq \frac{1}{\delta\varepsilon^2} \frac{1.09^n}{m}, \quad 
    N \geq \max\left\{\frac{0.54^n}{\delta\varepsilon^2}, 1\right\}, \\
    \Rightarrow \quad 
    &Nm^2 \geq \frac{2.18^n}{\delta\varepsilon^2}, \quad 
    Nm \geq \frac{1.09^n}{\delta\varepsilon^2}, \quad 
    N \geq \max\left\{\frac{0.54^n}{\delta\varepsilon^2}, 1\right\}. 
\end{align}
Here we ignored constants. 
Therefore, we have 
\begin{align}
    N m 
    \geq N \frac{1}{\sqrt{N}} \sqrt{\frac{2.18^n}{\delta\varepsilon^2}}
    \geq \max\left\{\sqrt{\frac{0.54^n}{\delta\varepsilon^2}}, 1\right\} 
    \sqrt{\frac{2.18^n}{\delta\varepsilon^2}}
    = \max\left\{\frac{1.17^n}{\delta\varepsilon^2}, \sqrt{\frac{2.18^n}{\delta\varepsilon^2}}\right\}, 
\end{align}
where the first inequality follows from $Nm^2 \geq 2.18^n/(\delta\varepsilon^2)$.
Focusing on scalability, we have $Nm = \cO(\sqrt{2.18^n})$. 
\end{proof}

\subsection{Asymptotic State-dependent Variance}

Here, we consider the asymptotic state-dependent variance.
With the definition of the second term in Eq.~\eqref{eq:variance 2}, 
we have 
\begin{align}
\Var_{\cB_d}^{(2)} (\rho, \sigma)
&= \frac{1}{m^2} \tr\left[\cM_{\cB_d}^{(2)}\left(\sum_{\ba,\bb\in\bZ_2^n} f_d^2(\ba, \bb)\ketbra{\ba\bb}\right)(\rho\ox\sigma)\right] \\
&= \frac{1}{4^n m^2} \sum_{\ba,\bb\in\bZ_2^n}f_d^2(\ba, \bb) \sum_{P\in\cP_n} \Xi_{\rho,\sigma}(P) \bE_{U\sim\cB_d} \bra{\ba}UPU^\dagger\ket{\ba} \bra{\bb}UPU^\dagger\ket{\bb}, \\
&= \frac{1}{2^n m^2}\sum_{P\in\cP_n} \Xi_{\rho, \sigma}(P) \sum_{\ba\in\bZ_2^n} f_d^2(\ba, \bm{0}) h(\ba, P) 
= \frac{1}{2^n m^2}\sum_{P\in\cP_n} \Xi_{\rho, \sigma}(P) \Upsilon_d(P), 
\end{align}
where we use the property of Pauli-invariant ensemble and $\Xi_{\rho,\sigma}(P) = \tr[P\rho]\tr[P\sigma]$. 
Note that $f_{d}^2(\ba,\bm{0})$ can be represented as a matrix product state (MPS) with $n/2$ tensors $F$, which are $[16, 1, 1,1]$. 

Then, to efficiently compute $h(\ba, P)$, we introduce a \emph{matrix product operator (MPO)} representation of $h(\ba, P)$, as illustrated in Fig.~\ref{fig: MPO representation of coefficient matrix}. 
This construction is based on the physical interpretation of $h(\ba, P)$, as follows, 
\begin{equation}
\begin{split}
    h(\ba, P) 
    =&\bE_U \bra{\bm{0}}U P U^\dagger\ket{\bm{0}} \bra{\ba}U PU^\dagger\ket{\ba} \\
    =& \bE_U \bra{\bm{0}}U P U^\dagger\ket{\bm{0}}\bra{\bm{0}}X^{\ba}UPU^\dagger X^{\ba}\ket{\bm{0}} \\
    =& \Pr_U \left\{U P U^\dagger\in \pm \cZ
    \;\&\; [U P U^\dagger, X^{\ba}] = 0 \right\} - \Pr_U \left\{U P U^\dagger\in \pm \cZ
    \;\&\; \{U P U^\dagger, X^{\ba}\} = 0\right\} \\
    =& \Pr\{UPU^\dagger\in\cZ^{\rm C}_{\ba}\} 
    - \Pr\{UPU^\dagger\in\cZ^{\rm A}_{\ba}\}. 
    \label{eq:physical meaning of h(a,P)} 
\end{split}
\end{equation}

\begin{figure}[!htbp]
\centering
\includegraphics[width=0.9\linewidth]{./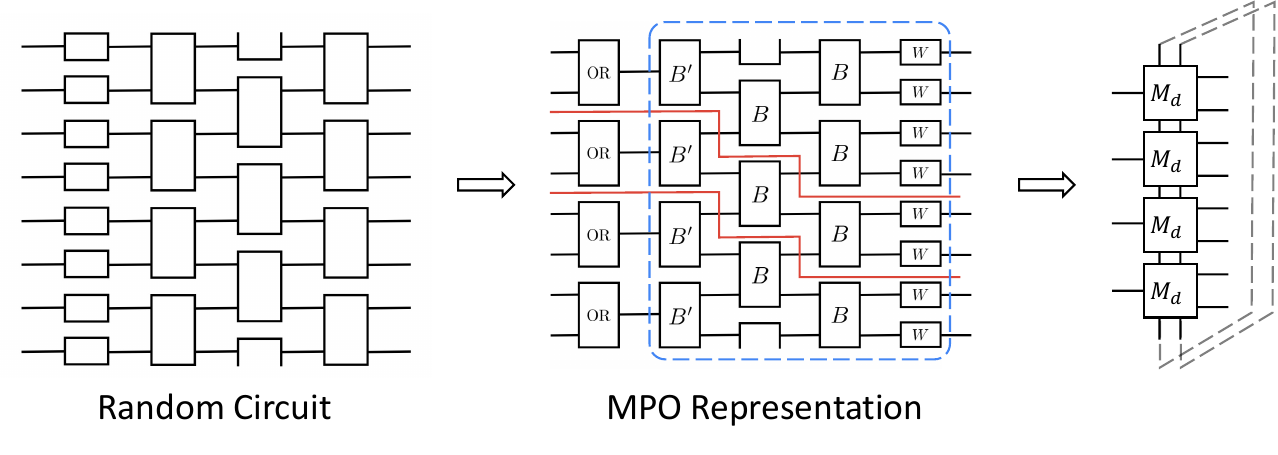}
\caption{
The MPO representation of $h(\ba, P)$.
For each $M_d$, the physical dimension of input leg is $2$, the physical dimension of output leg is $4$, and the bound dimension is $2^{d-1}$.
Therefore, each $M_d$ has $2^{d-1}\times 2^{d-1}$ matrices in $\bR^{4\times 2}$.}
\label{fig: MPO representation of coefficient matrix}
\end{figure}

\begin{proof}[Proof of Lemma~\ref{lem:MPO representation}]
This proof extends Lemma 5 from~\cite{bertoni2024shallow}. We begin by briefly reviewing their approach.

Define $q: \cP_n \to \bZ_2^n$ as the \emph{signature} of the Pauli operator $P$,  
\begin{align}
    [q(P)]_i = \begin{cases}
        0, & P_i = I \\
        1, & \text{otherwise}, 
    \end{cases}
    \label{eq:signature of P}
\end{align}
The effect of each two-qubit Clifford gate can be represented by the matrix
\begin{align}
    B := \left[\begin{matrix}
        1 & 0 & 0 & 0 \\
        0 & 0.2 & 0.2 & 0.2 \\
        0 & 0.2 & 0.2 & 0.2 \\
        0 & 0.6 & 0.6 & 0.6
    \end{matrix}\right]. 
\end{align}
The physical interpretation of $B$ is as follows:
\begin{itemize}
    \item If the input is $00$, the output is deterministically $00$.
    \item Otherwise, the outputs are $01$, $10$, or $11$ with probabilities $0.2$, $0.2$, and $0.6$, respectively.
\end{itemize}

Applying $d$ layers of such $B$ matrices forms a tensor network, with input legs labeled by $q(P)\in \bZ_2^n$ and output legs by $\bm{\gamma}\in \bZ_2^n$. The resulting tensor evaluates the probability
\begin{align}
    \Pr\left\{q(UPU^\dagger)=\bm{\gamma}\right\}. 
\end{align}
We then multiply this by the conditional probability
\begin{align}
    \Pr\left\{UPU^\dagger\in\pm\cZ\;\vert\; q(UPU^\dagger)=\bm{\gamma}\right\},
\end{align}
and sum over $\bm{\gamma}$ to obtain:
\begin{align}
    h(\bm{0},P)
    =\Pr\left\{UPU^\dagger\in\pm\cZ\right\}
    =\sum_{\bm{\gamma}} 
    \Pr\left\{q(UPU^\dagger)=\bm{\gamma}\right\}\cdot
    \Pr\left\{UPU^\dagger\in\pm\cZ\;\vert\; q(UPU^\dagger)=\bm{\gamma}\right\}.
\end{align}
The condition $UPU^\dagger \in \pm\cZ$ can be checked locally: each local Pauli must be either $I$ or $Z$. 
For $\bm{\gamma}_i = 1$, the output is $Z$ with probability $1/3$. 
Therefore, the total contraction involves applying a weight vector $W_0 = [1, 1/3]$ on each output leg, 
where the entry reflects whether the local signature is $0$ or $1$.

We now generalize this to represent $h(\ba, P)$ using an MPO. 
Recall from Eq.~\eqref{eq:physical meaning of h(a,P)} that:
\begin{align}
    h(\ba, P) 
    &= \Pr_U \left\{U P U^\dagger\in \pm \cZ
    \;\&\; [U P U^\dagger, X^{\ba}] = 0 \right\} - 
    \Pr_U \left\{U P U^\dagger\in \pm \cZ
    \;\&\; \{U P U^\dagger, X^{\ba}\} = 0\right\} \\
    &= \sum_{\bm{\gamma}} \Pr\left\{q(UPU^\dagger)=\bm{\gamma}\right\}\cdot
    \Pr\left\{UPU^\dagger\in\pm\cZ\;\vert\; q(UPU^\dagger)=\bm{\gamma}\right\}\cdot 
    (-1)^{\vert{\rm supp}(\bm{\gamma}\oplus \ba)\vert}.
\end{align}
To encode the sign $(-1)^{\left|\mathrm{supp}(\bm{\gamma} \oplus \ba)\right|}$, we define another vector $W_1 = [1, -1/3]$. 
For each site $i$, if $\ba_i = j$, we apply $W_j$ to $\bm{\gamma}_i$, where $j = 0,1$. 
Hence, we can compute $h(\ba, P)$ by setting the left input leg to $q(P)$ and the right leg to $\ba$.

Finally, we simplify the network to illustrate its scalability, as done in~\cite{bertoni2024shallow}. Noting that the last three columns of $B$ are identical, we define the reduced tensor
\begin{align}
    B' = \left[\begin{matrix}
        1 & 0 \\ 0 & 0.2 \\ 0 & 0.2 \\ 0 & 0.2
    \end{matrix}\right]. 
\end{align}
A $1$-depth circuit has bond dimension $1$ as it only consists of the tensor $B'$. 
Each additional layer doubles the bond dimension, leading to a final bond dimension of $2^{d-1}$. 

Therefore, with the MPS representation of $f_d^2(\ba, \bm{0})$ and the MPO representation of $h(\ba, P)$, we have 
\begin{align}
\Upsilon_d(P) = \sum_{\ba} f_d^2(\ba, \bm{0}) h(\ba, P) 
= \tr\left[\prod_{i=1}^{n/2} \sum_{a} F_a\cdot M_d^{a, P_i}\right], 
\end{align}
which can be computed efficiently. 
\end{proof}

\section{Proof of the State-Dependent Variances of DIPE with Clifford Ensembles}
\label{app:clifford}

We provide a detailed analysis of the state-dependent variances of DIPE with global and local Clifford ensembles.

\subsection{Global Clifford}

Since the global Clifford ensemble is a unitary $3$-design ensemble, 
we have
\begin{align}
    f_{\Cl_n} (\ba,\bb) = f_{\cT_n} (\ba,\bb) = 
    \begin{cases}
        2^n, & \ba = \bb, \\
        -1, & \ba \neq \bb. 
    \end{cases}
\end{align}
Additionally, with Lemma~\ref{lem:variance 2 of unitary 2-design} and Lemma~\ref{lem:variance 3 of unitary 3-design}
we have 
\begin{align}
    \Var_{\Cl_n}^{(2)}(\rho, \sigma) &= \cO\left(\frac{2^n}{m^2}\right), \quad 
    \Var_{\Cl_n}^{(3)}(\rho, \sigma) = \cO\left(\frac{1}{m}\right). 
\end{align}
Although the global Clifford ensemble fails to be a unitary $4$-design, we can also compute the $4$-moment of $\Cl_n$ with Schur-Weyl duality theory for the Clifford group~\cite{gross2021schur,chen2024nonstabilizerness}, 
the result is shown in Theorem~\ref{the:global clifford variance}. 
Here, we provide the proof. 
\begin{proof}[Proof of Theorem~\ref{the:global clifford variance}]
As shown in the proof of Lemma~\ref{lem:variance 4 of unitary 4-design}, we can decompose $\Var_{\Cl_n}^{(4)}$ as 
\begin{align}
    \Var_{\Cl_n}^{(4)}(\rho, \sigma) 
    &= \left(\frac{m-1}{m}\right)^2 
    \left[(2^n+1)^2\tr\left[\cM_{\Cl_n}^{(4)}(\Lambda_1)(\rho\ox \sigma)\right] - 2\tr[\rho\sigma] - 1\right].
\end{align}
With Lemma~\ref{lem:4-moment of global clifford} proved below, we have 
\begin{align}
    \Var_{\Cl_n}^{(1)}(\rho, \sigma) + \Var_{\Cl_n}^{(4)}(\rho, \sigma) 
    &\leq \left(\frac{m-1}{m}\right)^2 \left[\left(1-\frac{1}{2^n+2}\right)
    \left[(1+\tr[\rho\sigma])^2 + \frac{1}{2^{n-1}} \norm{\Xi_{\rho,\sigma}}{2}^2\right] - 2\tr[\rho\sigma] - 1\right] - \tr^2[\rho\sigma] \\
    &= \cO(2^{-n}) + \left(\frac{m-1}{m}\right)^2\frac{2^n+1}{2^{n-1}(2^n+2)} \norm{\Xi_{\rho,\sigma}}{2}^2, 
\end{align}
where 
\begin{align}
    \norm{\Xi_{\rho,\sigma}}{2}^2 := \sum_{P\in\cP_n} \tr^2[P\rho] \tr^2[P\sigma].
\end{align}
Using the fact that $\tr^2[P\rho]\leq 1$ and $\sum_P \tr^2[P\rho] \leq 2^n$ for all state $\rho$, we have 
\begin{align}
    \norm{\Xi_{\rho,\sigma}}{2}^2 \leq 2^n. 
\end{align}
Therefore, we have
\begin{align}
    \Var_{\Cl_n}^{(1)}(\rho, \sigma) + \Var_{\Cl_n}^{(4)}(\rho, \sigma) 
    \leq \cO(1) . 
\end{align}

Then, we can propose the sample complexity of DIPE with the global Clifford ensemble.
For any $\varepsilon\in(0, 1)$ and $\delta\in(0,1)$, 
it is necessary and sufficient to have
\begin{align}
&N \geq \frac{1}{\delta\varepsilon^2} \frac{2^n}{m^2}, \quad 
N \geq \frac{1}{\delta\varepsilon^2} \frac{1}{m}, \quad 
N \geq  \frac{1}{\delta\varepsilon^2}, \\
\Rightarrow \quad 
&Nm^2 \geq \frac{2^n}{\delta\varepsilon^2}, \quad 
Nm \geq \frac{1}{\delta\varepsilon^2}, \quad 
N \geq \frac{1}{\delta\varepsilon^2}. 
\end{align}
Here we ignored constants. 
Therefore, we have 
\begin{align}
    N m 
    \geq N \frac{1}{\sqrt{N}} \sqrt{\frac{2^n}{\delta\varepsilon^2}}
    \geq \frac{1}{\sqrt{\delta\varepsilon^2}}
    \sqrt{\frac{2^n}{\delta\varepsilon^2}}
    = \frac{\sqrt{2^n}}{\delta\varepsilon^2}. 
\end{align}
Focusing primarily on scalability, we have $Nm = \Theta(\sqrt{2^n})$. 
\end{proof}

\subsection{Local Clifford}

When the unitary ensemble is the local Clifford ensemble $\Cl_1^{\ox n}$, we have 
\begin{align}
    f_{\Cl_1^{\ox n}}(\ba,\bb) 
    = f_{\cT_1^{\ox n}}(\ba,\bb) 
    = 2^n\cdot(-2)^{-\cD(\ba,\bb)}. 
\end{align}

Then, we consider the state-dependent variance and focus on the second term and the fourth term. 
We prove Theorem~\ref{the:local clifford variance} as follows. 
\begin{proof}[Proof of Theorem~\ref{the:local clifford variance}]
With the definition of the classical function, we have 
\begin{align}
    O^2 = \bigotimes_{i=1}^n \left(4\ketbra{00} + 4\ketbra{11} + \ketbra{01} + \ketbra{10}\right) = \bigotimes_{i=1}^n O', 
\end{align}
where $O' = 4\ketbra{00} + 4\ketbra{11} + \ketbra{01} + \ketbra{10}$. 
Thus, we have 
\begin{align}
    \tr[O'] = 10, \quad
    \tr[\bS O'] = 8. 
\end{align}
Then, with the property of unitary $2$-design shown in Lemma~\ref{lem:unitary 2-design expand}, we have 
\begin{align}
    \Var_{\Cl_1^{\ox n}}^{(2)}(\rho,\sigma)
    &= \frac{1}{m^2} \tr\left[\bigotimes(2\1 + \bS) (\rho\ox\sigma)\right]. 
\end{align}
With the decomposition of SWAP operator $\bS = \sum_{P\in\cP_1} P^{\ox 2} / 2$, we have 
\begin{align}
    \tr\left[\bigotimes(2\1 + \bS) (\rho\ox\sigma)\right]
    &= \frac{1}{2^n}\tr\left[\bigotimes\left(5I^{\ox 2} + X^{\ox 2} + Y^{\ox 2} + Z^{\ox 2}\right) (\rho\ox\sigma)\right] \\
    &= \frac{1}{2^n} \sum_{P\in\cP_n} 5^{n-|P|}\tr[P^{\ox 2} (\rho\ox\sigma)] \\
    &= \left(\frac{5}{2}\right)^n \sum_{P\in\cP_n} 5^{-|P|} \Xi_{\rho,\sigma}(P), 
\end{align}
where $\Xi_{\rho,\sigma}(P) := \tr[P\rho]\tr[P\sigma]$. 
Then, we consider bounding the sum over $P\in\cP_n$. 
Using the Cauchy-Schwarz inequality, we have 
\begin{align}
    \sum_{P\in\cP_n} 5^{-|P|} \Xi_{\rho,\sigma}(P)
    \leq \sqrt{\sum_{P\in\cP_n} 5^{-|P|} \tr^2[P\rho]} 
    \sqrt{\sum_{P\in\cP_n} 5^{-|P|} \tr^2[P\sigma]} 
    \leq \left(\frac{6}{5}\right)^n, 
\end{align}
where we use Lemma~\ref{lem: bound for sum over P (local Clifford)}. 
Therefore, the second term of the variance has the following bound, 
\begin{align}
    \Var_{\Cl_1^{\ox n}}^{(2)}(\rho, \sigma) 
    \leq \frac{3^n}{m^2}, 
\end{align}
where the upper bound is achieved when $\rho=\sigma$ is a product state. 
Now we compute the fourth term of the variance. 
$O$ can also be written as 
\begin{align}
O 
&= \bigotimes \left(\frac{1}{2}I^{\ox 2} + \frac{3}{2}Z^{\ox 2}\right), \quad
O^{\ox 2} 
= \left(\frac{1}{4}\right)^{n} \bigotimes \left(I^{\ox 4} + 3I^{\ox 2}\ox Z^{\ox 2} + 3Z^{\ox 2}\ox I^{\ox 2} + 3^2 Z^{\ox 4}\right). 
\end{align}
\begin{align}
\cM_{\Cl_1^{\ox n}}^{(4)}(O^{\ox 2}) 
&= \left(\frac{1}{4}\right)^{n} \bigotimes_{i=1}^n \bE_{U_i\sim\Cl_1}
U_i^{\dagger\ox4} \left(I^{\ox 4} + 3 I^{\ox 2}\ox Z^{\ox 2} + 3Z^{\ox 2}\ox I^{\ox 2} + 3^2 Z^{\ox 4}\right) U_i^{\ox 4} \\
&= \left(\frac{1}{4}\right)^{n} \bigotimes_{i=1}^n \left[I^{\ox 4} + 3 I^{\ox 2}\ox \cM_{\Cl_1}^{(2)}(Z^{\ox 2}) + 3 \cM_{\Cl_1}^{(2)}(Z^{\ox 2}) \ox I^{\ox 2} + 3^2 \cM_{\Cl_1}^{(4)} (Z^{\ox 4})\right] \\
&= \left(\frac{1}{4}\right)^{n} \bigotimes_{i=1}^n \left(I^{\ox 4} + 3I^{\ox 2}\ox \bF^{(2)} + 3 \bF^{(2)} \ox I^{\ox 2} + 3^2 \bF^{(4)}\right), 
\end{align}
where we use Lemma~2 of~\cite{zhou2023performance} and 
\begin{align}
    \bF^{(k)} := \frac{1}{3}\left(X^{\ox k} + Y^{\ox k} + Z^{\ox k}\right). 
\end{align}
For $P_i\in\cP_1$, $i = 1, 2, 3, 4$, we have 
\begin{align}
\tr\left[\left(I^{\ox 4} + 3I^{\ox 2}\ox \bF^{(2)} + 3 \bF^{(2)} \ox I^{\ox 2} + 3^2 \bF^{(4)}\right)P_1\ox P_2\ox P_3 \ox P_4\right] 
= \begin{cases}
    16, & P_1 = P_2 = P_3 = P_4 = I, \\
    16, & P_1 = P_2 = I, P_3 = P_4 \neq I, \\
    16, & P_1 = P_2 \neq I, P_3 = P_4 = I, \\
    16\cdot 3, & P_1 = P_2 = P_3 = P_4 \neq I, \\
    0, & \text{else}. 
\end{cases}
\end{align}
Therefore, we have 
\begin{align}
    \tr\left[\cM_{\Cl_1^{\ox n}}^{(4)}(O^{\ox 2}) (\rho\ox\sigma)^{\ox 2}\right]
    &= \frac{1}{16^n} \sum_{P, Q\in\cP_n} \Xi_{\rho,\sigma}(P) \Xi_{\rho,\sigma}(Q) \tr\left[\cM_{\Cl_1^{\ox n}}^{(4)}(O^{\ox 2}) (P\ox P\ox Q\ox Q)\right] \\
    &= \frac{1}{4^n} \sum_{P} \sum_{Q\in \cP_n^P} \frac{\Xi_{\rho,\sigma}(P) \Xi_{\rho,\sigma}(Q)}{3^{-|\{i|P_i=Q_i\neq I\}|}} , 
\end{align}
where 
\begin{align}
    \cP_n^P := \left\{Q\in\cP_n | \forall i, |P_i| \cdot |Q_i| = 0 \text{ or } P_i = Q_i\right\}
\end{align}
and $|\cP_n^P| = 2^{2n - |P|}$. 
Specially, if $\rho=\sigma$ is a stabilizer product state, we have 
\begin{align}
\tr\left[\cM_{\Cl_1^{\ox n}}^{(4)}(O^{\ox 2})\rho^{\ox 4}\right] = 1.5^n. 
\end{align}

We then consider the sample complexity when $\rho = \sigma$ is a \emph{stabilizer product state}. 
We have 
\begin{align}
\Var_{\Cl_1^{\ox n}}^{(2)}(\rho, \sigma) = \frac{3^n}{m^2}, \quad 
\Var_{\Cl_1^{\ox n}}^{(4)}(\rho, \sigma) = 1.5^n \left(\frac{m-1}{m}\right)^2. 
\end{align}
Therefore, for any $\varepsilon\in(0,1)$ and $\delta\in(0,1)$, it is necessary and sufficient to have 
\begin{align}
&N \geq \frac{1}{\delta\varepsilon^2} \frac{3^n}{m^2}, \quad 
N \geq  \frac{1}{\delta\varepsilon^2}, \\
\Rightarrow \quad 
&Nm^2 \geq \frac{3^n}{\delta\varepsilon^2}, \quad 
N \geq \frac{1.5^n}{\delta\varepsilon^2}. 
\end{align}
Here we ignored constants. 
Therefore, we have 
\begin{align}
    N m 
    \geq N \frac{1}{\sqrt{N}} \sqrt{\frac{3^n}{\delta\varepsilon^2}}
    \geq \sqrt{\frac{1.5^n}{\delta\varepsilon^2}}
    \sqrt{\frac{3^n}{\delta\varepsilon^2}}
    = \frac{\sqrt{4.5^n}}{\delta\varepsilon^2}. 
\end{align}
Therefore, we have $Nm = \cO(\sqrt{4.5^n})$. 
\end{proof}

\begin{lemma}
\label{lem: bound for sum over P (local Clifford)}
For arbitrary $n$-qubit state $\rho$, we have 
\begin{align}
\sum_{P\in\cP_n} 5^{-|P|} \tr^2[P\rho] \leq \left(\frac{6}{5}\right)^n. 
\end{align}
\end{lemma}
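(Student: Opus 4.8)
The plan is to recast the weighted Pauli sum as the trace of a positive operator against $\rho\ox\rho$ and then control its largest eigenvalue. First I would use $\tr[(P\ox P)(\rho\ox\rho)]=\tr^2[P\rho]$ to write
\begin{align}
\sum_{P\in\cP_n} 5^{-|P|}\tr^2[P\rho] = \tr\left[M(\rho\ox\rho)\right], \quad M := \sum_{P\in\cP_n} 5^{-|P|}\, P\ox P. \notag
\end{align}
Since the weight factors as $5^{-|P|}=\prod_{i=1}^n 5^{-|P_i|}$ and $P\ox P=\bigotimes_i (P_i\ox P_i)$, after reordering the two copies qubit by qubit the operator $M$ becomes a tensor product $M=\bigotimes_{i=1}^n M_i$, where $M_i := \sum_{Q\in\cP_1} 5^{-|Q|}\, Q\ox Q$ acts on the $i$-th qubit of each copy.

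Next I would reduce each factor to a SWAP. Using the single-qubit identity $\bS=\tfrac12\sum_{Q\in\cP_1} Q\ox Q$, I get $M_i = I^{\ox 2}+\tfrac15\left(2\bS-I^{\ox 2}\right)=\tfrac45 I^{\ox 2}+\tfrac25\bS$. As $\bS$ acts as $+1$ on the symmetric subspace and $-1$ on the antisymmetric one, $M_i$ has eigenvalues $\tfrac65$ and $\tfrac25$, so $0\preceq M_i\preceq \tfrac65 I^{\ox 2}$. Taking the tensor product over all sites and using that tensoring positive semidefinite operators preserves the Loewner order, this yields $0\preceq M\preceq (6/5)^n\,\1$. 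Applying this operator inequality against the positive semidefinite, unit-trace operator $\rho\ox\rho$ then gives $\tr[M(\rho\ox\rho)]\le (6/5)^n\tr[\rho\ox\rho]=(6/5)^n$, which is the claim.

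The only genuinely delicate points are the qubit-wise reordering of the two copies (a unitary relabeling of tensor factors that preserves both the trace and the Loewner order) and the monotonicity of tensor products under $\preceq$; both are standard but should be stated carefully, since the entire bound rests on the single-qubit spectral gap $\{6/5, 2/5\}$. As a sanity check, equality is attained for any pure product state such as $\ketbra{0}^{\ox n}$, confirming that the constant $6/5$ is optimal.
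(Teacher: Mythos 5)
Your proof is correct, and it takes a genuinely different route from the paper's. The paper argues by induction on $n$: after reducing to pure states by convexity, it Schmidt-decomposes $\ket{\psi}$ across the first qubit, evaluates the single-qubit Pauli sum to get the factor $\frac{4}{5}\delta_{ij}\delta_{kl}+\frac{2}{5}\delta_{ik}\delta_{jl}\le\frac{6}{5}$ (this is exactly your $\frac{4}{5}I^{\ox 2}+\frac{2}{5}\bS$ sandwiched between Schmidt vectors), and then recognizes the remaining sum as the same quantity for an $(n{-}1)$-qubit pure state, closing the induction. You instead lift everything to two copies, write the sum as $\tr[M(\rho\ox\rho)]$ with $M=\bigotimes_{i=1}^n\bigl(\frac{4}{5}I^{\ox 2}+\frac{2}{5}\bS\bigr)$, and bound $\lambda_{\max}(M)=(6/5)^n$. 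Both arguments rest on the same single-qubit spectral data $\{6/5,\,2/5\}$, but yours buys several things: it avoids induction and the Schmidt decomposition entirely; it applies verbatim to mixed $\rho$ without the convexity reduction; it exhibits tightness explicitly (the paper's surrounding theorem only asserts that product states saturate the bound); and since $M\succeq 0$ with $\lambda_{\max}(M)=(6/5)^n$, the same one-line argument applied to $\rho\ox\sigma$ gives $\sum_P 5^{-|P|}\Xi_{\rho,\sigma}(P)\le(6/5)^n$ directly, bypassing the Cauchy--Schwarz step the paper uses in the proof of Theorem~\ref{the:local clifford variance}. The two delicate points you flag are indeed the only ones, and both hold: the qubit-pairing reordering is a simultaneous unitary conjugation of $M$ and $\rho\ox\rho$ (hence trace- and order-preserving), and for Hermitian factors with $0\preceq M_i\preceq \frac{6}{5}I^{\ox 2}$ the spectrum of $\bigotimes_i M_i$ consists of products of nonnegative eigenvalues each at most $6/5$, so $0\preceq M\preceq (6/5)^n\,\1$.
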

\begin{proof}
Obviously, the maximum of L.H.S is achieved when $\rho$ is a pure state. 
Thus, we only consider pure state. 
Firstly, for $n=1$ case, we must have
\begin{align}
\sum_{P\in\cP_1} 5^{-|P|} \tr^2[P\rho] = \left(\frac{6}{5}\right)^1.
\end{align}
Then, suppose that this inequality is held for $n-1$ case. 
For the $n$-qubit case, with Schmidt decomposition, 
we can decompose an $n$-qubit state $\rho = \ketbra{\psi}$ as 
\begin{align}
\ket{\psi} &= a_0\ket{\varphi_0 \phi_0} + a_1\ket{\varphi_1 \phi_1}, \quad
\rho = \sum_{i,j=0}^1 a_ia_j \ket{\varphi_i \phi_i}\!\bra{\varphi_j\phi_j}, 
\end{align}
where $a_0^2 + a_1^2 = 1$ and $\ket{\varphi_0}, \ket{\varphi_1}$ are single-qubit states. 
Then, for each $P\in\cP_n$, we have 
\begin{align}
\frac{1}{5^{|P|}} \tr^2[P\rho] 
&= \frac{1}{5^{|P|}} \left(\sum_{i,j=0}^1 a_i a_j \bra{\varphi_j \phi_j} P \ket{\varphi_i \phi_i} \right)^2 \\
&= \frac{1}{5^{|P|}} \sum_{i,j,k,l=0}^1 a_ia_j a_k a_l
\bra{\varphi_j \phi_j \varphi_k \phi_k}  P^{\ox 2} \ket{\varphi_i \phi_i \varphi_l \phi_l}. 
\end{align}
For each $\bra{\varphi_j\phi_j \varphi_k\phi_k}  P^{\ox 2} \ket{\varphi_i\phi_i \varphi_l\phi_l}$, 
we compute its sum over $P$ as follows, 
\begin{align}
\sum_{P\in\cP_n} \frac{1}{5^{|P|}}
\bra{\varphi_j\phi_j \varphi_k\phi_k}  P^{\ox 2} \ket{\varphi_i\phi_i \varphi_l\phi_l}
&= \left(\sum_{P'\in\cP_1} \frac{1}{5^{|P'|}} \bra{\varphi_j\varphi_k}P'^{\ox 2}\ket{\varphi_i\varphi_l}\right) 
\left(\sum_{P''\in\cP_{n-1}} \frac{1}{5^{|P''|}} \bra{\phi_j \phi_k}P''^{\ox 2}\ket{\phi_i\phi_l}\right) \\
&= \left(\frac{4}{5}\delta_{ij}\delta_{kl} + \frac{2}{5}\bra{\varphi_j\varphi_k}\bS\ket{\varphi_i\varphi_l} \right) 
\left(\sum_{P''\in\cP_{n-1}} \frac{1}{5^{|P''|}} \bra{\phi_j \phi_k}P''^{\ox 2}\ket{\phi_i\phi_l}\right) \\
&= \left(\frac{4}{5}\delta_{ij}\delta_{kl} + \frac{2}{5}\delta_{ik} \delta_{jl} \right)
\left(\sum_{P''\in\cP_{n-1}} \frac{1}{5^{|P_2|}} \bra{\phi_j \phi_k}P''^{\ox 2}\ket{\phi_i\phi_l}\right) \\
&\leq \frac{6}{5} \sum_{P''\in\cP_{n-1}} \frac{1}{5^{|P''|}} \bra{\phi_j \phi_k}P''^{\ox 2}\ket{\phi_i\phi_l}. 
\end{align}
Therefore, we have 
\begin{align}
\sum_{P\in\cP_n} \frac{1}{5^{|P|}} \tr^2[P\rho] 
&\leq \frac{6}{5} \sum_{P''\in\cP_{n-1}} \sum_{i,j,k,l=0}^1 
\frac{a_i a_j a_k a_l}{5^{|P''|}} \bra{\phi_j \phi_k}P''^{\ox 2}\ket{\phi_i\phi_l} \\
&= \frac{6}{5} \sum_{P''\in\cP_{n-1}} \frac{1}{5^{|P''|}} 
\tr^2\left[P''\left(\sum_{i,j} a_i a_j\ket{\phi_i}\!\bra{\phi_j} \right)\right] \\
&= \frac{6}{5} \sum_{P''\in\cP_{n-1}} \frac{1}{5^{|P''|}} 
\tr^2[P''\sigma] \leq \left(\frac{6}{5}\right)^n, 
\end{align}
where we define $\sigma := \sum_{i,j} a_i a_j\ket{\phi_i}\!\bra{\phi_j}$ and the last inequality uses our assumption. 
\end{proof}

\section{Performance Guarantee of DIPE with Approximate Unitary Design Ensembles}
\label{app:approximate design}

We bound the bias of the estimator as follows,
\begin{align}
|\tilde{\omega} - \tr[\rho\sigma]| 
&= \left|\tr\left[\cM_{\tilde{\cF}_n}^{(2)}(O)(\rho\ox\sigma)\right] - \tr\left[\cM_{\cF_n}^{(2)}(O)(\rho\ox\sigma)\right] \right| \\
&= \left|\tr\left[\left(\cM_{\tilde{\cF}_n}^{(2)} - \cM_{\cF_n}^{(2)}\right) \left((2^n+1)\sum_{\ba} \ketbra{\ba\ba} - \1\right) (\rho\ox\sigma)\right] \right| \\
&= (2^n+1) \left|\tr\left[\left(\cM_{\tilde{\cF}_n}^{(2)} - \cM_{\cF_n}^{(2)}\right) \left(\sum_{\ba} \ketbra{\ba\ba}\right) (\rho\ox\sigma)\right] \right| \\
&\leq \epsilon (2^n+1) \tr\left[\cM_{\cF_n}^{(2)}\left(\sum_{\ba} \ketbra{\ba\ba}\right) (\rho\ox \sigma)\right] \\
&= \epsilon \left(1 + \tr[\rho\sigma]\right) \leq 2\epsilon, 
\end{align}
where we use the definition of approximate unitary $4$-design and the property of unitary $2$-design.
We now turn to the variance, which is defined in Eq.~\eqref{eq:variance appendix}, 
\begin{align}
\Var[\tilde{\bX}_m]
=& \frac{1}{m^2} \tr\left[\cM_{\tilde{\cF}_n}^{(2)}(O^2) (\rho\ox\sigma)\right] + \left(\frac{m-1}{m}\right)^2\tr\left[\cM_{\tilde{\cF}_n}^{(4)}(O^{\ox 2}) (\rho\ox\sigma)^{\ox 2}\right] - \left[\bE_{U\sim\tilde{\cF}_n,\ba,\bb} f_{\cT_n}(\ba,\bb)\right]^2 \notag \\
+& \frac{m-1}{m^2} \bE_{U\sim\tilde{\cF}_n} \left[\bE_{\ba,\bb} f_{\cT_n}(\ba,\bb) \left(\bE_{\ba'} f_{\cT_n}(\ba',\bb) + \bE_{\bb'} f_{\cT_n}(\ba,\bb')\right)\right].
\end{align}
In the following, we bound each term one by one.
\begin{enumerate}
\item For the first term, we have
\begin{align}
\tr\left[\cM_{\tilde{\cF}_n}^{(2)}(O^2) (\rho\ox\sigma)\right]
&\leq (1 + \epsilon) \tr\left[\cM_{\cF_n}^{(2)}(O^2) (\rho\ox\sigma)\right], 
\end{align}
with the fact that $O^2$ is a positive semi-definite operator. 
Thus, with Lemma~\ref{lem:properties of unitary design}, we have
\begin{align}
\tr\left[\left(\cM_{\tilde{\cF}_n^{(4)}} - \cM_{\cF_4}^{(4)}(\rho\ox\sigma)\right)\right] 
&\leq \epsilon \tr\left[\cM_{\cF_n}^{(2)}(O^2) (\rho\ox\sigma)\right] \\
&= \cO(\epsilon\cdot 2^n). 
\end{align}
\item For the second term, we have
\begin{align}
\tr\left[\cM_{\tilde{\cF}_n}^{(4)}(O^{\ox 2}) (\rho\ox\sigma)^{\ox 2}\right]
&= (2^n+1)^2 \tr\left[\cM_{\tilde{\cF}_n}^{(4)}(\Lambda_1)(\rho\ox\sigma)^{\ox 2}\right]
- 2(2^n+1) \tr\left[\cM_{\tilde{\cF}_n}^{(2)}(\Lambda_2)(\rho\ox\sigma)\right] + 1 \\
&\leq (1+\epsilon)(2^n+1)^2 \tr\left[\cM_{\cF_n}^{(4)}(\Lambda_1)(\rho\ox\sigma)^{\ox 2}\right] - 2(1-\epsilon)(2^n+1) \tr\left[\cM_{\cF_n}^{(2)}(\Lambda_2)(\rho\ox\sigma)\right] + 1, \notag 
\end{align}
where $\Lambda_1 := \sum_{\ba,\bb} \ketbra{\ba\ba\bb\bb}$ and $\Lambda_2 := \sum_{\ba} \ketbra{\ba\ba}$. 
Here we use the definition of approximate unitary $4$-design.
Thus, with Lemma~\ref{lem:properties of unitary design}, we have
\begin{align}
\tr\left[\left(\cM_{\tilde{\cF}_n}^{(4)} - \cM_{\cF_n}^{(4)}\right)(O^{\ox 2}) (\rho\ox\sigma)^{\ox 2}\right]
&\leq \epsilon (2^n + 1)^2 \tr\left[\cM_{\cF_n}^{(4)}(\Lambda_1)(\rho\ox\sigma)^{\ox 2}\right] + 2\epsilon (2^n+1) \tr\left[\cM_{\cF_n}^{(2)}(\Lambda_2)(\rho\ox\sigma)\right] \notag \\
&= \cO(\epsilon). 
\end{align}
\item For the third term, since $|\tilde{\omega} - \tr[\rho\sigma]| \leq \epsilon(1 + \tr[\rho\sigma])$, we have
\begin{align}
\left[\bE_{U\sim\tilde{\cF}_n,\ba,\bb} f_{\cT_n}(\ba,\bb)\right]^2
&\geq \left[(1 - \epsilon)\tr[\rho\sigma] - \epsilon\right]^2 
= (1-\epsilon)^2 \tr^2[\rho\sigma] - 2\epsilon(1-\epsilon)\tr[\rho\sigma] + \epsilon^2.
\end{align}
Thus, we have 
\begin{align}
- \left[\bE_{U\sim\tilde{\cF}_n,\ba,\bb} f_{\cT_n}(\ba,\bb)\right]^2 + \tr^2[\rho\sigma]
&\geq \epsilon(2-\epsilon)\tr^2[\rho\sigma] + 2\epsilon(1-\epsilon)\tr[\rho\sigma] - \epsilon^2 \\
&= \cO(\epsilon).
\end{align}
\item For the fourth term, there are two similar terms. 
We bound one of them as follows, 
\begin{align}
&\bE_{U\sim\tilde{\cF}_n} \left[\bE_{\ba,\bb} f_{\cT_n}(\ba,\bb) \bE_{\ba'} f_{\cT_n}(\ba',\bb)\right] \notag \\
=& (2^n+1)^2 \tr\left[\cM_{\tilde{\cF}_n}^{(4)}(\Lambda_3)(\rho\ox\rho\ox \sigma)\right] 
- \tr\left[\cM_{\tilde{\cF}_n}^{(2)}(\Lambda_2)(\rho\ox\sigma)\right] 
- \tr\left[\cM_{\tilde{\cF}_n}^{(2)}(O)(\rho\ox\sigma)\right] \\
\leq& (1+\epsilon)(2^n+1)^2 \tr\left[\cM_{\cF_n}^{(4)}(\Lambda_3)(\rho\ox\rho\ox \sigma)\right] 
- (1-\epsilon) \tr\left[\cM_{\cF_n}^{(2)}(\Lambda_2)(\rho\ox\sigma)\right] 
- \tr\left[\cM_{\tilde{\cF}_n}^{(2)}(O)(\rho\ox\sigma)\right], 
\end{align}
where $\Lambda_3 := \sum_{\ba} \ketbra{\ba\ba\ba}$. 
Thus, with Lemma~\ref{lem:properties of unitary design}, we have  
\begin{align}
&\bE_{U\sim\tilde{\cF}_n} \left[\bE_{\ba,\bb} f_{\cT_n}(\ba,\bb) \bE_{\ba'} f_{\cT_n}(\ba',\bb)\right] - \bE_{U\sim\cF_n} \left[\bE_{\ba,\bb} f_{\cT_n}(\ba,\bb) \bE_{\ba'} f_{\cT_n}(\ba',\bb)\right] \notag \\
\leq& \epsilon (2^n + 1)^2 \tr\left[\cM_{\cF_n}^{(4)}(\Lambda_3)(\rho\ox\rho\ox\sigma)\right] + \epsilon \tr\left[\cM_{\cF_n}^{(2)}(\Lambda_2)(\rho\ox\sigma)\right] + 2\epsilon \\
\leq& \cO(\epsilon).
\end{align}
A similar bound holds for the other term, i.e., 
\begin{align}
&\bE_{U\sim\tilde{\cF}_n} \left[\bE_{\ba,\bb} f_{\cT_n}(\ba,\bb) \bE_{\bb'} f_{\cT_n}(\ba,\bb')\right] - \bE_{U\sim\cF_n} \left[\bE_{\ba,\bb} f_{\cT_n}(\ba,\bb) \bE_{\bb'} f_{\cT_n}(\ba,\bb')\right] 
\leq \cO(\epsilon).
\end{align}
\end{enumerate}
Now, we combine all the bounds above and have
\begin{align}
\Var[\tilde{\bX}_m] - \Var_{\cF_n}[\bX_m]
&\leq \cO\left(\frac{\epsilon\cdot 2^n}{m^2} + \epsilon \right)
\end{align}

\section{Additional Numeric}
\label{app:more experiments}

\subsection{Haar random states}

Here, we compare the performance of $\Cl_1^{\ox n}$, $\cB_d$ ($d=1,3$), and $\Cl_n$,
across systems ranging from $4$ to $26$ qubits in steps of $2$. 
For each $n$, we generate $10^2$ pairs of Haar random states $\{\ket{\psi_i}, \ket{\phi_i}\}_{i=1}^{N_s}$. 
For each pair, we sample $10^2$ unitaries from each of the three ensembles and estimate their inner product using $m = 10^1, 10^2, 10^3$ measurement shots.
Thus, for each state pair, we obtain $10^2$ independent estimators of their inner product. 
We compute the variance of these estimators for each ensemble, and the results are shown in Fig.~\ref{fig: experiment of haar random states}.
To benchmark performance, where reference lines corresponding to the average variance. 
Specifically, for the local and global Clifford ensembles, the predicted average variances scale as $2.5^n/m^2$ and $2^n/m^2$, respectively.
For the $d$-depth brickwork ensemble, the predicted average variance is $2.18^n/m^2$, as given in Lemma~\ref{lem:average variance of the brickwork}.
We observe excellent agreement between these theoretical predictions and our numerical results. 
Additionally, we find that the depth of brickwork ensemble does not affect the average sample complexity when two states are independent random Haar states. 

\begin{figure}[!htbp]
\centering
\includegraphics[width=0.7\linewidth]{./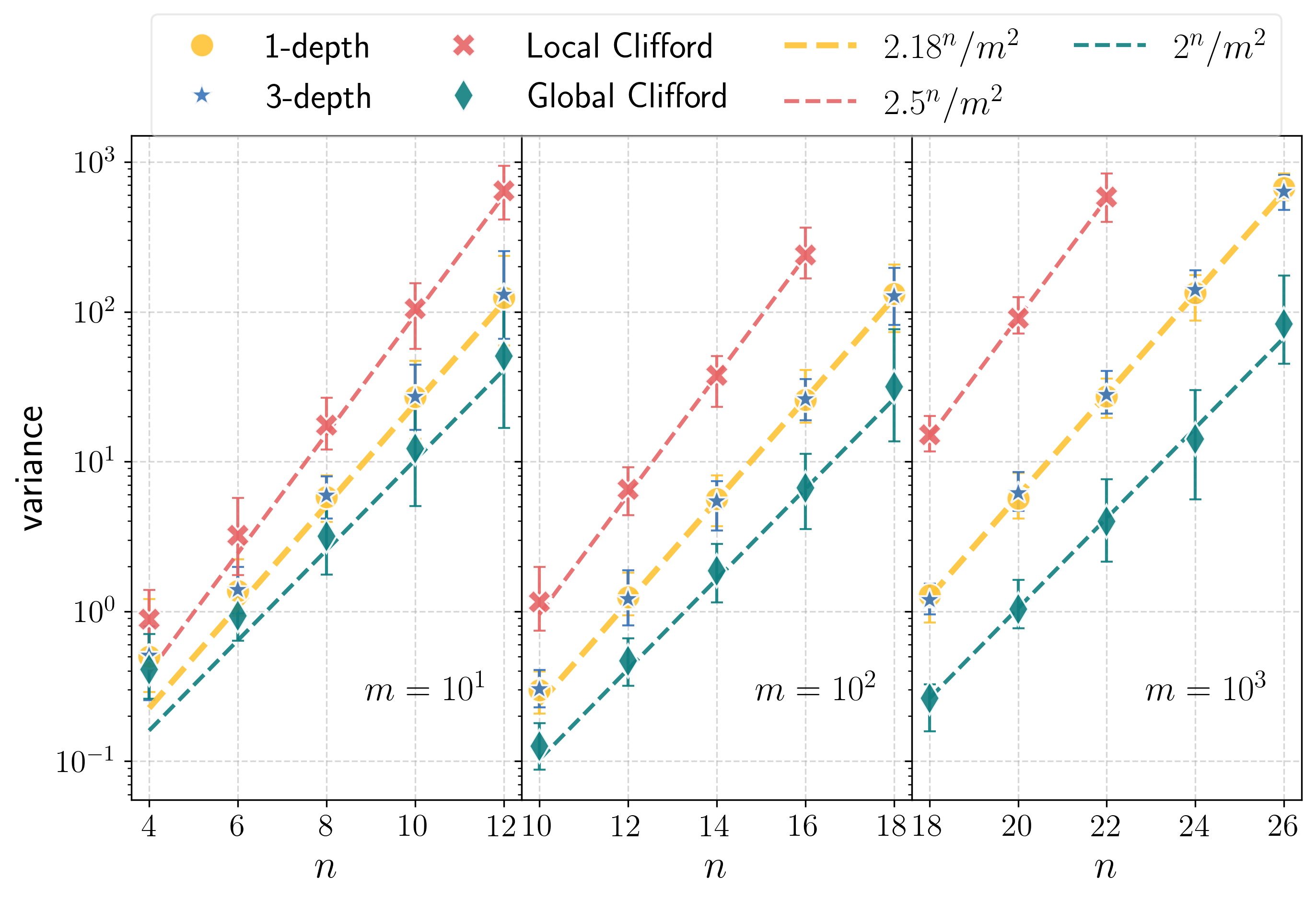}
\caption{
Numerical result of Haar random states. 
}
\label{fig: experiment of haar random states}
\end{figure}

\subsection{Fourth term of the variance}

\subsubsection{Haar random states}

Here, we show that the average of $\Var_{\cE}^{(4)}(\rho,\sigma)$ decreases exponentially with the qubit number $n$. 
We consider $\Cl_1^{\ox n}$ and $\cB_d$ ($d=1,3, 5, 7$) across systems ranging from $4$ to $10$ qubits in steps of $2$. 
For each $n$, we generate $10^2$ pairs of Haar random states $\{\ket{\psi_i}, \ket{\psi_i}\}_{i=1}^{100}$. 
For each pair, we sample $10^2$ unitaries from each ensemble and estimate their inner product using $m = 5\times 10^3$ measurement shots, such that the variance is approximately close to $\Var_{\cE}^{(4)}(\rho,\sigma)-1$. 
We compute the variance of these estimators for each ensemble, and the results are shown in Fig.~\ref{fig: experiment of 4-moment}.
We can find that $\Var_{\cE}^{(4)}(\rho,\sigma)-1$ decreases exponentially with the qubit number $n$, as shown in Lemmas~\ref{lem:average variance of local 2-design} and~\ref{lem:average variance of the brickwork}.

\begin{figure}[!htbp]
\centering
\includegraphics[width=0.9\linewidth]{./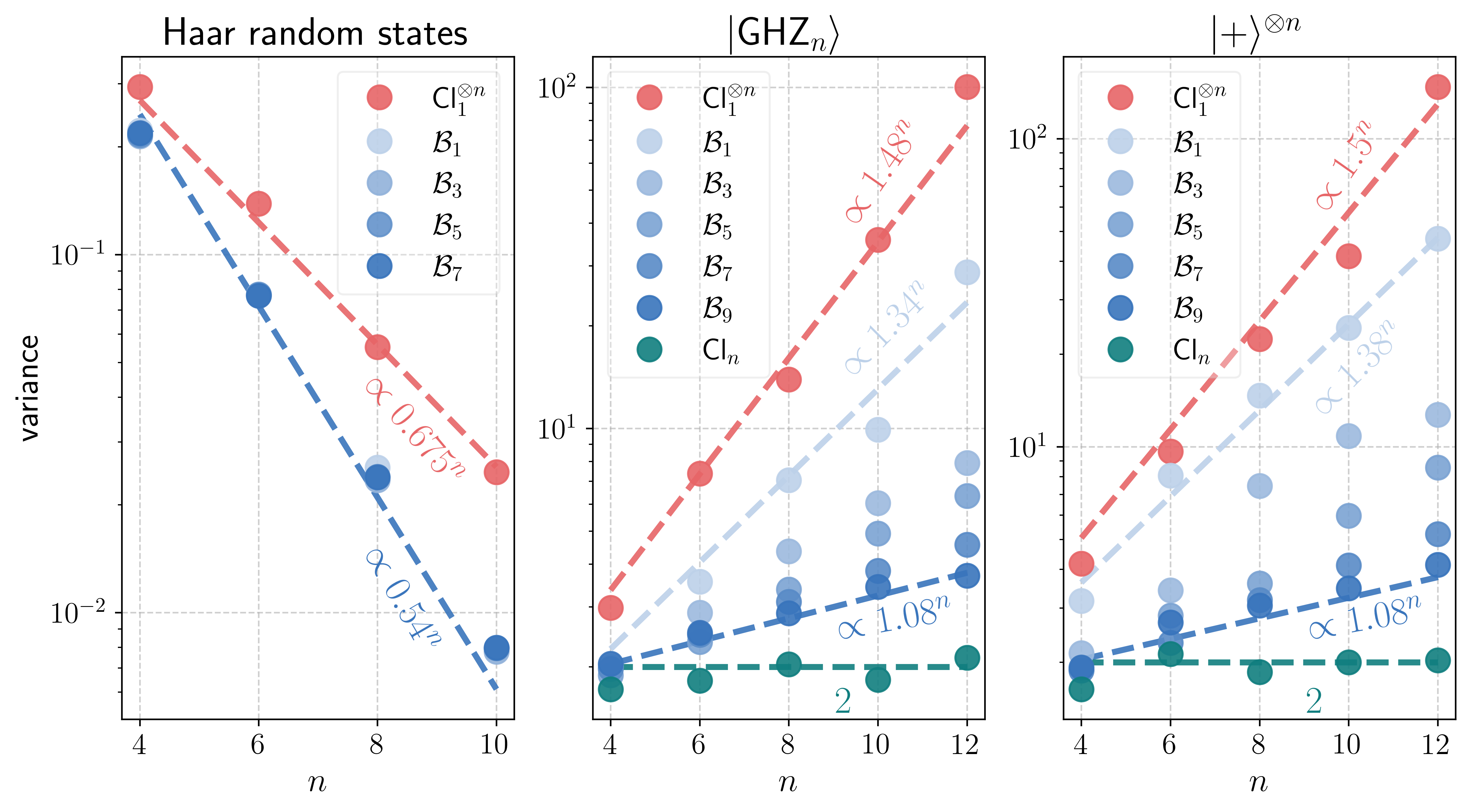}
\caption{
Numerical result of $\Var_{\cE}^{(4)}(\rho,\sigma)-1$. 
}
\label{fig: experiment of 4-moment}
\end{figure}

\subsubsection{Stabilizer states}

We further investigate the dependence of $\Var_{\cE}^{(4)}(\rho,\sigma)$ on the system size $n$, focusing on the cases where $\rho = \sigma$ is either the GHZ state $\ket{{\rm GHZ}_n}$ or the stabilizer product state $\ket{+}^{\ox n}$, with $\ket{+} = (\ket{0} + \ket{1})/\sqrt{2}$.
Specifically, we consider $\Cl_1^{\ox n}$, $\cB_d$ ($d=1,3, 5, 7, 9$), and $\Cl_n$ across systems ranging from $4$ to $12$ qubits in steps of $2$. 
Likewise, we generate $10^2$ pairs of states: $\{\ket{{\rm GHZ}_n}, \ket{{\rm GHZ}_n}\}$ and $\{\ket{+}^{\ox n}, \ket{+}^{\ox n}\}$. 
For each pair, we sample $10^2$ unitaries from each ensemble and estimate their inner product using $m = 5\times 10^3$ measurement shots. 
We compute the variance of these estimators for each ensemble, and the results are shown in Fig.~\ref{fig: experiment of 4-moment}.
We observe that for the local Clifford ensemble, the variance satisfies $\Var_{\Cl_1^{\ox n}}^{(4)}(\ketbra{+}^{\ox n}, \ketbra{+}^{\ox n}) - 1 = 1.5^n$, consistent with Theorem~\ref{the:local clifford variance}.
For the global Clifford ensemble, the variance remains constant with system size, satisfying $\Var_{\Cl_n}^{(4)}(\rho, \sigma) - 1 = 2$, as established in Theorem~\ref{the:global clifford variance}. For the brickwork ensembles $\cB_d$, we find an exponential scaling of the form $\Var_{\cB_d}^{(4)}(\rho, \sigma) - 1 \propto \alpha_d^n$, where the base $\alpha_d$ approaches 1 as the circuit depth $d$ increases.

\subsection{State-dependent Variance of Brickwork Ensemble}

Here, we investigate the influence of circuit depth on DIPE with the brickwork ensemble.
We focus on the quantity
\begin{align}
    \Upsilon_d(P) = \sum_{\ba} f_d^2(\ba, \bm{0}) h(\ba, P), 
\end{align}
which is defined in Lemma~\ref{lem:average variance of the brickwork}.
The value of $\Upsilon_d(P)$ determines the second term of the variance. 
As shown in Appendix~\ref{app:DIPE with random brickwork ensemble}, the function $h(\ba, P)$ depends only on the bitstring $x(P) \in \bZ_2^{n/2}$, defined by
\begin{align}
[x(P)]_{i} = \begin{cases}
    0, & [\gamma(P)]_{2i} \cdot [\gamma(P)]_{2i+i} = 0 \\
    1, & \text{otherwise}. 
\end{cases}
\end{align}
where $\gamma$ is defined in Eq.~\eqref{eq:signature of P}. 
Using the tensor network approach described in Appendix~\ref{app:DIPE with random brickwork ensemble}, we can compute $\Upsilon_d(P)$ efficiently. 
The results for $n = 6$ and depths $d = 1$ to $9$ are shown in Fig.~\ref{fig: experiment of brickwork}.
We observe that for nontrivial Pauli operators ($P \neq \1$), the value $\Upsilon_d(P)$ converges to $2^n$ as the depth increases.
This suggests that for all quantum states, the second term of the variance approaches its average behavior in the deep-circuit limit.

\begin{figure}[!htbp]
\centering
\includegraphics[width=0.7\linewidth]{./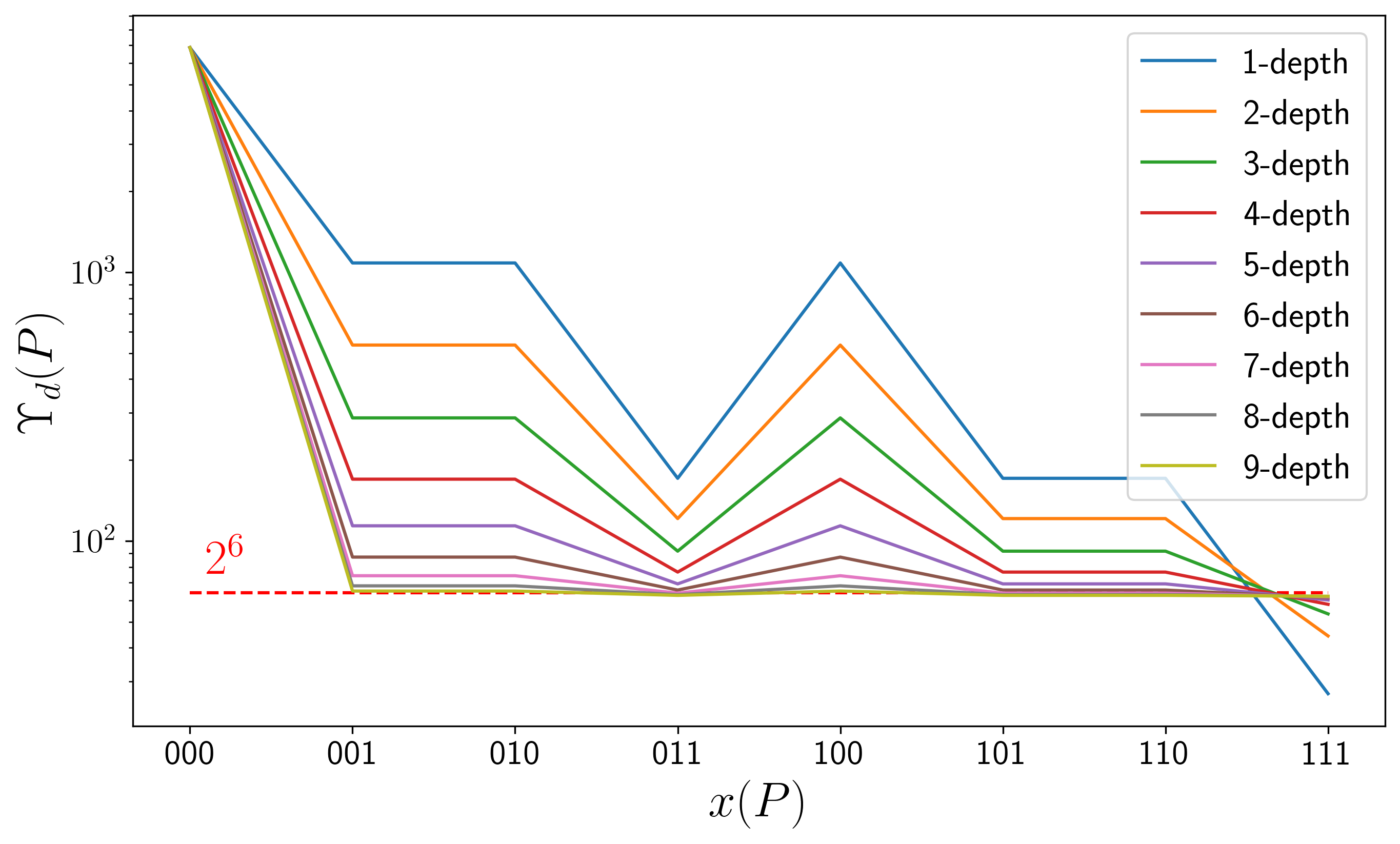}
\caption{
Numerical result of $\Upsilon_d(P)$. 
}
\label{fig: experiment of brickwork}
\end{figure}


\section{Useful Lemmas}
\label{app:useful lemmas}

In the following, we summarize the lemmas used in the main text and preceding appendices.
Note that several of these lemmas are standard tools in the context of randomized measurements.
Therefore, we only briefly review them here.

\subsection{Unitary Design}

\begin{lemma}[Appendix A of \cite{elben2020crossplatform}]
\label{lem:unitary 2-design expand}
If $\cE$ is a unitary $2$-design, we have 
\begin{align}
    \cM_{\cE}^{(2)}(A) 
    = \frac{\tr[A] - \tr\left[(\bigotimes\bS) A\right]/2^n}{4^n-1} \1 + 
    \frac{\tr[(\bigotimes\bS) A] - \tr[A]/2^n}{4^n-1} \bigotimes\bS, 
\end{align}
where $\bS$ is the SWAP operator. 
\end{lemma}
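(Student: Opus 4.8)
The plan is to identify $\cM_{\cE}^{(2)}$ with the Haar twirl of the two-fold tensor representation and to locate its image via Schur--Weyl duality. First I would observe that, for a unitary $2$-design $\cE$, the quantity $\cM_{\cE}^{(2)}(A) = \bE_{U\sim\cE} U^{\dagger\ox 2} A U^{\ox 2}$ is a balanced degree-$(2,2)$ polynomial in the entries of $U$ and its conjugate, so its average over $\cE$ coincides with the Haar average by the defining property of a $2$-design. Since this map is a twirl, its output commutes with $U^{\ox 2}$ for every $U\in\unitary{\cH_n}$; by Schur--Weyl duality the commutant of the two-fold tensor representation is spanned by the identity $\1$ and the full SWAP operator $\bigotimes\bS$ that exchanges the two $n$-qubit copies. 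Hence I may write $\cM_{\cE}^{(2)}(A) = \alpha\,\1 + \beta\bigotimes\bS$ for some scalars $\alpha,\beta$ that depend linearly on $A$.

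To pin down $\alpha$ and $\beta$, I would take the Hilbert--Schmidt inner product of both sides against the two commutant elements. Because a twirl preserves the trace, one has $\tr[\cM_{\cE}^{(2)}(A)] = \tr[A]$; and because $\bigotimes\bS$ commutes with every $U^{\ox 2}$, cyclicity of the trace yields $\tr[(\bigotimes\bS)\,\cM_{\cE}^{(2)}(A)] = \tr[(\bigotimes\bS)\,A]$. Substituting the ansatz and using $\tr[\1]=4^n$, $\tr[\bigotimes\bS]=2^n$, and $(\bigotimes\bS)^2=\1$, these two relations become the linear system
\begin{align}
4^n\,\alpha + 2^n\,\beta &= \tr[A], \notag \\
2^n\,\alpha + 4^n\,\beta &= \tr[(\bigotimes\bS)\,A]. \notag
\end{align}

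Finally I would solve this $2\times 2$ system, whose Gram determinant is $16^n - 4^n = 4^n(4^n-1)$. Cramer's rule then gives $\alpha = (\tr[A] - \tr[(\bigotimes\bS)A]/2^n)/(4^n-1)$ and $\beta = (\tr[(\bigotimes\bS)A] - \tr[A]/2^n)/(4^n-1)$, which are exactly the coefficients in the stated formula. The argument is entirely mechanical once the commutant structure is in hand, so there is no genuine obstacle; the only point requiring care is the bookkeeping of the traces $\tr[\bigotimes\bS]=2^n$ and $\tr[\1]=4^n$, which fix the off-diagonal versus diagonal entries of the Gram matrix and hence the precise asymmetric form of $\alpha$ and $\beta$.
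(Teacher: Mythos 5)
Your proof is correct, and it is essentially the standard derivation: the paper itself states this lemma without proof, deferring to Appendix~A of Elben \emph{et al.}, where exactly this argument appears — match the design twirl to the Haar twirl, invoke Schur--Weyl duality to write the output as $\alpha\,\1 + \beta\bigotimes\bS$, and fix $\alpha,\beta$ by tracing against $\1$ and $\bigotimes\bS$ using $\tr[\1]=4^n$, $\tr[\bigotimes\bS]=2^n$, and $(\bigotimes\bS)^2=\1$. Your two justifying observations (that the $U^\dagger$-on-the-left twirl of a $2$-design coincides with the Haar twirl because the entries are balanced degree-$(2,2)$ polynomials, and that $\tr[(\bigotimes\bS)\cM_\cE^{(2)}(A)]=\tr[(\bigotimes\bS)A]$ by commutation and cyclicity) are exactly the points that need care, and you handle both correctly.
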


\begin{lemma}[Lemma 22 of \cite{anshu2022distributed}]
\label{lem:properties of unitary design}
Let $A,B,C,D$ be Hermitian matrices. 
If $\cE$ is a unitary $2$-design, we have 
\begin{align}
    \bE_{U\sim\cE} \bra{\ba}U A U^\dagger\ket{\ba} \bra{\ba}U B U^\dagger\ket{\ba} = \frac{1}{2^n(2^n+1)} \left(\tr[A]\tr[B] + \tr[AB]\right), \quad \forall\;\ba\in\bZ_2^n.
\end{align}
If $\cE$ is a unitary $3$-design, we have 
\begin{align}
    \bE_{U\sim\cE} \bra{\ba}U A U^\dagger\ket{\ba} \bra{\ba}U B U^\dagger\ket{\ba} \bra{\ba}U C U^\dagger\ket{\ba} \notag 
    = &\frac{1}{2^n(2^n+1)(2^n+2)} \left(\tr[A]\tr[B]\tr[C] + 
    \tr[AB]\tr[C] \right. \\ 
    &\left. + \tr[A]\tr[BC] + \tr[B]\tr[AC] + \tr[ABC] + \tr[ACB]\right). 
\end{align}
If $\cE$ is a unitary $4$-design, we have 
\begin{align}
    &\bE_{U\sim\cE} \bra{\ba}U A U^\dagger\ket{\ba} \bra{\ba}U B U^\dagger\ket{\ba} \bra{\ba}U C U^\dagger\ket{\ba} \bra{\ba}U D U^\dagger\ket{\ba} 
    = \frac{1}{2^n(2^n+1)(2^n+2)(2^n+3)} \sum_{\pi\in\cS_4} \tr[\bP_{\pi} (A\ox B\ox C\ox D)], 
\end{align}
where $\bP_{\pi}$ is the permutation operator, defined as 
\begin{align}
    \bP_\pi = \sum_{\ba_1,\ba_2,\cdots,\ba_k} \ket{\ba_{\pi^{-1}(1)} \ba_{\pi^{-1}(2)} \cdots \ba_{\pi^{-1}(k)}}\!\bra{\ba_1\ba_2\cdots\ba_k}, 
\end{align} 
for $\pi\in\cS_k$. 
\end{lemma}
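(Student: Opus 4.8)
The plan is to reduce all three identities to a single Weingarten-type calculation and then expand the resulting permutation sum separately for each $k\in\{2,3,4\}$. Writing $d:=2^n$ and letting $A_1,\dots,A_k$ denote the Hermitian inputs, the first step is to express each diagonal element as a trace, $\bra{\ba}UA_iU^\dagger\ket{\ba}=\tr[A_i\,U^\dagger\ketbra{\ba}U]$, so that the product of $k$ factors collapses into a single trace over $k$ tensor copies:
\begin{align}
\prod_{i=1}^{k}\bra{\ba}UA_iU^\dagger\ket{\ba}
=\tr\!\left[\left(\bigotimes_{i=1}^{k}A_i\right)U^{\dagger\ox k}\ketbra{\ba}^{\ox k}U^{\ox k}\right]. \notag
\end{align}
Taking $\bE_{U\sim\cE}$ and moving it inside the trace, the whole computation reduces to evaluating the $k$-fold twirl $\bE_{U\sim\cE}\,U^{\dagger\ox k}\ketbra{\ba}^{\ox k}U^{\ox k}$.

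The key observation is that $\ket{\ba}^{\ox k}$ lies in the symmetric subspace of $(\bC^{d})^{\ox k}$, on which the unitary group acts irreducibly. By Schur's lemma, twirling any pure state supported there over the Haar measure yields the maximally mixed state on that subspace, $\Pi_{\rm sym}/D$, where $\Pi_{\rm sym}=\frac{1}{k!}\sum_{\pi\in\cS_k}\bP_\pi$ is the symmetric projector and $D=\binom{d+k-1}{k}=d(d+1)\cdots(d+k-1)/k!$ its dimension. Because $\cE$ is a unitary $k$-design it reproduces this balanced degree-$(k,k)$ Haar moment exactly, so
\begin{align}
\bE_{U\sim\cE}\,U^{\dagger\ox k}\ketbra{\ba}^{\ox k}U^{\ox k}
=\frac{1}{d(d+1)\cdots(d+k-1)}\sum_{\pi\in\cS_k}\bP_\pi. \notag
\end{align}
Substituting this back produces the unified formula
\begin{align}
\bE_{U\sim\cE}\prod_{i=1}^{k}\bra{\ba}UA_iU^\dagger\ket{\ba}
=\frac{1}{d(d+1)\cdots(d+k-1)}\sum_{\pi\in\cS_k}\tr\!\left[\bP_\pi\left(A_1\ox\cdots\ox A_k\right)\right], \notag
\end{align}
which is already the $k=4$ claim verbatim.

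To finish, I would specialize to $k=2,3,4$ and expand each permutation trace via the cycle identity $\tr[\bP_\pi(A_1\ox\cdots\ox A_k)]=\prod_{c}\tr[\prod_{i\in c}A_i]$, the product ranging over the cycles $c$ of $\pi$ with the factors inside each cycle taken in cyclic order. For $k=2$ the two elements of $\cS_2$ give $\tr[A]\tr[B]$ and $\tr[AB]$; for $k=3$ the six elements of $\cS_3$ reproduce the six listed terms; for $k=4$ one simply leaves the answer in the stated compact permutation-sum form.

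The one step that genuinely needs care is the second paragraph. I must check that the $k$-design property applies to the object $U^{\dagger\ox k}(\cdot)U^{\ox k}$---this holds because a $k$-design matches every balanced degree-$(k,k)$ monomial in $U$ and $U^\dagger$, and the Haar measure is invariant under $U\mapsto U^\dagger$---and that the Schur normalization is genuinely the rising factorial $d(d+1)\cdots(d+k-1)$ rather than some other constant. The cleanest verification is to test the final formula on $A_1=\cdots=A_k=\1$: the left side equals $1$, forcing the prefactor to be $1/\sum_{\pi\in\cS_k}\tr[\bP_\pi]=1/\sum_{\pi}d^{\,c(\pi)}$, and the classical identity $\sum_{\pi\in\cS_k}d^{\,c(\pi)}=d(d+1)\cdots(d+k-1)$, with $c(\pi)$ the number of cycles of $\pi$, confirms the claimed constant. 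Everything after this is routine bookkeeping.
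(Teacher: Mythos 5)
Your proof is correct. Note that the paper itself does not prove this statement---it is imported verbatim as Lemma 22 of \cite{anshu2022distributed}---so there is no in-paper argument to diverge from; your derivation is the standard one underlying that citation. Concretely, after rewriting the product of diagonal matrix elements as $\tr[(A_1\ox\cdots\ox A_k)\,U^{\dagger\ox k}\ketbra{\ba}^{\ox k}U^{\ox k}]$, your Schur's-lemma computation of the twirl of $\ketbra{\ba}^{\ox k}$ is exactly a re-derivation of the $k$-th moment of a Haar random state (since $U^\dagger\ket{\ba}$ is Haar distributed when $U$ is), which the paper already records as Lemma~\ref{lem:haar random states}; citing that lemma would shortcut your second paragraph. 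Your worry about the twirl direction $U^{\dagger\ox k}(\cdot)U^{\ox k}$ is resolved even more directly by the paper's own convention: its $k$-moment channel in Eq.~\eqref{eq:k-moment channel} is defined with precisely this orientation, so a $k$-design matches it by definition. The cycle-expansion bookkeeping for $k=2,3$ and the rising-factorial normalization check $\sum_{\pi\in\cS_k}d^{\,c(\pi)}=d(d+1)\cdots(d+k-1)$ are both correct (and the sum over all of $\cS_k$ makes the cyclic-order convention in $\bP_\pi$ immaterial).
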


\begin{lemma}
\label{lem:property of unitary 4-design}
If $\cE$ is a unitary $4$-design, we have 
\begin{align}
    \tr\left[\cM_{\cE}^{(4)}(\Lambda_1)(\rho\ox\sigma)^{\ox 2}\right]
    = \frac{(1+\tr[\rho\sigma])^2}{2^n(2^n+1)} + \cO(2^{-3n})
\end{align}
where $\Lambda_1 = \sum_{\ba,\bb}\ketbra{\ba\ba\bb\bb}$. 
\end{lemma}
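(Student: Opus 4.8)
The plan is to evaluate $\tr[\cM_\cE^{(4)}(\Lambda_1)(\rho\ox\sigma)^{\ox 2}]$ by unfolding the moment channel and applying the fourth-moment formula for a unitary $4$-design. First I would use the definition of $\cM_\cE^{(4)}$ together with cyclicity of the trace to write
\begin{align}
\tr\left[\cM_\cE^{(4)}(\Lambda_1)(\rho\ox\sigma)^{\ox 2}\right]
= \bE_{U\sim\cE}\sum_{\ba,\bb\in\bZ_2^n}
\bra{\ba}U\rho U^\dagger\ket{\ba}\bra{\ba}U\sigma U^\dagger\ket{\ba}
\bra{\bb}U\rho U^\dagger\ket{\bb}\bra{\bb}U\sigma U^\dagger\ket{\bb}, \notag
\end{align}
so that the quantity becomes a single fourth moment of $\rho\ox\sigma\ox\rho\ox\sigma$ paired against $\Lambda_1=\sum_{\ba,\bb}\ketbra{\ba\ba\bb\bb}$, where $\rho$ occupies tensor slots $1,3$ and $\sigma$ slots $2,4$, while $\Lambda_1$ ties slots $\{1,2\}$ to the label $\ba$ and slots $\{3,4\}$ to $\bb$.

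Next I would invoke the general operator-level $4$-design identity, the multi-index version of the single-index formula in Lemma~\ref{lem:properties of unitary design}, expressing $\bE_U U^{\ox 4}(\rho\ox\sigma\ox\rho\ox\sigma)U^{\dagger\ox4}=\sum_{\pi,\tau\in\cS_4}\mathrm{Wg}(\pi\tau^{-1},2^n)\,\bP_\pi\,\tr[\bP_\tau(\rho\ox\sigma\ox\rho\ox\sigma)]$ over permutation operators. Pairing with $\Lambda_1$ reduces the task to two ingredients: the combinatorial weights $\tr[\Lambda_1\bP_\pi]=\sum_{\ba,\bb}\bra{\ba\ba\bb\bb}\bP_\pi\ket{\ba\ba\bb\bb}$ and the state traces $\tr[\bP_\tau(\rho\ox\sigma\ox\rho\ox\sigma)]$. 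The key combinatorial step is to show that $\tr[\Lambda_1\bP_\pi]=2^{2n}$ exactly when $\pi$ preserves the two blocks $\{1,2\}$ and $\{3,4\}$ setwise---that is, $\pi\in\{e,(12),(34),(12)(34)\}$---and $\tr[\Lambda_1\bP_\pi]=2^n$ otherwise, by tracking which cycle constraints of $\pi$ force $\ba=\bb$.

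I would then isolate the leading order in $d:=2^n$. The dominant $\cO(d^{-2})$ contribution comes precisely from the four block-preserving permutations paired diagonally ($\tau=\pi$, so $\mathrm{Wg}(e,d)=d^{-4}(1+\cO(d^{-2}))$); evaluating the four state traces gives $\tr[\bP_e\cdots]=1$, $\tr[\bP_{(12)}\cdots]=\tr[\bP_{(34)}\cdots]=\tr[\rho\sigma]$, and $\tr[\bP_{(12)(34)}\cdots]=\tr^2[\rho\sigma]$, whose sum is $(1+\tr[\rho\sigma])^2$. This produces a leading term $d^{-2}(1+\tr[\rho\sigma])^2$, which I would recast as $\frac{(1+\tr[\rho\sigma])^2}{2^n(2^n+1)}$ up to $\cO(2^{-3n})$ using $d^{-2}=\frac{1}{d(d+1)}+\cO(d^{-3})$.

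The main obstacle is the uniform control of the error term. I would handle it by a power-counting estimate: $\mathrm{Wg}(\sigma,d)=\cO(d^{-4-|\sigma|})$ with $|\sigma|$ the minimal transposition length, each state trace satisfies $|\tr[\bP_\tau(\rho\ox\sigma\ox\rho\ox\sigma)]|\le 1$ since $\rho,\sigma$ are density operators, and $\tr[\Lambda_1\bP_\pi]\in\{d,d^2\}$; combining these shows that non-block-preserving $\pi$ (weight only $d$) and every off-diagonal pairing ($|\pi\tau^{-1}|\ge 1$) are suppressed by at least one extra factor of $d^{-1}$, hence fall within $\cO(2^{-3n})$. As an independent cross-check I would verify the same leading coefficient through the factorized form $\bE_U h_U^2$ with $h_U:=\sum_\ba\bra{\ba}U\rho U^\dagger\ket{\ba}\bra{\ba}U\sigma U^\dagger\ket{\ba}$, whose mean $\bE_U h_U=\frac{1+\tr[\rho\sigma]}{2^n+1}$ follows immediately from the $2$-design formula in Lemma~\ref{lem:properties of unitary design} and reproduces the stated expression once one confirms $\Var(h_U)=\cO(2^{-3n})$.
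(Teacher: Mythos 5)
Your proof is correct, but it takes a genuinely different route from the paper's. The paper splits the double bitstring sum into the diagonal part $\ba=\bb$ and the off-diagonal part $\ba\neq\bb$: the $2^n$ diagonal terms are each $\cO(2^{-4n})$ by the single-bitstring fourth-moment formula (Lemma~\ref{lem:properties of unitary design}), contributing $\cO(2^{-3n})$ in total, while the off-diagonal sum---which carries the entire leading term $\frac{(1+\tr[\rho\sigma])^2}{2^n(2^n+1)}+\cO(2^{-4n})$---is not computed there but imported from Eq.~(194) of~\cite{anshu2022distributed}. You instead run a self-contained Weingarten expansion of the full quantity against $\Lambda_1$, and your key ingredients all check out: $\tr[\Lambda_1\bP_\pi]=2^{2n}$ exactly for the four block-preserving permutations $\{e,(12),(34),(12)(34)\}$ and $2^n$ otherwise (any crossing between the blocks forces $\ba=\bb$, and $\ket{\ba\ba\ba\ba}$ is invariant under all of $\cS_4$); the four block traces sum to $(1+\tr[\rho\sigma])^2$; the power counting $\mathrm{Wg}(\sigma,d)=\cO(d^{-4-|\sigma|})$ together with $\left|\tr[\bP_\tau(\rho\ox\sigma)^{\ox 2}]\right|\leq 1$ suppresses every non-block-preserving $\pi$ and every off-diagonal pairing $\tau\neq\pi$ to $\cO(2^{-3n})$; and the conversion $d^{-2}=\frac{1}{d(d+1)}+\cO(d^{-3})$ reconciles your leading term with the stated denominator. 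What each approach buys: the paper's argument is shorter but opaque, since the leading constant lives inside a cited external computation; yours is longer but self-contained and makes structurally explicit why only block-preserving permutations survive at leading order, which is where the $(1+\tr[\rho\sigma])^2$ factor actually comes from. One small caveat: your closing cross-check is mildly circular---$(\bE_U h_U)^2=(1+\tr[\rho\sigma])^2/(2^n+1)^2$ matches the target only after absorbing an $\cO(2^{-3n})$ discrepancy, and establishing $\Var(h_U)=\cO(2^{-3n})$ independently would require essentially the same fourth-moment computation---so it confirms consistency of the leading coefficient but could not substitute for the main argument.
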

\begin{proof}
This lemma is proven based on the results of~\cite{anshu2022distributed}. 
L.H.S. can be rewritten as  
\begin{align}
    \tr\left[\cM_{\cE}^{(4)}(\Lambda_1)(\rho\ox\sigma)^{\ox 2}\right]
    &= \sum_{\ba} \bE_{U\sim\cE} (\bra{\ba}U\rho U^\dagger\ket{\ba} \bra{\ba}U\sigma U^\dagger\ket{\ba})^2 + 
    \sum_{\ba\neq \bb} \bE_{U\sim\cE} (\bra{\ba}U\rho U^\dagger\ket{\ba} \bra{\bb}U\sigma U^\dagger\ket{\bb})^2 \\ 
    &= 2^n\cO(2^{-4n}) + \frac{(1+\tr[\rho\sigma])^2}{2^n(2^n+1)} + \cO(2^{-4n}) \\
    &= \frac{(1+\tr[\rho\sigma])^2}{2^n(2^n+1)} + \cO(2^{-3n})
\end{align}
where the second line uses Lemma~\ref{lem:properties of unitary design} and~\cite[Eq.~(194)]{anshu2022distributed}. 
\end{proof}

\subsection{Clifford ensemble}

\begin{lemma}
\label{lem:4-moment of global clifford}
Suppose that $\Cl_n$ is the $n$-qubit global Clifford ensemble, 
for states $\rho, \sigma$, we have 
\begin{align}
    \tr\left[\cM_{\Cl_n}^{(4)}(\Lambda_2)(\rho\ox\sigma)^{\ox 2}\right]
    &= \frac{1}{(2^n+1)(2^n+2)} \left[(1+\tr[\rho\sigma])^2 + \frac{1}{2^n}\left(\norm{\Xi_{\rho,\sigma}}{2}^2 + 
    \tilde{\Xi}_{\rho,\sigma} \cdot \Xi_{\rho,\sigma}\right)\right] \\
    &\leq \frac{1}{(2^n+1)(2^n+2)} \left[(1+\tr[\rho\sigma])^2 + \frac{1}{2^{n-1}} \norm{\Xi_{\rho,\sigma}}{2}^2\right] 
\end{align}
where 
\begin{align}
    \norm{\Xi_{\rho,\sigma}}{2}^2 
    &:= \sum_{P\in\cP_n} \tr^2[P\rho] \tr^2[P\sigma], \\
    \tilde{\Xi}_{\rho,\sigma} \cdot \Xi_{\rho,\sigma}
    &:= \sum_{P\in\cP_n} \tr[\rho P\sigma P] \tr[P\rho] \tr[P\sigma]. 
\end{align}
\end{lemma}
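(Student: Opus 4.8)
The plan is to reduce the Clifford fourth--moment trace to a single collision--type average over the Clifford group, expand it in the Pauli basis, and then exploit the transitivity of the Clifford group on configurations of commuting Paulis (the Schur--Weyl input) to evaluate every term. Throughout I write $\Xi_{\rho,\sigma}(P)=\tr[P\rho]\tr[P\sigma]$ and abbreviate it as $\Xi(P)$, and I let $\Lambda_1=\sum_{\ba,\bb}\ketbra{\ba\ba\bb\bb}$ be the diagonal operator appearing in the fourth variance term.

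First I would unfold the definition. Writing $\tilde\rho=U\rho U^\dagger$ and $\tilde\sigma=U\sigma U^\dagger$, a direct expansion of the trace against $\Lambda_1$ gives
\begin{align}
\tr\!\left[\cM_{\Cl_n}^{(4)}(\Lambda_1)(\rho\ox\sigma)^{\ox2}\right]
= \bE_{U\sim\Cl_n}\left(\sum_{\ba}\bra{\ba}\tilde\rho\ket{\ba}\,\bra{\ba}\tilde\sigma\ket{\ba}\right)^{\!2}. \notag
\end{align}
Expanding each diagonal entry via $\ketbra{\ba}=2^{-n}\sum_{\bm z}(-1)^{\ba\cdot\bm z}Z^{\bm z}$ and summing over $\ba$ collapses the inner sum, so the target becomes $\bE_U g_U^2$ with $g_U:=2^{-n}\sum_{\bm z}\tr[U^\dagger Z^{\bm z}U\rho]\,\tr[U^\dagger Z^{\bm z}U\sigma]$, i.e. $\bE_U g_U^2=4^{-n}\sum_{\bm z,\bm w}\bE_U[\Xi_U(\bm z)\Xi_U(\bm w)]$ where $\Xi_U(\bm z):=\tr[U^\dagger Z^{\bm z}U\rho]\tr[U^\dagger Z^{\bm z}U\sigma]$.

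Next I would classify the ordered pairs $(\bm z,\bm w)$ by the group generated by $Z^{\bm z}$ and $Z^{\bm w}$: (a) both trivial; (b) exactly one trivial; (c) equal and nontrivial; (d) independent and nontrivial. The key fact from the Clifford commutant theory is that $U^\dagger(\cdot)U$ transports each configuration uniformly over its orbit: a single nontrivial Pauli maps uniformly over all signed nontrivial Paulis, and an independent commuting pair maps uniformly over all rank-two commuting Pauli pairs (transitivity of $\mathrm{Sp}(2n,\bZ_2)$ on such pairs, with signs realized by Pauli conjugation). Since each $\Xi_U$ factor contains the transported Pauli quadratically in its sign, all signs cancel and each class average reduces to an unsigned Pauli average. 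Using $\sum_P\Xi(P)=2^n\tr[\rho\sigma]$ dispatches (a)--(b); class (c) produces $\sum_{P\neq I}\Xi(P)^2=\norm{\Xi_{\rho,\sigma}}{2}^2-1$, which is the nonstabilizerness term.

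Class (d) is the crux and the main obstacle. Here I must evaluate $\sum_{(Q_1,Q_2)}\Xi(Q_1)\Xi(Q_2)$ over ordered rank-two commuting pairs and reduce it to the named quantities. Writing the commuting indicator as $\tfrac12(1+\text{commutation phase})$, the symmetric part yields $(\sum_P\Xi)^2=4^n\tr^2[\rho\sigma]$ while the phase part yields $\sum_{A,P}(\text{phase})\,\Xi(A)\Xi(P)=2^n\,\tilde\Xi_{\rho,\sigma}\!\cdot\Xi_{\rho,\sigma}$, after subtracting the degenerate (identity and equal-Pauli) contributions; this is where $\tilde\Xi_{\rho,\sigma}\cdot\Xi_{\rho,\sigma}$ enters, using the phase identity $\tr[\rho P\sigma P]=2^{-n}\sum_A(\text{phase}(A,P))\,\Xi(A)$. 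Assembling the four classes with the correct orbit sizes (here $4^n-1$ signed nontrivial Paulis and $(4^n-1)(2^{2n-1}-2)$ ordered rank-two commuting pairs) collapses the prefactors to $1/[(2^n+1)(2^n+2)]$ and reproduces the stated equality; the bookkeeping of these counts is the most error-prone step. Finally, for the inequality I would combine the identity $\sum_P\tr^2[\rho P\sigma P]=\norm{\Xi_{\rho,\sigma}}{2}^2$ (from the orthogonality $\sum_P\text{phase}(C,P)=4^n\delta_{C,I}$) with Cauchy--Schwarz to obtain $\tilde\Xi_{\rho,\sigma}\cdot\Xi_{\rho,\sigma}\leq\norm{\Xi_{\rho,\sigma}}{2}^2$, hence $\norm{\Xi_{\rho,\sigma}}{2}^2+\tilde\Xi_{\rho,\sigma}\cdot\Xi_{\rho,\sigma}\leq2\norm{\Xi_{\rho,\sigma}}{2}^2$, which yields the $\tfrac{1}{2^{n-1}}\norm{\Xi_{\rho,\sigma}}{2}^2$ bound.
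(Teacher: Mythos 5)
Your proposal is correct, and it takes a genuinely different route from the paper. The paper's proof is essentially a citation to the Clifford Schur--Weyl machinery of Chen and Zhu: it imports the decomposition $\cM_{\Cl_n}^{(4)}(\Lambda_1)=\frac{1}{(2^n+1)(2^n+2)}(\cR_1+\cR_4)$ with $\cR_1=\bP_{(e)}+\bP_{(12)}+\bP_{(34)}+\bP_{(12)(34)}$ and $\cR_4=R_{T_4}+\bP_{(12)}R_{T_4}$, $R_{T_4}=2^{-n}\sum_P P^{\ox 4}$, then computes $\tr[\cR_1(\rho\ox\sigma)^{\ox 2}]=(1+\tr[\rho\sigma])^2$ and $\tr[\cR_4(\rho\ox\sigma)^{\ox 2}]=2^{-n}(\norm{\Xi_{\rho,\sigma}}{2}^2+\tilde{\Xi}_{\rho,\sigma}\cdot\Xi_{\rho,\sigma})$, with the final inequality taken from Chen--Zhu's Lemma 3. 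You instead rederive the fourth moment on this particular diagonal operator from scratch: expand the collision sum in the $Z$-Pauli basis, classify ordered pairs $(\bm z,\bm w)$ by the subgroup generated, and use transitivity of Clifford conjugation on nontrivial Paulis and on independent commuting pairs (correctly justified, including the sign realization via a Pauli anticommuting with one generator and commuting with the other, and the sign-invariance of $\Xi(Q)=\tr[Q\rho]\tr[Q\sigma]$). I verified your bookkeeping, which is indeed the error-prone step: your orbit counts give class-(d) weight $\frac{(2^n-1)(2^n-2)}{(4^n-1)(2^{2n-1}-2)}=\frac{2}{(2^n+1)(2^n+2)}$, the full commuting-pair sum is $\frac12\left(4^n\tr^2[\rho\sigma]+2^n\,\tilde{\Xi}_{\rho,\sigma}\cdot\Xi_{\rho,\sigma}\right)$ via $\tr[\rho P\sigma P]=2^{-n}\sum_A\chi(A,P)\Xi_{\rho,\sigma}(A)$, and after subtracting the degenerate pairs the four classes assemble to exactly $\left[4^n(1+\tr[\rho\sigma])^2+2^n(\norm{\Xi_{\rho,\sigma}}{2}^2+\tilde{\Xi}_{\rho,\sigma}\cdot\Xi_{\rho,\sigma})\right]/[(2^n+1)(2^n+2)]$ before the $4^{-n}$ normalization, matching the stated equality; your Cauchy--Schwarz step via the identity $\sum_P\tr^2[\rho P\sigma P]=\norm{\Xi_{\rho,\sigma}}{2}^2$ is also sound and replaces the cited Lemma 3 with an elementary argument. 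What each approach buys: the paper's route is two lines and leverages the full commutant structure (so it generalizes immediately to other fourth-moment quantities), while yours is self-contained and elementary at the cost of delicate orbit counting. Two minor notes: the $\Lambda_2$ in the lemma statement is a typo for $\Lambda_1=\sum_{\ba,\bb}\ketbra{\ba\ba\bb\bb}$, which you correctly inferred from context; and your stated orbit sizes are for unsigned Paulis (the signed orbits are twice as large), which is immaterial precisely because of the sign cancellation you point out.
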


\begin{proof}[Proof of Lemma~\ref{lem:4-moment of global clifford}]
This lemma is proven based on the results of~\cite{chen2024nonstabilizerness}.
With~\cite[Eq.~(G66)]{chen2024nonstabilizerness} and~\cite[Lemma 17]{chen2024nonstabilizerness}, we can decompose $\cM_{\Cl_n}^{(4)}(\Lambda_1)$ as 
\begin{align}
    \cM_{\Cl_n}^{(4)}(\Lambda_1)
    = \frac{1}{(2^n+1)(2^n+2)} \left(\cR_1 + \cR_4\right),  
\end{align}
where 
\begin{align}
    \cR_1 := \bP_{(e)} + \bP_{(12)} + \bP_{(34)} + \bP_{(12)(34)}, \quad
    \cR_4 := R_{T_4} + \bP_{(12)}R_{T_4}, \quad
    R_{T_4} := \frac{1}{2^n}\sum_{P\in\cP_n} P^{\ox 4}, 
\end{align}
as defined in \cite[Eq.~(F38)]{chen2024nonstabilizerness}. 
Then, we have 
\begin{align}
    \tr[\cR_1(\rho\ox\sigma)^{\ox 2}] 
    = 1 + 2\tr[\rho\sigma] + \tr^2[\rho\sigma] = (1+\tr[\rho\sigma])^2
\end{align}
and 
\begin{align}
    \tr[\cR_4(\rho\ox\sigma)^{\ox 2}]
    &= \frac{1}{2^n}\left[\sum_{P\in\cP_n} \tr^2[P\rho] \tr^2[P\sigma] + 
    \sum_{P\in\cP_n} \tr[\rho P\sigma P] \tr[P\rho] \tr[P\sigma]\right] \\
    &= \frac{1}{2^n}\left[\norm{\Xi_{\rho,\sigma}}{2}^2 + 
    \tilde{\Xi}_{\rho,\sigma} \cdot \Xi_{\rho,\sigma}\right] \\
    &\leq \frac{1}{2^{n-1}} \norm{\Xi_{\rho,\sigma}}{2}^2, 
\end{align}
where the last line use~\cite[Lemma~3]{chen2024nonstabilizerness}. 
\end{proof}

\subsection{Haar Random States}

\begin{lemma}[Lemma 1 of~\cite{anshu2022distributed}]
\label{lem:haar random states}
Given a Haar random state $\ket{\psi}$ in $\cH_n$, we have the following results
\begin{align}
\bE_{\psi} \ketbra{\psi} 
&= \frac{1}{2^n} \1, \\
\bE_{\psi} \ketbra{\psi}^{\ox 2} 
&= \frac{1}{2^n(2^n+1)}\left(\1 + \bigotimes \bS\right), \\
\bE_{\psi} \ketbra{\psi}^{\ox k} 
&= \frac{1}{2^n(2^n+1)\cdots(2^n+k-1)}\sum_{\pi\in\cS_k} \bP_\pi. 
\end{align}
\end{lemma}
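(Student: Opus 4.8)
The plan is to prove the general $k$-th moment identity and then read off the $k=1,2$ formulas as immediate specializations. Write $d:=2^n=\dim\cH_n$ and let $M_k := \bE_\psi \ketbra{\psi}^{\ox k}$ denote the operator on $\cH_n^{\ox k}$ to be evaluated. First I would exploit the left-invariance of the Haar measure on pure states: for every unitary $U$ on $\cH_n$ the state $U\ket{\psi}$ is distributed identically to $\ket{\psi}$, so $U^{\ox k} M_k U^{\dagger\ox k} = \bE_\psi (U\ketbra{\psi}U^\dagger)^{\ox k} = M_k$, i.e. $M_k$ commutes with $U^{\ox k}$ for all $U$. At the same time, since $\ket{\psi}^{\ox k}$ lies in the totally symmetric subspace $\mathrm{Sym}^k(\cH_n)$ for every $\ket{\psi}$, the average $M_k$ is supported on $\mathrm{Sym}^k(\cH_n)$ and vanishes on its orthogonal complement.

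Next I would invoke the fact that $\mathrm{Sym}^k(\cH_n)$ carries an \emph{irreducible} representation of the unitary group under $U\mapsto U^{\ox k}$. Combined with the commutation relation, Schur's lemma forces $M_k = c_k\,\Pi_k$, where $\Pi_k$ is the orthogonal projector onto $\mathrm{Sym}^k(\cH_n)$ and $c_k$ is a scalar. I would then fix $c_k$ by taking traces: on one hand $\tr M_k = \bE_\psi (\braket{\psi}{\psi})^k = 1$, and on the other $\tr\Pi_k = \dim\mathrm{Sym}^k(\cH_n) = \binom{d+k-1}{k}$, so that $c_k = k!/[d(d+1)\cdots(d+k-1)]$.

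Finally I would substitute the standard representation of the symmetric projector as a normalized average of permutation operators, $\Pi_k = \frac{1}{k!}\sum_{\pi\in\cS_k}\bP_\pi$, with $\bP_\pi$ the permutation operator defined in Lemma~\ref{lem:properties of unitary design}. The factor $k!$ cancels against $c_k$, giving $\bE_\psi\ketbra{\psi}^{\ox k} = [d(d+1)\cdots(d+k-1)]^{-1}\sum_{\pi\in\cS_k}\bP_\pi$, which is the third displayed identity. Specializing, $k=1$ has $\cS_1=\{e\}$ and $\Pi_1=\1$, hence $M_1=\1/d$; while $k=2$ has $\cS_2=\{e,(12)\}$ with $\bP_{(12)}=\bigotimes\bS$, hence $M_2=(\1+\bigotimes\bS)/[d(d+1)]$, recovering the first two lines.

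The only genuinely substantive ingredient is the irreducibility of the symmetric-power representation---equivalently, that the commutant of $\{U^{\ox k}\}$ restricted to operators supported on $\mathrm{Sym}^k(\cH_n)$ consists only of scalars. This is exactly Schur--Weyl duality specialized to the trivial $\cS_k$-isotypic sector, and it is where all the representation theory is concentrated; the remaining steps are routine trace bookkeeping and the definition of $\bP_\pi$.
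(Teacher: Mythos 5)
Your proof is correct: the chain of Haar invariance, support on $\mathrm{Sym}^k(\cH_n)$, Schur's lemma applied to the irreducible symmetric-power representation, trace normalization against $\binom{2^n+k-1}{k}$, and the identity $\Pi_k = \frac{1}{k!}\sum_{\pi\in\cS_k}\bP_\pi$ is exactly the canonical symmetric-subspace argument. The paper itself gives no proof of this lemma---it is imported verbatim as Lemma 1 of the cited reference~\cite{anshu2022distributed}, whose underlying derivation is the same standard argument you reconstructed, so there is nothing to compare beyond noting that your write-up is a complete and correct self-contained version of it.
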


\begin{thebibliography}{54}%
\makeatletter
\providecommand \@ifxundefined [1]{%
 \@ifx{#1\undefined}
}%
\providecommand \@ifnum [1]{%
 \ifnum #1\expandafter \@firstoftwo
 \else \expandafter \@secondoftwo
 \fi
}%
\providecommand \@ifx [1]{%
 \ifx #1\expandafter \@firstoftwo
 \else \expandafter \@secondoftwo
 \fi
}%
\providecommand \natexlab [1]{#1}%
\providecommand \enquote  [1]{``#1''}%
\providecommand \bibnamefont  [1]{#1}%
\providecommand \bibfnamefont [1]{#1}%
\providecommand \citenamefont [1]{#1}%
\providecommand \href@noop [0]{\@secondoftwo}%
\providecommand \href [0]{\begingroup \@sanitize@url \@href}%
\providecommand \@href[1]{\@@startlink{#1}\@@href}%
\providecommand \@@href[1]{\endgroup#1\@@endlink}%
\providecommand \@sanitize@url [0]{\catcode `\\12\catcode `\$12\catcode `\&12\catcode `\#12\catcode `\^12\catcode `\_12\catcode `\%12\relax}%
\providecommand \@@startlink[1]{}%
\providecommand \@@endlink[0]{}%
\providecommand \url  [0]{\begingroup\@sanitize@url \@url }%
\providecommand \@url [1]{\endgroup\@href {#1}{\urlprefix }}%
\providecommand \urlprefix  [0]{URL }%
\providecommand \Eprint [0]{\href }%
\providecommand \doibase [0]{https://doi.org/}%
\providecommand \selectlanguage [0]{\@gobble}%
\providecommand \bibinfo  [0]{\@secondoftwo}%
\providecommand \bibfield  [0]{\@secondoftwo}%
\providecommand \translation [1]{[#1]}%
\providecommand \BibitemOpen [0]{}%
\providecommand \bibitemStop [0]{}%
\providecommand \bibitemNoStop [0]{.\EOS\space}%
\providecommand \EOS [0]{\spacefactor3000\relax}%
\providecommand \BibitemShut  [1]{\csname bibitem#1\endcsname}%
\let\auto@bib@innerbib\@empty
\bibitem [{\citenamefont {Popkin}(2016)}]{popkin2016quest}%
  \BibitemOpen
  \bibfield  {author} {\bibinfo {author} {\bibfnamefont {G.}~\bibnamefont {Popkin}},\ }\bibfield  {title} {\bibinfo {title} {Quest for qubits},\ }\href {https://doi.org/10.1126/science.354.6316.1090} {\bibfield  {journal} {\bibinfo  {journal} {Science}\ }\textbf {\bibinfo {volume} {354}},\ \bibinfo {pages} {1090} (\bibinfo {year} {2016})}\BibitemShut {NoStop}%
\bibitem [{\citenamefont {Preskill}(2018)}]{preskill2018quantum}%
  \BibitemOpen
  \bibfield  {author} {\bibinfo {author} {\bibfnamefont {J.}~\bibnamefont {Preskill}},\ }\bibfield  {title} {\bibinfo {title} {Quantum computing in the nisq era and beyond},\ }\href {https://doi.org/10.22331/q-2018-08-06-79} {\bibfield  {journal} {\bibinfo  {journal} {Quantum}\ }\textbf {\bibinfo {volume} {2}},\ \bibinfo {pages} {79} (\bibinfo {year} {2018})}\BibitemShut {NoStop}%
\bibitem [{\citenamefont {Brown}\ \emph {et~al.}(2024)\citenamefont {Brown}, \citenamefont {Chong}, \citenamefont {Smith}, \citenamefont {Conte}, \citenamefont {Adams}, \citenamefont {Dalvi}, \citenamefont {Kang},\ and\ \citenamefont {Viszlai}}]{brown20245}%
  \BibitemOpen
  \bibfield  {author} {\bibinfo {author} {\bibfnamefont {K.}~\bibnamefont {Brown}}, \bibinfo {author} {\bibfnamefont {F.}~\bibnamefont {Chong}}, \bibinfo {author} {\bibfnamefont {K.~N.}\ \bibnamefont {Smith}}, \bibinfo {author} {\bibfnamefont {T.}~\bibnamefont {Conte}}, \bibinfo {author} {\bibfnamefont {A.}~\bibnamefont {Adams}}, \bibinfo {author} {\bibfnamefont {A.}~\bibnamefont {Dalvi}}, \bibinfo {author} {\bibfnamefont {C.}~\bibnamefont {Kang}},\ and\ \bibinfo {author} {\bibfnamefont {J.}~\bibnamefont {Viszlai}},\ }\href {https://doi.org/10.48550/arXiv.2403.08780} {\bibinfo {title} {5 year update to the next steps in quantum computing}} (\bibinfo {year} {2024}),\ \Eprint {https://arxiv.org/abs/2403.08780} {arXiv:2403.08780 [cs]} \BibitemShut {NoStop}%
\bibitem [{\citenamefont {Flammia}\ and\ \citenamefont {Liu}(2011)}]{flammia2011direct}%
  \BibitemOpen
  \bibfield  {author} {\bibinfo {author} {\bibfnamefont {S.~T.}\ \bibnamefont {Flammia}}\ and\ \bibinfo {author} {\bibfnamefont {Y.-K.}\ \bibnamefont {Liu}},\ }\bibfield  {title} {\bibinfo {title} {Direct fidelity estimation from few pauli measurements},\ }\href {https://doi.org/10.1103/PhysRevLett.106.230501} {\bibfield  {journal} {\bibinfo  {journal} {Physical Review Letters}\ }\textbf {\bibinfo {volume} {106}},\ \bibinfo {pages} {230501} (\bibinfo {year} {2011})}\BibitemShut {NoStop}%
\bibitem [{\citenamefont {{da Silva}}\ \emph {et~al.}(2011)\citenamefont {{da Silva}}, \citenamefont {{Landon-Cardinal}},\ and\ \citenamefont {Poulin}}]{dasilva2011practical}%
  \BibitemOpen
  \bibfield  {author} {\bibinfo {author} {\bibfnamefont {M.~P.}\ \bibnamefont {{da Silva}}}, \bibinfo {author} {\bibfnamefont {O.}~\bibnamefont {{Landon-Cardinal}}},\ and\ \bibinfo {author} {\bibfnamefont {D.}~\bibnamefont {Poulin}},\ }\bibfield  {title} {\bibinfo {title} {Practical characterization of quantum devices without tomography},\ }\href {https://doi.org/10.1103/PhysRevLett.107.210404} {\bibfield  {journal} {\bibinfo  {journal} {Physical Review Letters}\ }\textbf {\bibinfo {volume} {107}},\ \bibinfo {pages} {210404} (\bibinfo {year} {2011})}\BibitemShut {NoStop}%
\bibitem [{\citenamefont {Leone}\ \emph {et~al.}(2023)\citenamefont {Leone}, \citenamefont {Oliviero},\ and\ \citenamefont {Hamma}}]{leone2023nonstabilizerness}%
  \BibitemOpen
  \bibfield  {author} {\bibinfo {author} {\bibfnamefont {L.}~\bibnamefont {Leone}}, \bibinfo {author} {\bibfnamefont {S.~F.~E.}\ \bibnamefont {Oliviero}},\ and\ \bibinfo {author} {\bibfnamefont {A.}~\bibnamefont {Hamma}},\ }\bibfield  {title} {\bibinfo {title} {Nonstabilizerness determining the hardness of direct fidelity estimation},\ }\href {https://doi.org/10.1103/PhysRevA.107.022429} {\bibfield  {journal} {\bibinfo  {journal} {Physical Review A}\ }\textbf {\bibinfo {volume} {107}},\ \bibinfo {pages} {022429} (\bibinfo {year} {2023})}\BibitemShut {NoStop}%
\bibitem [{\citenamefont {Emerson}\ \emph {et~al.}(2007)\citenamefont {Emerson}, \citenamefont {Silva}, \citenamefont {Moussa}, \citenamefont {Ryan}, \citenamefont {Laforest}, \citenamefont {Baugh}, \citenamefont {Cory},\ and\ \citenamefont {Laflamme}}]{emerson2007symmetrized}%
  \BibitemOpen
  \bibfield  {author} {\bibinfo {author} {\bibfnamefont {J.}~\bibnamefont {Emerson}}, \bibinfo {author} {\bibfnamefont {M.}~\bibnamefont {Silva}}, \bibinfo {author} {\bibfnamefont {O.}~\bibnamefont {Moussa}}, \bibinfo {author} {\bibfnamefont {C.}~\bibnamefont {Ryan}}, \bibinfo {author} {\bibfnamefont {M.}~\bibnamefont {Laforest}}, \bibinfo {author} {\bibfnamefont {J.}~\bibnamefont {Baugh}}, \bibinfo {author} {\bibfnamefont {D.~G.}\ \bibnamefont {Cory}},\ and\ \bibinfo {author} {\bibfnamefont {R.}~\bibnamefont {Laflamme}},\ }\bibfield  {title} {\bibinfo {title} {Symmetrized characterization of noisy quantum processes},\ }\href {https://doi.org/10.1126/science.1145699} {\bibfield  {journal} {\bibinfo  {journal} {Science}\ }\textbf {\bibinfo {volume} {317}},\ \bibinfo {pages} {1893} (\bibinfo {year} {2007})}\BibitemShut {NoStop}%
\bibitem [{\citenamefont {Lu}\ \emph {et~al.}(2015)\citenamefont {Lu}, \citenamefont {Li}, \citenamefont {Trottier}, \citenamefont {Li}, \citenamefont {Brodutch}, \citenamefont {Krismanich}, \citenamefont {Ghavami}, \citenamefont {Dmitrienko}, \citenamefont {Long}, \citenamefont {Baugh},\ and\ \citenamefont {Laflamme}}]{lu2015experimental}%
  \BibitemOpen
  \bibfield  {author} {\bibinfo {author} {\bibfnamefont {D.}~\bibnamefont {Lu}}, \bibinfo {author} {\bibfnamefont {H.}~\bibnamefont {Li}}, \bibinfo {author} {\bibfnamefont {D.-A.}\ \bibnamefont {Trottier}}, \bibinfo {author} {\bibfnamefont {J.}~\bibnamefont {Li}}, \bibinfo {author} {\bibfnamefont {A.}~\bibnamefont {Brodutch}}, \bibinfo {author} {\bibfnamefont {A.~P.}\ \bibnamefont {Krismanich}}, \bibinfo {author} {\bibfnamefont {A.}~\bibnamefont {Ghavami}}, \bibinfo {author} {\bibfnamefont {G.~I.}\ \bibnamefont {Dmitrienko}}, \bibinfo {author} {\bibfnamefont {G.}~\bibnamefont {Long}}, \bibinfo {author} {\bibfnamefont {J.}~\bibnamefont {Baugh}},\ and\ \bibinfo {author} {\bibfnamefont {R.}~\bibnamefont {Laflamme}},\ }\bibfield  {title} {\bibinfo {title} {Experimental estimation of average fidelity of a clifford gate on a 7-qubit quantum processor},\ }\href {https://doi.org/10.1103/PhysRevLett.114.140505} {\bibfield  {journal} {\bibinfo  {journal} {Physical Review Letters}\ }\textbf {\bibinfo {volume} {114}},\
  \bibinfo {pages} {140505} (\bibinfo {year} {2015})}\BibitemShut {NoStop}%
\bibitem [{\citenamefont {Helsen}\ \emph {et~al.}(2022)\citenamefont {Helsen}, \citenamefont {Roth}, \citenamefont {Onorati}, \citenamefont {Werner},\ and\ \citenamefont {Eisert}}]{helsen2022general}%
  \BibitemOpen
  \bibfield  {author} {\bibinfo {author} {\bibfnamefont {J.}~\bibnamefont {Helsen}}, \bibinfo {author} {\bibfnamefont {I.}~\bibnamefont {Roth}}, \bibinfo {author} {\bibfnamefont {E.}~\bibnamefont {Onorati}}, \bibinfo {author} {\bibfnamefont {A.}~\bibnamefont {Werner}},\ and\ \bibinfo {author} {\bibfnamefont {J.}~\bibnamefont {Eisert}},\ }\bibfield  {title} {\bibinfo {title} {General framework for randomized benchmarking},\ }\href {https://doi.org/10.1103/PRXQuantum.3.020357} {\bibfield  {journal} {\bibinfo  {journal} {PRX Quantum}\ }\textbf {\bibinfo {volume} {3}},\ \bibinfo {pages} {020357} (\bibinfo {year} {2022})}\BibitemShut {NoStop}%
\bibitem [{\citenamefont {Pallister}\ \emph {et~al.}(2018)\citenamefont {Pallister}, \citenamefont {Linden},\ and\ \citenamefont {Montanaro}}]{pallister2018optimala}%
  \BibitemOpen
  \bibfield  {author} {\bibinfo {author} {\bibfnamefont {S.}~\bibnamefont {Pallister}}, \bibinfo {author} {\bibfnamefont {N.}~\bibnamefont {Linden}},\ and\ \bibinfo {author} {\bibfnamefont {A.}~\bibnamefont {Montanaro}},\ }\bibfield  {title} {\bibinfo {title} {Optimal verification of entangled states with local measurements},\ }\href {https://doi.org/10.1103/PhysRevLett.120.170502} {\bibfield  {journal} {\bibinfo  {journal} {Physical Review Letters}\ }\textbf {\bibinfo {volume} {120}},\ \bibinfo {pages} {170502} (\bibinfo {year} {2018})}\BibitemShut {NoStop}%
\bibitem [{\citenamefont {Wang}\ and\ \citenamefont {Hayashi}(2019)}]{wang2019optimala}%
  \BibitemOpen
  \bibfield  {author} {\bibinfo {author} {\bibfnamefont {K.}~\bibnamefont {Wang}}\ and\ \bibinfo {author} {\bibfnamefont {M.}~\bibnamefont {Hayashi}},\ }\bibfield  {title} {\bibinfo {title} {Optimal verification of two-qubit pure states},\ }\href {https://doi.org/10.1103/PhysRevA.100.032315} {\bibfield  {journal} {\bibinfo  {journal} {Physical Review A}\ }\textbf {\bibinfo {volume} {100}},\ \bibinfo {pages} {032315} (\bibinfo {year} {2019})}\BibitemShut {NoStop}%
\bibitem [{\citenamefont {Zheng}\ \emph {et~al.}(2024{\natexlab{a}})\citenamefont {Zheng}, \citenamefont {Yu}, \citenamefont {Zhang}, \citenamefont {Xu},\ and\ \citenamefont {Wang}}]{zheng2024efficient}%
  \BibitemOpen
  \bibfield  {author} {\bibinfo {author} {\bibfnamefont {C.}~\bibnamefont {Zheng}}, \bibinfo {author} {\bibfnamefont {X.}~\bibnamefont {Yu}}, \bibinfo {author} {\bibfnamefont {Z.}~\bibnamefont {Zhang}}, \bibinfo {author} {\bibfnamefont {P.}~\bibnamefont {Xu}},\ and\ \bibinfo {author} {\bibfnamefont {K.}~\bibnamefont {Wang}},\ }\href {https://doi.org/10.48550/arXiv.2409.19699} {\bibinfo {title} {Efficient verification of stabilizer code subspaces with local measurements}} (\bibinfo {year} {2024}{\natexlab{a}}),\ \Eprint {https://arxiv.org/abs/2409.19699} {arXiv:2409.19699} \BibitemShut {NoStop}%
\bibitem [{\citenamefont {Chen}\ \emph {et~al.}(2025)\citenamefont {Chen}, \citenamefont {Xie}, \citenamefont {Xu},\ and\ \citenamefont {Wang}}]{chen2025quantum}%
  \BibitemOpen
  \bibfield  {author} {\bibinfo {author} {\bibfnamefont {S.}~\bibnamefont {Chen}}, \bibinfo {author} {\bibfnamefont {W.}~\bibnamefont {Xie}}, \bibinfo {author} {\bibfnamefont {P.}~\bibnamefont {Xu}},\ and\ \bibinfo {author} {\bibfnamefont {K.}~\bibnamefont {Wang}},\ }\bibfield  {title} {\bibinfo {title} {Quantum memory assisted entangled state verification with local measurements},\ }\href {https://doi.org/10.1103/PhysRevResearch.7.013003} {\bibfield  {journal} {\bibinfo  {journal} {Physical Review Research}\ }\textbf {\bibinfo {volume} {7}},\ \bibinfo {pages} {013003} (\bibinfo {year} {2025})}\BibitemShut {NoStop}%
\bibitem [{\citenamefont {Elben}\ \emph {et~al.}(2020)\citenamefont {Elben}, \citenamefont {Vermersch}, \citenamefont {{van Bijnen}}, \citenamefont {Kokail}, \citenamefont {Brydges}, \citenamefont {Maier}, \citenamefont {Joshi}, \citenamefont {Blatt}, \citenamefont {Roos},\ and\ \citenamefont {Zoller}}]{elben2020crossplatform}%
  \BibitemOpen
  \bibfield  {author} {\bibinfo {author} {\bibfnamefont {A.}~\bibnamefont {Elben}}, \bibinfo {author} {\bibfnamefont {B.}~\bibnamefont {Vermersch}}, \bibinfo {author} {\bibfnamefont {R.}~\bibnamefont {{van Bijnen}}}, \bibinfo {author} {\bibfnamefont {C.}~\bibnamefont {Kokail}}, \bibinfo {author} {\bibfnamefont {T.}~\bibnamefont {Brydges}}, \bibinfo {author} {\bibfnamefont {C.}~\bibnamefont {Maier}}, \bibinfo {author} {\bibfnamefont {M.~K.}\ \bibnamefont {Joshi}}, \bibinfo {author} {\bibfnamefont {R.}~\bibnamefont {Blatt}}, \bibinfo {author} {\bibfnamefont {C.~F.}\ \bibnamefont {Roos}},\ and\ \bibinfo {author} {\bibfnamefont {P.}~\bibnamefont {Zoller}},\ }\bibfield  {title} {\bibinfo {title} {Cross-platform verification of intermediate scale quantum devices},\ }\href {https://doi.org/10.1103/PhysRevLett.124.010504} {\bibfield  {journal} {\bibinfo  {journal} {Physical Review Letters}\ }\textbf {\bibinfo {volume} {124}},\ \bibinfo {pages} {010504} (\bibinfo {year} {2020})}\BibitemShut {NoStop}%
\bibitem [{\citenamefont {Zhu}\ \emph {et~al.}(2022)\citenamefont {Zhu}, \citenamefont {Cian}, \citenamefont {Noel}, \citenamefont {Risinger}, \citenamefont {Biswas}, \citenamefont {Egan}, \citenamefont {Zhu}, \citenamefont {Green}, \citenamefont {Alderete}, \citenamefont {Nguyen}, \citenamefont {Wang}, \citenamefont {Maksymov}, \citenamefont {Nam}, \citenamefont {Cetina}, \citenamefont {Linke}, \citenamefont {Hafezi},\ and\ \citenamefont {Monroe}}]{zhu2022crossplatforma}%
  \BibitemOpen
  \bibfield  {author} {\bibinfo {author} {\bibfnamefont {D.}~\bibnamefont {Zhu}}, \bibinfo {author} {\bibfnamefont {Z.-P.}\ \bibnamefont {Cian}}, \bibinfo {author} {\bibfnamefont {C.}~\bibnamefont {Noel}}, \bibinfo {author} {\bibfnamefont {A.}~\bibnamefont {Risinger}}, \bibinfo {author} {\bibfnamefont {D.}~\bibnamefont {Biswas}}, \bibinfo {author} {\bibfnamefont {L.}~\bibnamefont {Egan}}, \bibinfo {author} {\bibfnamefont {Y.}~\bibnamefont {Zhu}}, \bibinfo {author} {\bibfnamefont {A.~M.}\ \bibnamefont {Green}}, \bibinfo {author} {\bibfnamefont {C.~H.}\ \bibnamefont {Alderete}}, \bibinfo {author} {\bibfnamefont {N.~H.}\ \bibnamefont {Nguyen}}, \bibinfo {author} {\bibfnamefont {Q.}~\bibnamefont {Wang}}, \bibinfo {author} {\bibfnamefont {A.}~\bibnamefont {Maksymov}}, \bibinfo {author} {\bibfnamefont {Y.}~\bibnamefont {Nam}}, \bibinfo {author} {\bibfnamefont {M.}~\bibnamefont {Cetina}}, \bibinfo {author} {\bibfnamefont {N.~M.}\ \bibnamefont {Linke}}, \bibinfo {author} {\bibfnamefont {M.}~\bibnamefont {Hafezi}},\
  and\ \bibinfo {author} {\bibfnamefont {C.}~\bibnamefont {Monroe}},\ }\bibfield  {title} {\bibinfo {title} {Cross-platform comparison of arbitrary quantum computations},\ }\href {https://doi.org/10.1038/s41467-022-34279-5} {\bibfield  {journal} {\bibinfo  {journal} {Nature Communications}\ }\textbf {\bibinfo {volume} {13}},\ \bibinfo {pages} {6620} (\bibinfo {year} {2022})},\ \Eprint {https://arxiv.org/abs/2107.11387} {arXiv:2107.11387 [quant-ph]} \BibitemShut {NoStop}%
\bibitem [{\citenamefont {Kn{\"o}rzer}\ \emph {et~al.}(2023)\citenamefont {Kn{\"o}rzer}, \citenamefont {Malz},\ and\ \citenamefont {Cirac}}]{knorzer2023crossplatform}%
  \BibitemOpen
  \bibfield  {author} {\bibinfo {author} {\bibfnamefont {J.}~\bibnamefont {Kn{\"o}rzer}}, \bibinfo {author} {\bibfnamefont {D.}~\bibnamefont {Malz}},\ and\ \bibinfo {author} {\bibfnamefont {J.~I.}\ \bibnamefont {Cirac}},\ }\bibfield  {title} {\bibinfo {title} {Cross-platform verification in quantum networks},\ }\href {https://doi.org/10.1103/PhysRevA.107.062424} {\bibfield  {journal} {\bibinfo  {journal} {Physical Review A}\ }\textbf {\bibinfo {volume} {107}},\ \bibinfo {pages} {062424} (\bibinfo {year} {2023})}\BibitemShut {NoStop}%
\bibitem [{\citenamefont {Zheng}\ \emph {et~al.}(2024{\natexlab{b}})\citenamefont {Zheng}, \citenamefont {Yu},\ and\ \citenamefont {Wang}}]{zheng2024crossplatform}%
  \BibitemOpen
  \bibfield  {author} {\bibinfo {author} {\bibfnamefont {C.}~\bibnamefont {Zheng}}, \bibinfo {author} {\bibfnamefont {X.}~\bibnamefont {Yu}},\ and\ \bibinfo {author} {\bibfnamefont {K.}~\bibnamefont {Wang}},\ }\bibfield  {title} {\bibinfo {title} {Cross-platform comparison of arbitrary quantum processes},\ }\href {https://doi.org/10.1038/s41534-023-00797-3} {\bibfield  {journal} {\bibinfo  {journal} {npj Quantum Information}\ }\textbf {\bibinfo {volume} {10}},\ \bibinfo {pages} {1} (\bibinfo {year} {2024}{\natexlab{b}})}\BibitemShut {NoStop}%
\bibitem [{\citenamefont {Hinsche}\ \emph {et~al.}(2025)\citenamefont {Hinsche}, \citenamefont {Ioannou}, \citenamefont {Jerbi}, \citenamefont {Leone}, \citenamefont {Eisert},\ and\ \citenamefont {Carrasco}}]{hinsche2025efficient}%
  \BibitemOpen
  \bibfield  {author} {\bibinfo {author} {\bibfnamefont {M.}~\bibnamefont {Hinsche}}, \bibinfo {author} {\bibfnamefont {M.}~\bibnamefont {Ioannou}}, \bibinfo {author} {\bibfnamefont {S.}~\bibnamefont {Jerbi}}, \bibinfo {author} {\bibfnamefont {L.}~\bibnamefont {Leone}}, \bibinfo {author} {\bibfnamefont {J.}~\bibnamefont {Eisert}},\ and\ \bibinfo {author} {\bibfnamefont {J.}~\bibnamefont {Carrasco}},\ }\bibfield  {title} {\bibinfo {title} {Efficient distributed inner-product estimation via pauli sampling},\ }\href {https://doi.org/10.1103/g53f-z8cr} {\bibfield  {journal} {\bibinfo  {journal} {PRX Quantum}\ }\textbf {\bibinfo {volume} {6}},\ \bibinfo {pages} {030354} (\bibinfo {year} {2025})}\BibitemShut {NoStop}%
\bibitem [{\citenamefont {Denzler}\ \emph {et~al.}(2025)\citenamefont {Denzler}, \citenamefont {Varona}, \citenamefont {Guaita},\ and\ \citenamefont {Carrasco}}]{denzler2025highlyentangled}%
  \BibitemOpen
  \bibfield  {author} {\bibinfo {author} {\bibfnamefont {J.}~\bibnamefont {Denzler}}, \bibinfo {author} {\bibfnamefont {S.}~\bibnamefont {Varona}}, \bibinfo {author} {\bibfnamefont {T.}~\bibnamefont {Guaita}},\ and\ \bibinfo {author} {\bibfnamefont {J.}~\bibnamefont {Carrasco}},\ }\href {https://doi.org/10.48550/arXiv.2501.11688} {\bibinfo {title} {Highly-entangled, highly-doped states that are efficiently cross-device verifiable}} (\bibinfo {year} {2025}),\ \Eprint {https://arxiv.org/abs/2501.11688} {arXiv:2501.11688 [quant-ph]} \BibitemShut {NoStop}%
\bibitem [{\citenamefont {Qian}\ \emph {et~al.}(2024)\citenamefont {Qian}, \citenamefont {Du}, \citenamefont {He}, \citenamefont {Hsieh},\ and\ \citenamefont {Tao}}]{qian2024multimodal}%
  \BibitemOpen
  \bibfield  {author} {\bibinfo {author} {\bibfnamefont {Y.}~\bibnamefont {Qian}}, \bibinfo {author} {\bibfnamefont {Y.}~\bibnamefont {Du}}, \bibinfo {author} {\bibfnamefont {Z.}~\bibnamefont {He}}, \bibinfo {author} {\bibfnamefont {M.-H.}\ \bibnamefont {Hsieh}},\ and\ \bibinfo {author} {\bibfnamefont {D.}~\bibnamefont {Tao}},\ }\bibfield  {title} {\bibinfo {title} {Multimodal deep representation learning for quantum cross-platform verification},\ }\href {https://doi.org/10.1103/PhysRevLett.133.130601} {\bibfield  {journal} {\bibinfo  {journal} {Physical Review Letters}\ }\textbf {\bibinfo {volume} {133}},\ \bibinfo {pages} {130601} (\bibinfo {year} {2024})}\BibitemShut {NoStop}%
\bibitem [{\citenamefont {Gong}\ \emph {et~al.}(2024)\citenamefont {Gong}, \citenamefont {Haferkamp}, \citenamefont {Ye},\ and\ \citenamefont {Zhang}}]{gong2024sample}%
  \BibitemOpen
  \bibfield  {author} {\bibinfo {author} {\bibfnamefont {W.}~\bibnamefont {Gong}}, \bibinfo {author} {\bibfnamefont {J.}~\bibnamefont {Haferkamp}}, \bibinfo {author} {\bibfnamefont {Q.}~\bibnamefont {Ye}},\ and\ \bibinfo {author} {\bibfnamefont {Z.}~\bibnamefont {Zhang}},\ }\href@noop {} {\bibinfo {title} {On the sample complexity of purity and inner product estimation}} (\bibinfo {year} {2024}),\ \Eprint {https://arxiv.org/abs/2410.12712} {arXiv:2410.12712} \BibitemShut {NoStop}%
\bibitem [{\citenamefont {Arunachalam}\ and\ \citenamefont {Schatzki}(2024)}]{arunachalam2024distributed}%
  \BibitemOpen
  \bibfield  {author} {\bibinfo {author} {\bibfnamefont {S.}~\bibnamefont {Arunachalam}}\ and\ \bibinfo {author} {\bibfnamefont {L.}~\bibnamefont {Schatzki}},\ }\href@noop {} {\bibinfo {title} {Distributed inner product estimation with limited quantum communication}} (\bibinfo {year} {2024}),\ \Eprint {https://arxiv.org/abs/2410.12684} {arXiv:2410.12684} \BibitemShut {NoStop}%
\bibitem [{\citenamefont {Anshu}\ \emph {et~al.}(2022)\citenamefont {Anshu}, \citenamefont {Landau},\ and\ \citenamefont {Liu}}]{anshu2022distributed}%
  \BibitemOpen
  \bibfield  {author} {\bibinfo {author} {\bibfnamefont {A.}~\bibnamefont {Anshu}}, \bibinfo {author} {\bibfnamefont {Z.}~\bibnamefont {Landau}},\ and\ \bibinfo {author} {\bibfnamefont {Y.}~\bibnamefont {Liu}},\ }\bibfield  {title} {\bibinfo {title} {Distributed quantum inner product estimation},\ }in\ \href {https://doi.org/10.1145/3519935.3519974} {\emph {\bibinfo {booktitle} {Proceedings of the 54th Annual ACM SIGACT Symposium on Theory of Computing}}},\ \bibinfo {series and number} {STOC 2022}\ (\bibinfo {year} {2022})\ pp.\ \bibinfo {pages} {44--51}\BibitemShut {NoStop}%
\bibitem [{\citenamefont {Schuster}\ \emph {et~al.}(2025)\citenamefont {Schuster}, \citenamefont {Haferkamp},\ and\ \citenamefont {Huang}}]{schuster2025random}%
  \BibitemOpen
  \bibfield  {author} {\bibinfo {author} {\bibfnamefont {T.}~\bibnamefont {Schuster}}, \bibinfo {author} {\bibfnamefont {J.}~\bibnamefont {Haferkamp}},\ and\ \bibinfo {author} {\bibfnamefont {H.-Y.}\ \bibnamefont {Huang}},\ }\bibfield  {title} {\bibinfo {title} {Random unitaries in extremely low depth},\ }\href {https://doi.org/10.1126/science.adv8590} {\bibfield  {journal} {\bibinfo  {journal} {Science}\ }\textbf {\bibinfo {volume} {389}},\ \bibinfo {pages} {92} (\bibinfo {year} {2025})}\BibitemShut {NoStop}%
\bibitem [{\citenamefont {Chen}\ and\ \citenamefont {Zhu}(2024)}]{chen2024nonstabilizerness}%
  \BibitemOpen
  \bibfield  {author} {\bibinfo {author} {\bibfnamefont {D.}~\bibnamefont {Chen}}\ and\ \bibinfo {author} {\bibfnamefont {H.}~\bibnamefont {Zhu}},\ }\href@noop {} {\bibinfo {title} {Nonstabilizerness enhances thrifty shadow estimation}} (\bibinfo {year} {2024}),\ \Eprint {https://arxiv.org/abs/2410.23977} {arXiv:2410.23977} \BibitemShut {NoStop}%
\bibitem [{\citenamefont {Bravyi}\ \emph {et~al.}(2020)\citenamefont {Bravyi}, \citenamefont {Gosset}, \citenamefont {K{\"o}nig},\ and\ \citenamefont {Tomamichel}}]{bravyi2020quantum}%
  \BibitemOpen
  \bibfield  {author} {\bibinfo {author} {\bibfnamefont {S.}~\bibnamefont {Bravyi}}, \bibinfo {author} {\bibfnamefont {D.}~\bibnamefont {Gosset}}, \bibinfo {author} {\bibfnamefont {R.}~\bibnamefont {K{\"o}nig}},\ and\ \bibinfo {author} {\bibfnamefont {M.}~\bibnamefont {Tomamichel}},\ }\bibfield  {title} {\bibinfo {title} {Quantum advantage with noisy shallow circuits},\ }\href {https://doi.org/10.1038/s41567-020-0948-z} {\bibfield  {journal} {\bibinfo  {journal} {Nature Physics}\ }\textbf {\bibinfo {volume} {16}},\ \bibinfo {pages} {1040} (\bibinfo {year} {2020})}\BibitemShut {NoStop}%
\bibitem [{\citenamefont {Bravyi}\ \emph {et~al.}(2018)\citenamefont {Bravyi}, \citenamefont {Gosset},\ and\ \citenamefont {K{\"o}nig}}]{bravyi2018quantum}%
  \BibitemOpen
  \bibfield  {author} {\bibinfo {author} {\bibfnamefont {S.}~\bibnamefont {Bravyi}}, \bibinfo {author} {\bibfnamefont {D.}~\bibnamefont {Gosset}},\ and\ \bibinfo {author} {\bibfnamefont {R.}~\bibnamefont {K{\"o}nig}},\ }\bibfield  {title} {\bibinfo {title} {Quantum advantage with shallow circuits},\ }\href {https://doi.org/10.1126/science.aar3106} {\bibfield  {journal} {\bibinfo  {journal} {Science}\ }\textbf {\bibinfo {volume} {362}},\ \bibinfo {pages} {308} (\bibinfo {year} {2018})}\BibitemShut {NoStop}%
\bibitem [{\citenamefont {Huang}\ \emph {et~al.}(2024)\citenamefont {Huang}, \citenamefont {Liu}, \citenamefont {Broughton}, \citenamefont {Kim}, \citenamefont {Anshu}, \citenamefont {Landau},\ and\ \citenamefont {McClean}}]{huang2024learning}%
  \BibitemOpen
  \bibfield  {author} {\bibinfo {author} {\bibfnamefont {H.-Y.}\ \bibnamefont {Huang}}, \bibinfo {author} {\bibfnamefont {Y.}~\bibnamefont {Liu}}, \bibinfo {author} {\bibfnamefont {M.}~\bibnamefont {Broughton}}, \bibinfo {author} {\bibfnamefont {I.}~\bibnamefont {Kim}}, \bibinfo {author} {\bibfnamefont {A.}~\bibnamefont {Anshu}}, \bibinfo {author} {\bibfnamefont {Z.}~\bibnamefont {Landau}},\ and\ \bibinfo {author} {\bibfnamefont {J.~R.}\ \bibnamefont {McClean}},\ }\bibfield  {title} {\bibinfo {title} {Learning shallow quantum circuits},\ }in\ \href {https://doi.org/10.1145/3618260.3649722} {\emph {\bibinfo {booktitle} {Proceedings of the 56th Annual ACM Symposium on Theory of Computing}}}\ (\bibinfo  {publisher} {ACM},\ \bibinfo {address} {Vancouver BC Canada},\ \bibinfo {year} {2024})\ pp.\ \bibinfo {pages} {1343--1351}\BibitemShut {NoStop}%
\bibitem [{\citenamefont {Malz}\ \emph {et~al.}(2024)\citenamefont {Malz}, \citenamefont {Styliaris}, \citenamefont {Wei},\ and\ \citenamefont {Cirac}}]{malz2024preparation}%
  \BibitemOpen
  \bibfield  {author} {\bibinfo {author} {\bibfnamefont {D.}~\bibnamefont {Malz}}, \bibinfo {author} {\bibfnamefont {G.}~\bibnamefont {Styliaris}}, \bibinfo {author} {\bibfnamefont {Z.-Y.}\ \bibnamefont {Wei}},\ and\ \bibinfo {author} {\bibfnamefont {J.~I.}\ \bibnamefont {Cirac}},\ }\bibfield  {title} {\bibinfo {title} {Preparation of matrix product states with log-depth quantum circuits},\ }\href {https://doi.org/10.1103/PhysRevLett.132.040404} {\bibfield  {journal} {\bibinfo  {journal} {Physical Review Letters}\ }\textbf {\bibinfo {volume} {132}},\ \bibinfo {pages} {040404} (\bibinfo {year} {2024})}\BibitemShut {NoStop}%
\bibitem [{\citenamefont {Landau}\ and\ \citenamefont {Liu}(2024)}]{landau2024learning}%
  \BibitemOpen
  \bibfield  {author} {\bibinfo {author} {\bibfnamefont {Z.}~\bibnamefont {Landau}}\ and\ \bibinfo {author} {\bibfnamefont {Y.}~\bibnamefont {Liu}},\ }\href {https://doi.org/10.48550/arXiv.2410.23618} {\bibinfo {title} {Learning quantum states prepared by shallow circuits in polynomial time}} (\bibinfo {year} {2024}),\ \Eprint {https://arxiv.org/abs/2410.23618} {arXiv:2410.23618 [quant-ph]} \BibitemShut {NoStop}%
\bibitem [{\citenamefont {Yang}(2025)}]{yang2025compression}%
  \BibitemOpen
  \bibfield  {author} {\bibinfo {author} {\bibfnamefont {Y.}~\bibnamefont {Yang}},\ }\bibfield  {title} {\bibinfo {title} {Compression of quantum shallow-circuit states},\ }\href {https://doi.org/10.1103/PhysRevLett.134.010603} {\bibfield  {journal} {\bibinfo  {journal} {Physical Review Letters}\ }\textbf {\bibinfo {volume} {134}},\ \bibinfo {pages} {010603} (\bibinfo {year} {2025})}\BibitemShut {NoStop}%
\bibitem [{\citenamefont {Bertoni}\ \emph {et~al.}(2024)\citenamefont {Bertoni}, \citenamefont {Haferkamp}, \citenamefont {Hinsche}, \citenamefont {Ioannou}, \citenamefont {Eisert},\ and\ \citenamefont {Pashayan}}]{bertoni2024shallow}%
  \BibitemOpen
  \bibfield  {author} {\bibinfo {author} {\bibfnamefont {C.}~\bibnamefont {Bertoni}}, \bibinfo {author} {\bibfnamefont {J.}~\bibnamefont {Haferkamp}}, \bibinfo {author} {\bibfnamefont {M.}~\bibnamefont {Hinsche}}, \bibinfo {author} {\bibfnamefont {M.}~\bibnamefont {Ioannou}}, \bibinfo {author} {\bibfnamefont {J.}~\bibnamefont {Eisert}},\ and\ \bibinfo {author} {\bibfnamefont {H.}~\bibnamefont {Pashayan}},\ }\bibfield  {title} {\bibinfo {title} {Shallow shadows: Expectation estimation using low-depth random clifford circuits},\ }\href {https://doi.org/10.1103/PhysRevLett.133.020602} {\bibfield  {journal} {\bibinfo  {journal} {Physical Review Letters}\ }\textbf {\bibinfo {volume} {133}},\ \bibinfo {pages} {020602} (\bibinfo {year} {2024})}\BibitemShut {NoStop}%
\bibitem [{\citenamefont {Ippoliti}\ \emph {et~al.}(2023)\citenamefont {Ippoliti}, \citenamefont {Li}, \citenamefont {Rakovszky},\ and\ \citenamefont {Khemani}}]{ippoliti2023operator}%
  \BibitemOpen
  \bibfield  {author} {\bibinfo {author} {\bibfnamefont {M.}~\bibnamefont {Ippoliti}}, \bibinfo {author} {\bibfnamefont {Y.}~\bibnamefont {Li}}, \bibinfo {author} {\bibfnamefont {T.}~\bibnamefont {Rakovszky}},\ and\ \bibinfo {author} {\bibfnamefont {V.}~\bibnamefont {Khemani}},\ }\bibfield  {title} {\bibinfo {title} {Operator relaxation and the optimal depth of classical shadows},\ }\href {https://doi.org/10.1103/PhysRevLett.130.230403} {\bibfield  {journal} {\bibinfo  {journal} {Physical Review Letters}\ }\textbf {\bibinfo {volume} {130}},\ \bibinfo {pages} {230403} (\bibinfo {year} {2023})}\BibitemShut {NoStop}%
\bibitem [{\citenamefont {Akhtar}\ \emph {et~al.}(2023)\citenamefont {Akhtar}, \citenamefont {Hu},\ and\ \citenamefont {You}}]{akhtar2023scalable}%
  \BibitemOpen
  \bibfield  {author} {\bibinfo {author} {\bibfnamefont {A.~A.}\ \bibnamefont {Akhtar}}, \bibinfo {author} {\bibfnamefont {H.-Y.}\ \bibnamefont {Hu}},\ and\ \bibinfo {author} {\bibfnamefont {Y.-Z.}\ \bibnamefont {You}},\ }\bibfield  {title} {\bibinfo {title} {Scalable and flexible classical shadow tomography with tensor networks},\ }\href {https://doi.org/10.22331/q-2023-06-01-1026} {\bibfield  {journal} {\bibinfo  {journal} {Quantum}\ }\textbf {\bibinfo {volume} {7}},\ \bibinfo {pages} {1026} (\bibinfo {year} {2023})},\ \Eprint {https://arxiv.org/abs/2209.02093} {arXiv:2209.02093 [cond-mat, physics:quant-ph]} \BibitemShut {NoStop}%
\bibitem [{\citenamefont {Hu}\ \emph {et~al.}(2023)\citenamefont {Hu}, \citenamefont {Choi},\ and\ \citenamefont {You}}]{hu2023classical}%
  \BibitemOpen
  \bibfield  {author} {\bibinfo {author} {\bibfnamefont {H.-Y.}\ \bibnamefont {Hu}}, \bibinfo {author} {\bibfnamefont {S.}~\bibnamefont {Choi}},\ and\ \bibinfo {author} {\bibfnamefont {Y.-Z.}\ \bibnamefont {You}},\ }\bibfield  {title} {\bibinfo {title} {Classical shadow tomography with locally scrambled quantum dynamics},\ }\href {https://doi.org/10.1103/PhysRevResearch.5.023027} {\bibfield  {journal} {\bibinfo  {journal} {Physical Review Research}\ }\textbf {\bibinfo {volume} {5}},\ \bibinfo {pages} {023027} (\bibinfo {year} {2023})}\BibitemShut {NoStop}%
\bibitem [{\citenamefont {Bu}\ \emph {et~al.}(2024)\citenamefont {Bu}, \citenamefont {Koh}, \citenamefont {Garcia},\ and\ \citenamefont {Jaffe}}]{bu2024classical}%
  \BibitemOpen
  \bibfield  {author} {\bibinfo {author} {\bibfnamefont {K.}~\bibnamefont {Bu}}, \bibinfo {author} {\bibfnamefont {D.~E.}\ \bibnamefont {Koh}}, \bibinfo {author} {\bibfnamefont {R.~J.}\ \bibnamefont {Garcia}},\ and\ \bibinfo {author} {\bibfnamefont {A.}~\bibnamefont {Jaffe}},\ }\bibfield  {title} {\bibinfo {title} {Classical shadows with pauli-invariant unitary ensembles},\ }\href {https://doi.org/10.1038/s41534-023-00801-w} {\bibfield  {journal} {\bibinfo  {journal} {npj Quantum Information}\ }\textbf {\bibinfo {volume} {10}},\ \bibinfo {pages} {1} (\bibinfo {year} {2024})}\BibitemShut {NoStop}%
\bibitem [{\citenamefont {Hu}\ \emph {et~al.}(2025)\citenamefont {Hu}, \citenamefont {Gu}, \citenamefont {Majumder}, \citenamefont {Ren}, \citenamefont {Zhang}, \citenamefont {Wang}, \citenamefont {You}, \citenamefont {Minev}, \citenamefont {Yelin},\ and\ \citenamefont {Seif}}]{hu2025demonstration}%
  \BibitemOpen
  \bibfield  {author} {\bibinfo {author} {\bibfnamefont {H.-Y.}\ \bibnamefont {Hu}}, \bibinfo {author} {\bibfnamefont {A.}~\bibnamefont {Gu}}, \bibinfo {author} {\bibfnamefont {S.}~\bibnamefont {Majumder}}, \bibinfo {author} {\bibfnamefont {H.}~\bibnamefont {Ren}}, \bibinfo {author} {\bibfnamefont {Y.}~\bibnamefont {Zhang}}, \bibinfo {author} {\bibfnamefont {D.~S.}\ \bibnamefont {Wang}}, \bibinfo {author} {\bibfnamefont {Y.-Z.}\ \bibnamefont {You}}, \bibinfo {author} {\bibfnamefont {Z.}~\bibnamefont {Minev}}, \bibinfo {author} {\bibfnamefont {S.~F.}\ \bibnamefont {Yelin}},\ and\ \bibinfo {author} {\bibfnamefont {A.}~\bibnamefont {Seif}},\ }\bibfield  {title} {\bibinfo {title} {Demonstration of robust and efficient quantum property learning with shallow shadows},\ }\href {https://doi.org/10.1038/s41467-025-57349-w} {\bibfield  {journal} {\bibinfo  {journal} {Nature Communications}\ }\textbf {\bibinfo {volume} {16}},\ \bibinfo {pages} {2943} (\bibinfo {year} {2025})}\BibitemShut {NoStop}%
\bibitem [{\citenamefont {Huang}\ \emph {et~al.}(2020)\citenamefont {Huang}, \citenamefont {Kueng},\ and\ \citenamefont {Preskill}}]{huang2020predicting}%
  \BibitemOpen
  \bibfield  {author} {\bibinfo {author} {\bibfnamefont {H.-Y.}\ \bibnamefont {Huang}}, \bibinfo {author} {\bibfnamefont {R.}~\bibnamefont {Kueng}},\ and\ \bibinfo {author} {\bibfnamefont {J.}~\bibnamefont {Preskill}},\ }\bibfield  {title} {\bibinfo {title} {Predicting many properties of a quantum system from very few measurements},\ }\href {https://doi.org/10.1038/s41567-020-0932-7} {\bibfield  {journal} {\bibinfo  {journal} {Nature Physics}\ }\textbf {\bibinfo {volume} {16}},\ \bibinfo {pages} {1050} (\bibinfo {year} {2020})}\BibitemShut {NoStop}%
\bibitem [{\citenamefont {Rozon}\ \emph {et~al.}(2024)\citenamefont {Rozon}, \citenamefont {Bao},\ and\ \citenamefont {Agarwal}}]{rozon2024optimala}%
  \BibitemOpen
  \bibfield  {author} {\bibinfo {author} {\bibfnamefont {P.-G.}\ \bibnamefont {Rozon}}, \bibinfo {author} {\bibfnamefont {N.}~\bibnamefont {Bao}},\ and\ \bibinfo {author} {\bibfnamefont {K.}~\bibnamefont {Agarwal}},\ }\bibfield  {title} {\bibinfo {title} {Optimal twirling depth for classical shadows in the presence of noise},\ }\href {https://doi.org/10.1103/PhysRevLett.133.130803} {\bibfield  {journal} {\bibinfo  {journal} {Physical Review Letters}\ }\textbf {\bibinfo {volume} {133}},\ \bibinfo {pages} {130803} (\bibinfo {year} {2024})}\BibitemShut {NoStop}%
\bibitem [{\citenamefont {Farias}\ \emph {et~al.}(2025)\citenamefont {Farias}, \citenamefont {Peddinti}, \citenamefont {Roth},\ and\ \citenamefont {Aolita}}]{farias2025robust}%
  \BibitemOpen
  \bibfield  {author} {\bibinfo {author} {\bibfnamefont {R.~M.~S.}\ \bibnamefont {Farias}}, \bibinfo {author} {\bibfnamefont {R.~D.}\ \bibnamefont {Peddinti}}, \bibinfo {author} {\bibfnamefont {I.}~\bibnamefont {Roth}},\ and\ \bibinfo {author} {\bibfnamefont {L.}~\bibnamefont {Aolita}},\ }\bibfield  {title} {\bibinfo {title} {Robust ultra-shallow shadows},\ }\href {https://doi.org/10.1088/2058-9565/adc14f} {\bibfield  {journal} {\bibinfo  {journal} {Quantum Science and Technology}\ }\textbf {\bibinfo {volume} {10}},\ \bibinfo {pages} {025044} (\bibinfo {year} {2025})}\BibitemShut {NoStop}%
\bibitem [{\citenamefont {Arienzo}\ \emph {et~al.}(2023)\citenamefont {Arienzo}, \citenamefont {Heinrich}, \citenamefont {Roth},\ and\ \citenamefont {Kliesch}}]{arienzo2023closedforma}%
  \BibitemOpen
  \bibfield  {author} {\bibinfo {author} {\bibfnamefont {M.}~\bibnamefont {Arienzo}}, \bibinfo {author} {\bibfnamefont {M.}~\bibnamefont {Heinrich}}, \bibinfo {author} {\bibfnamefont {I.}~\bibnamefont {Roth}},\ and\ \bibinfo {author} {\bibfnamefont {M.}~\bibnamefont {Kliesch}},\ }\href@noop {} {\bibinfo {title} {Closed-form analytic expressions for shadow estimation with brickwork circuits}} (\bibinfo {year} {2023}),\ \Eprint {https://arxiv.org/abs/2211.09835} {arXiv:2211.09835} \BibitemShut {NoStop}%
\bibitem [{\citenamefont {Elben}\ \emph {et~al.}(2019)\citenamefont {Elben}, \citenamefont {Vermersch}, \citenamefont {Roos},\ and\ \citenamefont {Zoller}}]{elben2019statistical}%
  \BibitemOpen
  \bibfield  {author} {\bibinfo {author} {\bibfnamefont {A.}~\bibnamefont {Elben}}, \bibinfo {author} {\bibfnamefont {B.}~\bibnamefont {Vermersch}}, \bibinfo {author} {\bibfnamefont {C.~F.}\ \bibnamefont {Roos}},\ and\ \bibinfo {author} {\bibfnamefont {P.}~\bibnamefont {Zoller}},\ }\bibfield  {title} {\bibinfo {title} {Statistical correlations between locally randomized measurements: A toolbox for probing entanglement in many-body quantum states},\ }\href {https://doi.org/10.1103/PhysRevA.99.052323} {\bibfield  {journal} {\bibinfo  {journal} {Physical Review A}\ }\textbf {\bibinfo {volume} {99}},\ \bibinfo {pages} {052323} (\bibinfo {year} {2019})}\BibitemShut {NoStop}%
\bibitem [{\citenamefont {Jeon}\ and\ \citenamefont {Oh}(2025)}]{jeon2025query}%
  \BibitemOpen
  \bibfield  {author} {\bibinfo {author} {\bibfnamefont {S.}~\bibnamefont {Jeon}}\ and\ \bibinfo {author} {\bibfnamefont {C.}~\bibnamefont {Oh}},\ }\href {https://doi.org/10.48550/arXiv.2507.17254} {\bibinfo {title} {On the query complexity of unitary channel certification}} (\bibinfo {year} {2025})\BibitemShut {NoStop}%
\bibitem [{\citenamefont {Huang}\ \emph {et~al.}(2021)\citenamefont {Huang}, \citenamefont {Kueng},\ and\ \citenamefont {Preskill}}]{huang2021informationtheoretic}%
  \BibitemOpen
  \bibfield  {author} {\bibinfo {author} {\bibfnamefont {H.-Y.}\ \bibnamefont {Huang}}, \bibinfo {author} {\bibfnamefont {R.}~\bibnamefont {Kueng}},\ and\ \bibinfo {author} {\bibfnamefont {J.}~\bibnamefont {Preskill}},\ }\bibfield  {title} {\bibinfo {title} {Information-theoretic bounds on quantum advantage in machine learning},\ }\href {https://doi.org/10.1103/PhysRevLett.126.190505} {\bibfield  {journal} {\bibinfo  {journal} {Physical Review Letters}\ }\textbf {\bibinfo {volume} {126}},\ \bibinfo {pages} {190505} (\bibinfo {year} {2021})}\BibitemShut {NoStop}%
\bibitem [{\citenamefont {Li}\ \emph {et~al.}(2025)\citenamefont {Li}, \citenamefont {Yi}, \citenamefont {Zhou},\ and\ \citenamefont {Zhu}}]{li2025nearly}%
  \BibitemOpen
  \bibfield  {author} {\bibinfo {author} {\bibfnamefont {Z.}~\bibnamefont {Li}}, \bibinfo {author} {\bibfnamefont {C.}~\bibnamefont {Yi}}, \bibinfo {author} {\bibfnamefont {Y.}~\bibnamefont {Zhou}},\ and\ \bibinfo {author} {\bibfnamefont {H.}~\bibnamefont {Zhu}},\ }\bibfield  {title} {\bibinfo {title} {Nearly query-optimal classical shadow estimation of unitary channels},\ }\href {https://doi.org/10.1103/wccm-zys6} {\bibfield  {journal} {\bibinfo  {journal} {PRX Quantum}\ }\textbf {\bibinfo {volume} {6}},\ \bibinfo {pages} {030366} (\bibinfo {year} {2025})}\BibitemShut {NoStop}%
\bibitem [{\citenamefont {{Hunter-Jones}}(2019)}]{hunter-jones2019unitary}%
  \BibitemOpen
  \bibfield  {author} {\bibinfo {author} {\bibfnamefont {N.}~\bibnamefont {{Hunter-Jones}}},\ }\href {https://doi.org/10.48550/arXiv.1905.12053} {\bibinfo {title} {Unitary designs from statistical mechanics in random quantum circuits}} (\bibinfo {year} {2019})\BibitemShut {NoStop}%
\bibitem [{\citenamefont {Zhu}(2017)}]{zhu2017multiqubit}%
  \BibitemOpen
  \bibfield  {author} {\bibinfo {author} {\bibfnamefont {H.}~\bibnamefont {Zhu}},\ }\bibfield  {title} {\bibinfo {title} {Multiqubit clifford groups are unitary 3-designs},\ }\href {https://doi.org/10.1103/PhysRevA.96.062336} {\bibfield  {journal} {\bibinfo  {journal} {Physical Review A}\ }\textbf {\bibinfo {volume} {96}},\ \bibinfo {pages} {062336} (\bibinfo {year} {2017})}\BibitemShut {NoStop}%
\bibitem [{\citenamefont {Gross}\ \emph {et~al.}(2021)\citenamefont {Gross}, \citenamefont {Nezami},\ and\ \citenamefont {Walter}}]{gross2021schur}%
  \BibitemOpen
  \bibfield  {author} {\bibinfo {author} {\bibfnamefont {D.}~\bibnamefont {Gross}}, \bibinfo {author} {\bibfnamefont {S.}~\bibnamefont {Nezami}},\ and\ \bibinfo {author} {\bibfnamefont {M.}~\bibnamefont {Walter}},\ }\bibfield  {title} {\bibinfo {title} {Schur--weyl duality for the clifford group with applications: Property testing, a robust hudson theorem, and de finetti representations},\ }\href {https://doi.org/10.1007/s00220-021-04118-7} {\bibfield  {journal} {\bibinfo  {journal} {Communications in Mathematical Physics}\ }\textbf {\bibinfo {volume} {385}},\ \bibinfo {pages} {1325} (\bibinfo {year} {2021})}\BibitemShut {NoStop}%
\bibitem [{\citenamefont {Bravyi}\ and\ \citenamefont {Kitaev}(2005)}]{bravyi2005universal}%
  \BibitemOpen
  \bibfield  {author} {\bibinfo {author} {\bibfnamefont {S.}~\bibnamefont {Bravyi}}\ and\ \bibinfo {author} {\bibfnamefont {A.}~\bibnamefont {Kitaev}},\ }\bibfield  {title} {\bibinfo {title} {Universal quantum computation with ideal clifford gates and noisy ancillas},\ }\href {https://doi.org/10.1103/PhysRevA.71.022316} {\bibfield  {journal} {\bibinfo  {journal} {Physical Review A}\ }\textbf {\bibinfo {volume} {71}},\ \bibinfo {pages} {022316} (\bibinfo {year} {2005})}\BibitemShut {NoStop}%
\bibitem [{\citenamefont {Leone}\ \emph {et~al.}(2022)\citenamefont {Leone}, \citenamefont {Oliviero},\ and\ \citenamefont {Hamma}}]{leone2022stabilizer}%
  \BibitemOpen
  \bibfield  {author} {\bibinfo {author} {\bibfnamefont {L.}~\bibnamefont {Leone}}, \bibinfo {author} {\bibfnamefont {S.~F.~E.}\ \bibnamefont {Oliviero}},\ and\ \bibinfo {author} {\bibfnamefont {A.}~\bibnamefont {Hamma}},\ }\bibfield  {title} {\bibinfo {title} {Stabilizer r{\'e}nyi entropy},\ }\href {https://doi.org/10.1103/PhysRevLett.128.050402} {\bibfield  {journal} {\bibinfo  {journal} {Physical Review Letters}\ }\textbf {\bibinfo {volume} {128}},\ \bibinfo {pages} {050402} (\bibinfo {year} {2022})}\BibitemShut {NoStop}%
\bibitem [{\citenamefont {Acharya}\ \emph {et~al.}(2025)\citenamefont {Acharya}, \citenamefont {Dharmavarapu}, \citenamefont {Liu},\ and\ \citenamefont {Yu}}]{acharya2025pauli}%
  \BibitemOpen
  \bibfield  {author} {\bibinfo {author} {\bibfnamefont {J.}~\bibnamefont {Acharya}}, \bibinfo {author} {\bibfnamefont {A.}~\bibnamefont {Dharmavarapu}}, \bibinfo {author} {\bibfnamefont {Y.}~\bibnamefont {Liu}},\ and\ \bibinfo {author} {\bibfnamefont {N.}~\bibnamefont {Yu}},\ }\href {https://doi.org/10.48550/arXiv.2502.18170} {\bibinfo {title} {Pauli measurements are not optimal for single-copy tomography}} (\bibinfo {year} {2025})\BibitemShut {NoStop}%
\bibitem [{\citenamefont {Grewal}\ \emph {et~al.}(2026)\citenamefont {Grewal}, \citenamefont {Gupta}, \citenamefont {He}, \citenamefont {Sen},\ and\ \citenamefont {Singhal}}]{grewal2026pauli}%
  \BibitemOpen
  \bibfield  {author} {\bibinfo {author} {\bibfnamefont {S.}~\bibnamefont {Grewal}}, \bibinfo {author} {\bibfnamefont {M.}~\bibnamefont {Gupta}}, \bibinfo {author} {\bibfnamefont {W.}~\bibnamefont {He}}, \bibinfo {author} {\bibfnamefont {A.}~\bibnamefont {Sen}},\ and\ \bibinfo {author} {\bibfnamefont {M.}~\bibnamefont {Singhal}},\ }\href {https://doi.org/10.48550/arXiv.2601.04444} {\bibinfo {title} {Pauli measurements are near-optimal for pure state tomography}} (\bibinfo {year} {2026})\BibitemShut {NoStop}%
\bibitem [{\citenamefont {Wu}\ \emph {et~al.}(2025)\citenamefont {Wu}, \citenamefont {Xie}, \citenamefont {Mi}, \citenamefont {Wu}, \citenamefont {Guo}, \citenamefont {Huang}, \citenamefont {Huang}, \citenamefont {Sun}, \citenamefont {Zhang}, \citenamefont {Zhang}, \citenamefont {Qiu}, \citenamefont {Linpeng}, \citenamefont {Tao}, \citenamefont {Chu}, \citenamefont {Jiang}, \citenamefont {Liu}, \citenamefont {Niu}, \citenamefont {Zhou}, \citenamefont {Du}, \citenamefont {Ren}, \citenamefont {Zhong}, \citenamefont {Liu},\ and\ \citenamefont {Yu}}]{wu2025state}%
  \BibitemOpen
  \bibfield  {author} {\bibinfo {author} {\bibfnamefont {B.}~\bibnamefont {Wu}}, \bibinfo {author} {\bibfnamefont {C.}~\bibnamefont {Xie}}, \bibinfo {author} {\bibfnamefont {P.}~\bibnamefont {Mi}}, \bibinfo {author} {\bibfnamefont {Z.}~\bibnamefont {Wu}}, \bibinfo {author} {\bibfnamefont {Z.}~\bibnamefont {Guo}}, \bibinfo {author} {\bibfnamefont {P.}~\bibnamefont {Huang}}, \bibinfo {author} {\bibfnamefont {W.}~\bibnamefont {Huang}}, \bibinfo {author} {\bibfnamefont {X.}~\bibnamefont {Sun}}, \bibinfo {author} {\bibfnamefont {J.}~\bibnamefont {Zhang}}, \bibinfo {author} {\bibfnamefont {L.}~\bibnamefont {Zhang}}, \bibinfo {author} {\bibfnamefont {J.}~\bibnamefont {Qiu}}, \bibinfo {author} {\bibfnamefont {X.}~\bibnamefont {Linpeng}}, \bibinfo {author} {\bibfnamefont {Z.}~\bibnamefont {Tao}}, \bibinfo {author} {\bibfnamefont {J.}~\bibnamefont {Chu}}, \bibinfo {author} {\bibfnamefont {J.}~\bibnamefont {Jiang}}, \bibinfo {author} {\bibfnamefont {S.}~\bibnamefont {Liu}}, \bibinfo {author} {\bibfnamefont
  {J.}~\bibnamefont {Niu}}, \bibinfo {author} {\bibfnamefont {Y.}~\bibnamefont {Zhou}}, \bibinfo {author} {\bibfnamefont {Y.}~\bibnamefont {Du}}, \bibinfo {author} {\bibfnamefont {W.}~\bibnamefont {Ren}}, \bibinfo {author} {\bibfnamefont {Y.}~\bibnamefont {Zhong}}, \bibinfo {author} {\bibfnamefont {T.}~\bibnamefont {Liu}},\ and\ \bibinfo {author} {\bibfnamefont {D.}~\bibnamefont {Yu}},\ }\href {https://doi.org/10.48550/arXiv.2506.01657} {\bibinfo {title} {State similarity in modular superconducting quantum processors with classical communications}} (\bibinfo {year} {2025}),\ \Eprint {https://arxiv.org/abs/2506.01657} {arXiv:2506.01657 [quant-ph]} \BibitemShut {NoStop}%
\bibitem [{\citenamefont {Zhou}\ and\ \citenamefont {Liu}(2023)}]{zhou2023performance}%
  \BibitemOpen
  \bibfield  {author} {\bibinfo {author} {\bibfnamefont {Y.}~\bibnamefont {Zhou}}\ and\ \bibinfo {author} {\bibfnamefont {Q.}~\bibnamefont {Liu}},\ }\bibfield  {title} {\bibinfo {title} {Performance analysis of multi-shot shadow estimation},\ }\href {https://doi.org/10.22331/q-2023-06-29-1044} {\bibfield  {journal} {\bibinfo  {journal} {Quantum}\ }\textbf {\bibinfo {volume} {7}},\ \bibinfo {pages} {1044} (\bibinfo {year} {2023})}\BibitemShut {NoStop}%
\end{thebibliography}
\end{document}